\def\bee{\begin{enumerate}}\def\eee{\end{enumerate}}
\def\bei{\begin{itemize}}\def\eei{\end{itemize}}
\newcommand{\nco}{\newcommand}
\def\R{\mathbb{R}}\def\C{\mathbb{C}}
\nco{\red}{\color{red}}
\nco{\blue}{\color{blue}}
\definecolor{ngreen}{cmyk}{.7,0,.3,0}
\nco{\cyan}{\color{cyan}}
\nco{\brown}{\color{Magenta}}
\nco{\magenta}{\color{magenta}}
\nco{\violet}{\color{violet}}
\nco{\olive}{\color{Emerald}}
\nco{\orange}{\color{orange}}
\nco{\redend}{\normalcolor}
\nco{\blueend}{\normalcolor}
\def\inv#1{\frac{1}{#1}}
\def\Tr{{\rm Tr}}
\def\E{\Bbb{E}}
\def\ommit#1{{}}
\def\({\left(}
\def\){\right)}
\def\ie{{\it i.e.,\/}\ }
\definecolor{cb}{rgb}{.8,.5,0}
\nco{\rnc}{\renewcommand}
\rnc{\title}[1]{{\Large\bf\mbox{}\\\medskip#1\bigskip\medskip\\}}
\rnc{\author}[1]{{\large #1\smallskip\\}}
\nco{\address}[1]{{\em #1\medskip\\}}
\def\be{\begin{equation}}\def\ee{\end{equation}}
\def\bea{\begin{eqnarray}}\def\eea{\end{eqnarray}}
\def\bee{\begin{enumerate}}\def\eee{\end{enumerate}}
\def\bei{\begin{itemize}}\def\eei{\end{itemize}}
\def\oh{\frac{1}{2}}
\def\ommit#1{{}}
\def\tX{\tilde X}
\def\X1{\X1}
\def\X1{Y_2}
\def\tX1{\tilde \X1}
\def\tX2{\tilde X_2}
\def\tX{\tilde X}
\def\[{[\![} \def\]{]\!]} 
\def\U{\mathrm{U}}
\def\Ud{U^\dagger}
\def\Wg{\mathrm{Wg}}
\def\Pc{\mathcal{P}}
\def\ah{\widehat{\alpha}}
\def\Ee{\mathbb{E}}
\def\C{\mathbb{C}}
\def\AcN{\mathcal{A}^{(N)}}
\def\McN{\mathcal{M}^{(N)}}
\def\OcN{\mathcal{O}^{(N)}}
\def\Cl{\text{Cl}}
\def\fb{\bar{f}}
\def\gb{\bar{g}}
\def\mb{\overline{m}}
\def\kb{\overline{\kappa}}
\def\Kb{\overline{K}}
\def\Db{\overline{\Delta}}
\def\Zb{\overline{Z}}
\def\Kcgl{K^{\text{CGL}}}
\theoremstyle{plain}
\newtheorem{theorem}{Theorem}
\newtheorem{corollary}{Corollary}
\newtheorem{lemma}{Lemma}
\newtheorem{proposition}{Proposition}
\newtheorem{definition}{Definition}
\theoremstyle{remark}
\newtheorem{remark}{Remark}
\def\comment#1{{\blue #1}}
\def\comment#1{}
\def\adots{\mathinner{\mkern2mu\raise1pt\hbox{.}\mkern3mu\raise4pt\hbox{.}\mkern1mu\raise7pt\hbox{.}}}  
\def\dddots{\mathinner{\mkern2mu\raise9pt\hbox{.}\mkern3mu\raise3pt\hbox{.}\mkern2mu\raise-4pt\hbox{.}}}
\def\?{\,\buildrel ?\over =\,}
\def\moi#1{{#1}}\def\moi#1{}
\def\Ud{U^\dagger}\def\Vd{V^\dagger}
\DeclareSymbolFont{stixsymbols}{LS1}{stixscr}{m}{n}
\DeclareMathSymbol{\kay}{\mathalpha}{stixsymbols}{"6B}
\begin{document}

\begin{titlepage}

\begin{center}
\title{Finite $\boldsymbol{N}$ precursors of the free cumulants}
\medskip
\author{Sylvain Lacroix and Jean-Bernard Zuber}
\address{ \vskip4pt 
 Laboratoire de Physique Th\'eorique et Hautes \'Energies (LPTHE) \\
 Sorbonne Universit\'e \& CNRS, UMR 7589\\ F-75005 Paris, France}~\\[-6pt]
 {\href{mailto:lacroix@lpthe.jussieu.fr}{{\ttfamily lacroix@lpthe.jussieu.fr}} \hspace{20pt} \href{mailto:zuber@lpthe.jussieu.fr}{{\ttfamily zuber@lpthe.jussieu.fr}}}
\bigskip\bigskip
  \begin{abstract}
We study $\U(N)$ invariant polynomials on the space of $N\times N$ matrices first introduced by Capitaine and Casalis, that are precursors of free cumulants in various respects. First, they are polynomials of deterministic matrices, that are not yet evaluated over some probability law, 
contrary to what is usually meant by cumulants. Secondly, they converge towards the algebraic expression of free cumulants in terms of moments as $N\to \infty$, with $1/N^2$ corrections expressed in terms of monotone Hurwitz numbers. Their most crucial property is their additivity with respect to averaging over sums of $\U(N)$ conjugacy orbits, providing a finite $N$ version of the well-known additivity of free cumulants in free probability. Finally, they extend several properties of free cumulants at finite $N$, including a Wick rule for their average over a Gaussian weight and their appearance in various matrix integrals. Building on the additivity property of these precursors, we also define and compute a coproduct describing the behaviour of general invariant polynomials with respect to the addition of $\U(N)$ conjugacy orbits, as well as their expectation values on sums of $\U(N)$-invariant random matrices. In our construction, a central role is played by the so-called HCIZ integral, both for the construction of the precursors and for the derivation of their properties. 

\vskip 1.cm
\noindent {\it Keywords}: Free cumulants, Random Matrix Theory, large N limit, symmetric functions, Hurwitz numbers \\
\noindent \textit{Mathematics Subject Classification:}  
05E05, 46L54, 15B52, 60B20,  60Cxx
\end{abstract}
\end{center}
 \end{titlepage}


\tableofcontents
\newpage


\section{Introduction}

\subsection{Free cumulants and their additivity}

The notion of cumulants plays a key role in probability theory, statistics and statistical mechanics. There exist different variants of these objects, depending on the domain of application. Among them are the \textit{free cumulants}, which find their origin in the counting of non-crossing partitions by Kreweras~\cite{Krw}. 
They were initially introduced  under the name of ``connected planar correlation functions" 
in the study of matrix integrals in the large size limit \cite{BIPZ}. 
They reappeared in the context of
free probability \cite{Voi86} in the work of Speicher \cite{Spe93}, who proved the fundamental result that the free cumulants of two non-commutative random variables are additive if these variables are ``freely independent" (or ``free" in short).

To explain this connection, let us first recall how free cumulants arise in the theory of large size matrices. We define the $n$-th moment of a $N\times N$ matrix $A_N$ by $m_n(A_N):=\frac{1}{N}\Tr(A_N^n)$. The free cumulants $\kappa_n(A_N)$ of this matrix are then related to the moments through the combinatorics of non-crossing set-partitions. More precisely, if $\alpha=(\alpha_1,\dots,\alpha_\ell)\,\vdash\, n$ is a partition of the integer $n$, we let $\text{NC}(\alpha)$ be the number of non-crossing set-partitions of $\lbrace 1,\dots,n\rbrace$ formed by $\ell$ subsets of sizes $\alpha_1,\dots,\alpha_\ell$. The free cumulants are then characterised by the moment-cumulant relation
\begin{equation}\label{eq:MomCumRelIntro}
m_n(A_N) = \sum_{\alpha=(\alpha_1,\dots,\alpha_\ell) \,\vdash\, n} \text{NC}(\alpha)\, \kappa_{\alpha_1}(A_N)\cdots\kappa_{\alpha_\ell}(A_N)\,,
\end{equation}
which can be inverted order by order, expressing $\kappa_n(A_N)$ as a polynomial in the moments $m_k(A_N)$ with $k\leq n$.

The link with free probability arises in the large $N$ limit. Suppose that we are given two sequences of $N\times N$ matrices $(A_N)_{N\geq 1}$ and $(B_N)_{N\geq 1}$, with appropriate assumptions on the existence of their moments $m_n(A_N)$ and $m_n(A_N)$ as $N$ grows large (``$A_N$ and $B_N$ have a limit eigenvalue distribution"). Then, the variables $A_N$ and $U_NB_N\Ud_N$, with $U_N$ chosen randomly in $\U(N)$ following the Haar measure, are asymptotically free when $N\to\infty$ \cite{Voi91}. In expectation values, the additivity of the free cumulants then translates to the following crucial identity for all $n\geq 1$:
\begin{equation}\label{eq:AddFree}
\int_{\U(N)} DU_N\,\kappa_n \bigl(A_N+U_NB_N\Ud_N\bigr) \approx \kappa_n(A_N) + \kappa_n(B_N) \qquad \text{ as } N\to\infty\,,
\end{equation}
where $DU_N$ stands for the Haar measure on $\U(N)$. The question then naturally arises: is there a finite $N$ analogue of the free cumulants $\kappa_n$ that 
satisfies this additivity property? Such ``precursors'' of the free cumulants were defined by Capitaine and Casalis in~\cite{CC06}. In this article, we will propose a new approach to construct and study these precursors based on the HCIZ integral, allowing us to recover various results on their properties and derive new ones, summarised in the next section. We also refer to section \ref{sec:review} for a brief review and discussion of pre-existing works in related directions.\\

A technical point is in order before proceeding, though. In random matrix theory, moments are usually associated with a random variable $A_N\sim\mu_N$, as the expectation values of $\frac{1}{N}\Tr(A_N^n)$, from which the free cumulants are subsequently constructed. These moments and cumulants are then scalar integrated quantities, which are functionals of the probability law $\mu_N$. In this paper we adopt a slightly different standpoint: the moments $m_n$ and free cumulants $\kappa_n$ are defined as polynomials on the space of $N\times N$ matrices. Depending on the context, they can either be evaluated on a purely deterministic matrix $A_N$, producing numbers $m_n(A_N)$ and $\kappa_n(A_N)$, or on a random matrix $A_N\sim\mu_N$, in which case $m_n(A_N)$ and $\kappa_n(A_N)$ should be understood as scalar random variables. In the latter case, one can further consider the expectation values of these variables, which we will denote by $\mathbb{E}_{A_N\sim \mu_N}\bigl(m_n(A_N)\bigr)$ and $\mathbb{E}_{A_N\sim \mu_N}\bigl(\kappa_n(A_N)\bigr)$, hence recovering the more standard objects used in random matrix theory. We refer to the section \ref{sec:Random} for a more detailed discussion of this distinction between polynomials and integrated quantities.


\subsection{Main results and plan of the paper}

In this subsection we summarise the main results of the paper and outline its plan. We postpone details and proofs to the main text and refer to Section \ref{notations} for a complete list of notations and conventions, keeping those to a minimum here. For a given finite $N$, consider the so-called HCIZ integral
\be
Z(A, B\,; z):= 
 \int_{\U(N)} DU \exp \bigl( N z\, \Tr(U A U^\dagger B) \bigr)\,,\ee
where $A$ and $B$ are two $N\times N$ complex matrices (the subscript $N$ on $A, B, U$ will be dropped hereafter). This integral can be expanded in an infinite series, whose coefficients are invariant polynomials in the matrices $A$ and $B$ (see section \ref{HCIZexp} for details). The main protagonists of this paper are defined as specific coefficients in this expansion.\\

\begin{definition}The HCIZ integral $Z(A,B\,; z)$ possesses a unique expansion in powers of $z$ and products of traces $\Tr(B^k)$, $k\in\lbrace 1,\dots,N\rbrace$, of its second argument. For $n\in\lbrace 1,\dots,N\rbrace$, we define
\begin{equation}\label{eq:Kn}
K^{(N)}_n(A) := \frac{n}{N} \bigl[ z^n\, \Tr(B^n) \bigr] Z(A,B\,; z)
\end{equation}
as the appropriately normalised coefficient of $z^n \Tr(B^n)$ in this expansion.

\noindent More generally, if $\alpha=(\alpha_1,\dots,\alpha_{\ell})\vdash n$ is a partition of $n\in\lbrace 1,\dots,N\rbrace$ of length $\ell$, \ie $n=\alpha_1+\dots+\alpha_{\ell}$, we define
\begin{equation}\label{eq:Kalpha}
K^{(N)}_\alpha(A) := \frac{\prod_{k=1}^n k^{\ah_k}\, \ah_k!}{N^\ell} \bigl[ z^n\,  \Tr(B^{\alpha_1}) \cdots \Tr(B^{\alpha_\ell}) \bigr] Z(A,B\,; z)\,,
\end{equation}
where $\ah_k$ denotes the number of times $k$ appears in the partition $\alpha$.
\end{definition}

\begin{remark}
We note that we restricted the definition of $K_\alpha^{(N)}(A)$ to partitions of integers $n$ lesser or equal to $N$: this ensures that the polynomials $\Tr(B^{\alpha_1}) \cdots \Tr(B^{\alpha_\ell})$ are linearly independent and thus that the construction \eqref{eq:Kalpha} is non-ambiguous. The polynomial $K_n^{(N)}(A)$ coincides with $K_\alpha^{(N)}(A)$ for $\alpha=(n)$ the partition of $n$ formed by a single number. 
To lighten the notation, we will often omit the superscript $(N)$ of $K^{(N)}_\alpha(A)$ in the main text, when the context makes it clear that we are working at a fixed finite size $N$. Finally, the normalisation in the definitions \eqref{eq:Kn} and \eqref{eq:Kalpha} has been chosen to obtain a specific large $N$ limit, as shown in point 4 of Theorem \ref{thm:summary} below.
\end{remark}

The polynomials $K_n^{(N)}$, and their generalisations $K_\alpha^{(N)}$, possess many remarkable properties making them natural ``precursors'' at finite $N$ of the free cumulants $\kappa_n$. We summarise these properties in the following theorem.

\begin{theorem}\label{thm:summary}
Let $N\in\mathbb{Z}_{\geq 1}$, $n\in\lbrace 1,\dots,N\rbrace$ and $\alpha=(\alpha_1,\dots,\alpha_\ell)$ a partition of $n$.\vspace{-3pt}
\begin{enumerate}
\item $K^{(N)}_n$ and $K^{(N)}_\alpha$ are invariant polynomials of degree $n$ on the space of $N\times N$ matrices.
\item They admit natural expansions \eqref{eq:KSchur}, \eqref{Kn} and \eqref{eq:Kp2} in the bases of Schur and Newton polynomials. 
\item The $K^{(N)}_n$ satisfy the following additivity property, for any $N\times N$ matrices $A$ and $B$:
\begin{equation}\label{eq:AddFinite}
\int_{\U(N)} DU K^{(N)}_n(A+U B\Ud)= K_n^{(N)}(A)+K_n^{(N)}(B)\,,
\end{equation}
generalising \eqref{eq:AddFree} for finite $N$.
\item Suppose we are given a sequence $(A_N)_{N\geq 1}$ with a limit eigenvalue distribution when $N\to\infty$, so that the normalised moments $m_k(A_N)=\frac{1}{N}\Tr(A_N^k)$ all converge to finite values $m_k$. Then
\begin{equation}
K_n^{(N)}(A_N) \approx \kappa_n(A_N) \qquad \text{ and } \qquad K_\alpha^{(N)}(A_N) \approx \kappa_{\alpha_1}(A_N)\cdots\kappa_{\alpha_\ell}(A_N) \qquad \text{ as }\, N\to\infty\,,
\end{equation}
where $\kappa_n$ is the free cumulant defined through equation \eqref{eq:MomCumRelIntro}.
\item Under the same assumptions, $K^{(N)}_\alpha$ admits a ``topological'' expansion in terms of moments:
{\small\begin{equation*}
K^{(N)}_\alpha(A_N) = \frac{\prod_{k=1}^n k^{\ah_k}\, \ah_k!}{n!} \sum_{g \geq 1-\ell} N^{2(1-\ell-g)}  \left( \sum_{(\beta_1,\dots,\beta_{\kay})\vdash n} (-1)^{\ell+\kay}  H_g^{\bullet \hspace{1pt},\hspace{1pt}\leq}(\alpha,\beta) \,m_{\beta_1}(A_N)\cdots m_{\beta_{\kay}}(A_N) \right)\,,
\end{equation*}}where $H_g^{\bullet \hspace{1pt},\hspace{1pt}\leq}(\alpha,\beta)$ are the disconnected weakly monotone double Hurwitz numbers of genus $g$ and type $\alpha,\beta$ (see section \ref{sec:Hur} for definitions and details).
\item The relation in point 5 can be inverted to reconstruct the moments, yielding
\begin{equation}
m_n(A_N) = \sum_{g \geq 0}  N^{-2g} \sum_{\alpha\vdash n} P_g(\alpha)\,  K^{(N)}_\alpha(A_N)\,,
\end{equation}
where $P_g(\alpha)$ is the number of permutations of cycle type $\alpha$ and genus $g$. This provides the $1/N^2$ corrections to the moment-cumulant relations \eqref{eq:MomCumRelIntro} (noting that $P_0(\alpha)=\text{NC}(\alpha)$).
\item $K_n^{(N)}(A)$ can be extracted as the coefficient of $x^n$ in the generating function
\begin{equation}
\mathcal{K}^{(N)}(A;x) = \frac{1}{N} \,  \int_{\U(N)} DU e^{N\Tr(A\Ud)}\, \Tr \left(\frac{1}{1-xU}\right)\,.
\end{equation} 
In particular, $\dfrac{1}{x}(\mathcal{K}^{(N)}(A;x)-1)$  converges to the Voiculescu $\mathcal{R}$-transform when $N\to\infty$.
\item If $A$ is a random Hermitian matrix in the Gaussian Unitary Ensemble GUE$(N,\sigma)$, then
\begin{equation}
\Ee_{A\sim \text{GUE}(N,\sigma)}\bigl( K_\alpha^{(N)}(A) \bigr) = \left\lbrace \begin{array}{ll}
\sigma^{n}\, & \text{ if } \alpha=[2^{n/2}],\\[4pt] 
0 & \text{ otherwise}\,,
\end{array} \right.
\end{equation}
where $[2^{n/2}]=(2,\dots,2)$ is the partition of $n$ into pairs (which exists only for even $n$). In particular,
\begin{equation}
\Ee_{A\sim \text{GUE}(N,\sigma)}\bigl( K_n^{(N)}(A) \bigr) = \delta_{n,2}\,\sigma^2\,.
\end{equation}
\item  Let $\sigma\in S_n$ be a permutation of cycle type $\alpha$. Then
\begin{equation}
\int_{\U(N)} DU \,A^U_{1\sigma(1)}\,A^U_{2\sigma(2)}\cdots A^U_{n\sigma(n)} = N^{\ell-n} K_\alpha^{(N)}(A)\,,
\end{equation}
where $A^U_{ij}$ is the $(i,j)$ entry of the matrix $A^U = UAU^\dagger$.\vspace{4pt}\\
In particular, suppose $A\sim\mu$ is a random matrix following a $\U(N)$-invariant distribution $\mu$ and let $\sigma\in S_n$ be a permutation of cycle type $\alpha$. Then
\begin{equation}
\Ee_{A\sim\mu}\bigl( A_{1\sigma(1)}\,A_{2\sigma(2)}\cdots A_{n\sigma(n)}\bigr) = N^{\ell-n}\, \Ee_{A\sim\mu}\bigl( K_\alpha^{(N)}(A)\bigr)\,.
\end{equation}
\end{enumerate}
\end{theorem}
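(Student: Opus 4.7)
My plan is to take the HCIZ integral $Z(A,B;z)$ as the fundamental object and derive all nine assertions from three structural properties of $Z$: its Schur--Weyl character expansion in the basis $\{s_\lambda(A)\,s_\lambda(B)\}_\lambda$, a multiplicative identity under orbit averaging, and its monotone Hurwitz expansion. Points 1, 2, 8 and 9 follow from character calculus. Invariance of $Z$ in each argument and polynomiality in $A$ are immediate from the Haar definition, giving point 1. The Schur--Weyl expansion of $Z$, once converted to the Newton basis via the characters $\chi^\lambda$ of $S_n$, yields the Schur and power-sum expansions for $K_\alpha^{(N)}$ demanded by point 2 after reading off the coefficient of $\prod_i \Tr(B^{\alpha_i})$. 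Point 9 follows by rewriting $\prod_i A^U_{i,\sigma(i)}$ as a tensor contraction of $A^{\otimes n}$ with $\sigma\in S_n$ and matching it against $K_\alpha^{(N)}$ via the same expansion. Point 8 drops out because the GUE expectation of $s_\lambda(A)$ vanishes unless $\lambda$ is an even partition, and the combinatorial prefactor in $K_\alpha^{(N)}$ conspires to leave only $\alpha=[2^{n/2}]$ with value $\sigma^n$.

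For point 3, the decisive identity is
\begin{equation}\label{eq:multprop}
\int_{\U(N)} DU\, Z(A+UBU^\dagger,\,C;z) = Z(A,C;z)\,Z(B,C;z),
\end{equation}
which I would prove by expressing the outer $Z$ as a Haar integral over $V\in\U(N)$, substituting $W=VU$ inside the inner integral, and invoking bi-invariance of Haar to decouple the two exponentials. To convert multiplicativity of $Z$ into additivity of $K_n^{(N)}$, write $Z(X,C;z)=1+Y(X,C;z)$ with $Y(X,C;0)=0$ and note that the cross term $Y(A,C;z)\,Y(B,C;z)$ contains at least two traces of $C$, so it contributes nothing to the coefficient of the single trace $\Tr(C^n)$. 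Applying $[z^n\Tr(C^n)]$ to \eqref{eq:multprop} and commuting it with the outer $U$-integral then yields \eqref{eq:AddFinite} after the $n/N$ normalisation.

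For points 4--7 I would invoke the Goulden--Guay-Paquet--Novak expansion expressing the coefficients of $Z(A,B;z)$ in the power-sum basis as disconnected weakly monotone double Hurwitz numbers $H_g^{\bullet\,,\,\leq}(\alpha,\beta)$ weighted by moment monomials $\prod_j m_{\beta_j}(A)$. Matching this against the prefactor $\prod_k k^{\ah_k}\ah_k!/N^\ell$ defining $K_\alpha^{(N)}$ gives the topological expansion of point 5; the leading genus $g=1-\ell$ term factorises over the cycles of $\alpha$ with Kreweras non-crossing weights, recovering the $\kappa_{\alpha_1}\cdots\kappa_{\alpha_\ell}$ limit of point 4 via \eqref{eq:MomCumRelIntro}. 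Formally inverting the moment--cumulant relation at finite $N$ yields point 6, with $P_g(\alpha)$ identified as the standard permutation genus count. For point 7 I would recognise $\mathcal{K}^{(N)}(A;x)$ as the generating function of $\frac{1}{N}\int DU\, e^{N\Tr(AU^\dagger)}\Tr(U^n)$, expand the geometric series, and match it term by term with $K_n^{(N)}(A)$ via the same character calculus, with the $N\to\infty$ limit reproducing Voiculescu's $\mathcal{R}$-transform through Matytsin's saddle-point asymptotics of HCIZ. The chief obstacle will be the bookkeeping in points 4 and 5: correctly correlating the genus, the combinatorial prefactors, and the disconnected Hurwitz weights so that the leading $g=1-\ell$ term factorises cleanly into $\prod_i \kappa_{\alpha_i}$ rather than a tangled cumulant of $\alpha$ as a whole; the multiplicative identity \eqref{eq:multprop}, though conceptually clean, also requires care when interchanging Haar integration with coefficient extraction in formal power series, which I would handle by truncating at each finite order $n$ in $z$.
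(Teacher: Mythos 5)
Your overall architecture---taking the HCIZ integral as the master object and deriving everything from its Schur expansion, its multiplicativity \eqref{eq:MultZ} under orbit averaging, and the Goulden--Guay-Paquet--Novak monotone Hurwitz expansion---is exactly the paper's, and your treatments of points 1, 2, 3, 5, 7 and 9 coincide in substance with the proofs given in Sections \ref{HCIZint}, \ref{genfn}, \ref{cyclicpr} and Theorem \ref{Thm:ZKSum} (in particular, the change of variables $U\mapsto V^\dagger U$ for multiplicativity and the observation that the cross term carries at least two traces of $C$ are precisely the paper's argument for point 3).

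The genuine gaps are in points 4, 6 and 8. For point 4 you propose to read the factorisation $\lim_{N\to\infty}K_\alpha=\prod_i\kappa_{\alpha_i}$ off the leading genus $g=1-\ell(\alpha)$ term of the Hurwitz expansion, asserting that the minimal-genus disconnected monotone Hurwitz numbers ``factorise over the cycles of $\alpha$ with Kreweras weights''. That assertion is exactly what needs proving: it amounts to a closed-form evaluation of $H^{\bullet\hspace{1pt},\hspace{1pt}\leq}_{1-\ell(\alpha)}(\alpha,\beta)$, and in particular of $H_0^{\bullet\hspace{1pt},\hspace{1pt}\leq}([n],\beta)$, which the paper obtains in \eqref{eq:H0n} only as a \emph{consequence} of the factorisation \eqref{eq:MultKappa}---itself proved independently in Appendix \ref{App:MultK} from the ${\rm O}(N^2)$ bound on $\log Z$ and exponentiation of the free energy---and for which the paper explicitly states that no direct combinatorial proof is known. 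Your route is therefore circular unless you supply that combinatorial evaluation (e.g.\ by showing that a minimal-genus disconnected cover has exactly $\ell(\alpha)$ components, each of genus zero and containing a single cycle of $\sigma_\alpha$, and then counting). For point 6, ``formally inverting'' the expansion does not by itself yield the stated formula: the inversion rests on the duality \eqref{eq:InvH} between the weakly and the strictly monotone generating functions, with the $(-N)$ sign and the $1/|\Cl_\alpha|$ normalisation chosen precisely so that the two matrices are inverse to one another, and the resulting coefficients are identified with $P_g(\alpha)$ only through Lemma \ref{Lem:GenPermHur}, i.e.\ the unique-primitive-factorisation argument of Appendix \ref{App:GenPerm}; neither ingredient appears in your sketch. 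Finally, for point 8 your claim that $\mathbb{E}_{\text{GUE}}\bigl(s_\lambda(A)\bigr)$ vanishes unless $\lambda$ is an even partition is not correct as stated (for instance $\hat\chi_{[1^2]}([2])=-1\neq 0$, so $\lambda=[1^2]$ contributes); the correct and much shorter route is the Gaussian Fourier transform $\mathbb{E}_{A}\bigl(Z(A,B\,;z)\bigr)=e^{\frac{1}{2}N\sigma^2z^2\Tr B^2}$, whose expansion contains only the Newton polynomials $p_{[2^k]}(B)$ and thus isolates $\alpha=[2^{n/2}]$ directly.
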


Points 1 and 2 will be proven in Section \ref{HCIZint}, which is devoted to the HCIZ integral and its representation in terms of standard invariant polynomials. Property 4 shows that the $K^{(n)}_\alpha$ are directly related to the free cumulants in the large $N$ limit, with point 5 providing the corresponding $1/N^2$ corrections in terms of the moments and point 6 the inverted moment-cumulant relation. They will be proven in Section \ref{LargeNlim}, which concerns the large $N$ limit of the HCIZ integral and its relation to weakly monotone Hurwitz numbers. Properties 7 and 8 will be derived in Sections \ref{genfn} and \ref{int-wick}, making use of the origin of $K_\alpha^{(N)}$ from the HCIZ integral. Property 9 is a finite $N$ generalisation of a well-known characterisation of free cumulants~\cite{CMSS06,Maillard-etal,BH24}, which we will show in Section \ref{cyclicpr} using the Weingarten--Samuel--Collins calculus.

Finally, the most important property of $K_n^{(N)}$ is the point 3 of the theorem, which is the finite $N$ version of the additivity \eqref{eq:AddFree} of free cumulants. It follows directly from the remarkable identity
\begin{equation}\label{eq:AddHCIZ}
\int_{\U(N)} DU \,Z(A+UB\Ud,C\,; z) = Z(A,C\,; z)\,Z(B,C\,; z)\,. 
\end{equation}
obeyed by the HCIZ integral, by projecting onto the single trace $z^n\,\Tr(C^n)$. The identities \eqref{eq:AddFinite}, \eqref{eq:AddHCIZ} and their applications will be the main subjects of Section \ref{Additivity}. They are deeply tied to the addition of conjugacy orbits and thus to the Horn problem, as we shall discuss in Section \ref{Horn}. In view of these results, it is natural to wonder if the more general polynomials $K^{(N)}_\alpha$ also have a simple behaviour with respect to the addition of conjugacy orbits. Although they do not satisfy the additivity property \eqref{eq:AddFinite}, they obey a generalisation thereof \eqref{eq:OrbitK}, which follows from the identity \eqref{eq:AddHCIZ} and still takes a quite simple form. More generally and more formally, if $\AcN$ denotes the algebra of invariant polynomials on $N\times N$ matrices and $f\in\AcN$, it is natural to study the quantity
\begin{equation}\label{eq:OrbitCoIntro}
\Delta f(A,B) = \int_{\U(N)} DU\, f(A+UB\Ud)\,.
\end{equation}
This is a sum of products of invariant polynomials of $A$ and $B$, which can then be interpreted as an element of the tensor product $\AcN \otimes \AcN$. This procedure thus defines a coproduct $\Delta : \AcN \to \AcN\otimes\AcN$ on the algebra of invariant polynomials\footnote{Note that this coproduct is not a morphism with respect to the standard product on $\AcN$ and thus does not define a Hopf algebra structure on it. See Section \ref{Sec:Coproduct} for details.}. The results mentioned above show that the $K^{(N)}_\alpha$'s define a family of polynomials in $\AcN$ on which this coproduct is particularly simple. In particular, the $K^{(N)}_n$'s are so-called primitive elements, satisfying $\Delta K^{(N)}_n = K^{(N)}_n \otimes 1 + 1 \otimes K^{(N)}_n$. We discuss these aspects and the general properties of the coproduct $\Delta$ in the Section \ref{Sec:Coproduct} (and in particular show that $\Delta$ is dual to the standard product on $\AcN$ with respect to some natural scalar product).\\

So far, we have mainly focused on algebraic properties of invariant polynomials and generalised precursors $K_\alpha^{(N)}$. Section \ref{sec:Proba} turns to more probabilistic considerations and applications of our formalism to random matrix theory. If $f\in\AcN$ is an invariant polynomial and $A\sim\mu$ is a random matrix with $\U(N)$-invariant probability measure $\mu$, it is natural to consider the expectation value
\begin{equation}
\fb[\mu] = \E_{A\sim \mu}[f(A)] = \int f(A)\,\mu(A)\,.    
\end{equation}
Given two independent random matrices $A\sim \mu$ and $B\sim\nu$, their sum $C=A+B$ follows the convoluted distribution $\mu\star\nu$. Then, a natural question is whether we can reconstruct the expectation values
\begin{equation}
\fb[\mu\star\nu]=\E_{C\sim \mu\star\nu}[f(C)] = \iint f(A+B) \mu(A)\,\nu(B) 
\end{equation}
of invariant polynomials on this sum from the knowledge of the expectation values for the measures $\mu$ and $\nu$. This is the main subject of Section \ref{sec:Proba}, which answers this question positively. Namely, the Theorem \ref{Thm:fConvolution} expresses $\fb[\mu\star\nu]$ in terms of the orbit coproduct introduced in equation \eqref{eq:OrbitCoIntro}. Since the latter can be computed using the results of Section \ref{Sec:Coproduct}, this gives a general procedure to determine the quantity $\fb[\mu\star\nu]$. In particular, the averaged precursors $\Kb_n$ are additive with respect to the convolution, \ie $\Kb_n[\mu\star\nu]=\Kb_n[\mu]+\Kb_n|\nu]$.


\subsection{A short review of related work} 
\label{sec:review}

{ The construction of finite $N$ analogues or precursors of the free cumulants is a natural question that has attracted the attention of several authors. Let us quickly review these works and discuss the relations and differences with the approach developed in the present article.
 
\bei
\item The precursors of free cumulants $K_n$ and their generalisations $K_\alpha$ (or rather their expectation values $\Kb_\alpha$) were first introduced by Capitaine and Casalis in~\cite{CC06}. In that reference, they are defined using the convolution on the symmetric group (see also~\cite{CC08} for a more geometric interpretation). In contrast, our construction is based on the HCIZ integral. We recover the expression of $K_\alpha$ as a convolution in equation \eqref{eq:Kp2}, thus making the link with~\cite{CC06} explicit. Some of the properties of the generalised precursors summarised in our main Theorem \ref{thm:summary} were already found in~\cite{CC06,CC08}, including the additivity of $K_n$ (point 3), their large $N$ limit (point 4) and their expectation value under a Gaussian distribution (point 8). Our approach allows us to rederive these results from a different route and to obtain various new ones, in particular the relation between the large-$N$ expansion of $K_\alpha$, Hurwitz numbers and the enumeration of permutations by genus (points 5 and 6 in Theorem \ref{thm:summary}), as well as the generating function of $K_n$ (point 7) and the behaviour of general invariant polynomials with respect to sums of conjugacy orbits (see Sections \ref{Sec:Coproduct} and \ref{sec:Proba}).
\item The notion of free convolution is a quite important tool in free probability~\cite{Voi86,Spe93,Nica}. A variant of this construction for finite size matrices was introduced in \cite{Marcus} through the convolution of their characteristic polynomials. The associated
finite free cumulants, that are additive with respect to that convolution, were studied in \cite{AP}.
Despite some similarities, they differ from our functions $K_n$, as testified by their
approach to infinite $N$ with  $1/N$ corrections, 
in contrast with our $1/N^2$ expansion (see point 5 above in Theorem \ref{thm:summary}). For example, their finite free cumulant of
degree 4 reads $f_4= m_4 -  (2N-3)/(N-1) m_2^2$ (for traceless matrices), see also 
\cite{Mergny-Potters,Mergny-th}, while $K_4$ is proportional to $m_4-(2N^2-3)/(N^2+1)m_2^2$. Although the precursors $K_n$ introduced here satisfy a number of other interesting properties, we will see in Section \ref{sec:convolution} that they do not seem to be associated with a well-behaved notion of finite free convolution, in contrast with the cumulants of \cite{AP}. 
\item In a recent work\cite{KMW}, mostly devoted to random tensors,
a concept of cumulants for finite tensors is
developed. When applied to 2-tensors regarded as  matrices, these objects seem to coincide with the generalised precursors $K_\alpha$. It would be interesting to study the relation between their construction and that of~\cite{CC06,CC08} or the present paper.
\item Even more recently~\cite{CGL}, Collins, Gurau and Lionni have,
also in the context of tensor theories, developed a
concept of free cumulants at finite $N$, which are additive under the addition of two independent random matrices. Their construction, however, depends in an essential way on the evaluation of moments and cumulants against some probability law. This is in contrast with the (generalised) precursors considered in this paper, which are defined as invariant polynomials on matrices and satisfy various interesting algebraic properties, before integration over some probability law. We return in Section \ref{sec:CGL} to similarities and differences between their approach and ours.
\eei}



\subsection{Notations and prerequisites}
\label{notations}

A brief review of the notations and prerequisites used in that paper is in order.

\paragraph{Partitions of integers and Young diagrams.} $\alpha\vdash n$ denotes a \textit{partition of the integer} $n$, which we write in two equivalent ways
\begin{equation}
    \alpha=\bigl(\alpha_1,\dots,\alpha_{\ell(\alpha)}\bigr) = [n^{\ah_n} \dots 1^{\ah_1}]\,,
\end{equation}
either in terms of its components $\alpha_1 \geq \alpha_2 \geq \dots \geq \alpha_{\ell(\alpha)} > 0$ or in terms of its multiplicities $\ah_k$ (\ie the number of times $k$ appears in $\alpha$). In these notations, the \textit{degree} of $\alpha$ is
\begin{equation}
    d(\alpha) = n = \sum_{i=1}^{\ell(\alpha)} \alpha_i = \sum_{k=1}^n k\,\ah_k\,,
\end{equation}
while its \textit{length} is the number of its components
\begin{equation}
    \ell(\alpha) = \sum_{k=1}^n \ah_k\,.
\end{equation}
When a multiplicity $\ah_k$ is 0, we will sometimes omit $k$ from the notation $[n^{\hat\alpha_n} \dots 1^{\hat\alpha_1}]$. For instance, the partition $(3,3,1)=[3^2\,1]$ has degree 7 and length 3.

Equivalently, the partition $\alpha\vdash n$ can be thought of as a Young diagram with $n$ boxes, formed by $\ell(\alpha)$ rows of sizes $\alpha_1 \geq \dots \geq \alpha_{\ell(\alpha)}$. By a slight abuse of terminology, we will identitfy the partition $\alpha$ with its associated Young diagram. We will then use the notation $\square\in\alpha$ to designate a box in this Young diagram.

\paragraph{Permutations.} $S_n$ denotes the \textit{group of permutations} of $\lbrace 1,\dots,n\rbrace$, of size $n!$. If $\sigma\in S_n$ is formed by $\hat\alpha_k$ cycles of length $k$, for $k\in\lbrace 1,\dots,n\rbrace$, we will say that $\sigma$ has \textit{cycle type} $[\sigma]=[n^{\hat\alpha_n} \dots 1^{\hat\alpha_1}]$, which is a partition $\alpha\vdash n$. The subset of $S_n$ composed by all permutations of cycle type $\alpha$ forms a \textit{conjugacy class} of $S_n$ and will be denoted by $\Cl_\alpha$. Its size is
\begin{equation}\label{eq:SizeClass}
    |\Cl_\alpha| = \frac{n!}{\prod_{k=1}^n k^{\ah_k}\, \ah_k!}\,.
\end{equation}
The partition $[1^n]$ labels the conjugacy class of the identity formed by only 1 element, while the partition $[n]$ labels the conjugacy class of $n$-cycles, formed by $|\Cl_{[n]}|=(n-1)!$ elements.

The \textit{irreducible representations} of $S_n$ are naturally labeled by partitions / Young diagrams of degree $n$. The representation associated with $\lambda\vdash n$ has \textit{character} $\hat\chi_\lambda$, taking the value $\hat\chi_\lambda(\alpha)$ on the conjugacy class $\Cl_\alpha$. Its dimension is given by
\begin{equation}
    \hat d_\lambda = \hat\chi_\lambda([1^n]) = \frac{n!}{H_\lambda}\,, \qquad \text{ where } \qquad H_\lambda :=  \prod_{\square \in \lambda }{ h(\square) }
\end{equation}
and $h(\square)$ denotes the \textit{hook length} of the box $\square$ in the Young diagram $\lambda$. If the box $\square$ has coordinate $(i,j)$, the latter is defined as
\begin{equation}
    h(\square):=\lambda_i-j+\tilde\lambda_j-i+1\,,
\end{equation}
where $\tilde\lambda$ denotes the transposed Young diagram of $\lambda$ (with rows and columns exchanged). The characters satisfy the \textit{orthogonality relations}
\begin{equation}\label{eq:OrthoChi}
    \frac{1}{n!} \sum_{\lambda\vdash n} \hat\chi_\lambda(\alpha)\,\hat\chi_\lambda(\beta) = \delta_{\alpha\beta} \qquad \text{ and } \qquad \sum_{\alpha\vdash n} |\Cl_\alpha|\, \hat\chi_\lambda(\alpha)\,\hat\chi_\mu(\alpha) = \delta_{\lambda\mu}\,. 
\end{equation}

\paragraph{Matrices, invariant polynomials and unitary group.} For a matrix $X=(X_{ij})_{1\leq i,j \leq N}$ of size $N\times N$, we write its trace as $\Tr(X)=\sum_{i=1}^N X_{ii}$ and its $n$-th moment as $m_n(X)=\inv{N} \Tr(X^n)$. If $\alpha\vdash n$ is a partition of $n$, we also let
\begin{equation}
    p_\alpha(X) := \prod_{i=1}^{\ell(\alpha)} \Tr(X^{\alpha_i}) = \prod_{k=1}^{n} \Tr^{\ah_k}(X^k)\,.
\end{equation}
This defines a conjugacy-invariant polynomial $p_\alpha$ of degree $n=d(\alpha)$ on the space of $N\times N$ matrices, which we call the \textit{Newton polynomial} associated with $\alpha$.

Polynomial irreducible representations of the general linear group $\text{GL}(N)$ are labeled by partitions $\lambda$ of length $\ell(\lambda) \leq N$, \ie Young tableaux with at most $N$ rows. The corresponding characters will be denoted by $s_\lambda$ and define invariant polynomials of degree $d(\lambda)$ called \textit{Schur polynomials}. The dimension of the $\text{GL}(N)$-representation associated with $\lambda$ is 
\begin{equation}\label{eq:dC}
    d_\lambda=s_\lambda(\mathbb{I})=\frac{N^n C_\lambda}{H_\lambda} \qquad \text{ where } \qquad C_\lambda(N):= \prod_{\square \in \lambda } (1+ c(\square)/N)\,,
\end{equation}
in terms of the \textit{content function}, defined on each box $\square\in\lambda$ of coordinates $(i,j)$ by
\be\label{hc} c(\square) := j-i\,. \ee
The Schur and Newton polynomials are related by the Frobenius-Schur formula
\be\label{frobenius}  p_\alpha(X)= \sum_{\lambda\vdash n\atop \ell(\lambda)\le N}  \hat\chi_\lambda(\alpha)\, s_\lambda(X) \,, \qquad \forall\,\alpha\vdash n\,,
\ee
or its reverse form, which will be of much use in this article,
\be\label{frobeniusinv}  s_\lambda(X)= \sum_{\alpha\vdash n} \frac{|\Cl_\alpha|}{n!} \hat\chi_\lambda(\alpha)\, p_{\!\alpha}(X) \, , \qquad \forall\,\lambda\vdash n\,,\ \ell(\lambda)\le N. \ee

The compact real form of $\text{GL}(N)$ is the \textit{unitary group}  $\U(N)$, formed by $N\times N$ matrices $U$ satisfying $\Ud=U^{-1}$, where $\Ud$ is the conjugate transpose of $U$. If $f$ is a function on $\U(N)$, we denote by $\int_{\U(N)} DU\, f(U)$ its integral with respect to the \textit{Haar measure} on $\U(N)$, normalised by $\int_{\U(N)} DU=1$. Fixing partitions $\alpha,\beta,\lambda,\mu\vdash n$ and $N\times N$ matrices $X,Y$, the Schur functions then satisfy the following useful integration formulae:
 \begin{equation}
     \int_{\U(N)} DU\,s_\lambda(U X) s_\mu( \Ud Y) = \frac{\delta_{\lambda\,\mu}}{d_\lambda}  s_\lambda(XY)
 \end{equation} 
 \begin{equation}
     \int_{\U(N)} DU\, s_\lambda(U X \Ud Y) = \frac{1}{d_\lambda}  s_\lambda(X) s_\lambda(Y)\,,
 \end{equation}
while the Newton functions obey the orthogonality relation
\begin{equation}\label{eq:OrthoP}
    \int_{\U(N)} DU\, p_\alpha(U)\, p_\beta(\Ud) = \frac{n!}{|\Cl_\alpha|} \delta_{\alpha\beta}\,.
\end{equation}

\paragraph{Weingarten--Samuel--Collins calculus.}\hspace{-8pt}\footnote{In our opinion, the usual name Weingarten calculus does not make justice to these two other important contributors. We thus propose this new name, but keep the symbol $\text{Wg}$ for the coefficient in \eqref{Wg}.} 
We have~\cite{W, Sam, Collins03}
 \be\label{Wg}\int_{\U(N)} DU\, \prod_{a=1}^n U_{i_a, j_a}  \Ud_{k_a\ell_a}= \sum_{\sigma, \tau \in S_n}\Wg([\sigma \tau^{-1}])
  \prod_{a=1}^n \delta_{i_a, \ell_{\sigma(a)}}  \delta_{j_a k_{\tau(a)}}\,,\ee
  implying 
  \be\label{Wg2}  \int_{\U(N)} DU\, \Tr^n (A U B U^\dagger) =  \sum_{\sigma, \tau \in S_n}\Wg([\sigma \tau^{-1}]) \, p_{[\sigma]}(A)\, p_{[\tau]}(B)\,.\ee
where we recall that $[\sigma] \vdash n$ is the partition describing the cycle type of $\sigma\in S_n$. In these formulae, $\Wg$ denotes the Weingarten function, whose value on a partition $\alpha\vdash n$ is defined as \cite{Sam}
\begin{equation}
    \Wg(\alpha) = \frac{1}{n!^2} \sum_{\lambda\vdash n \atop \ell(\lambda)\leq N} \frac{\hat d_\lambda^2}{d_\lambda} \hat\chi_\lambda(\alpha) =  \frac{1}{N^n} \sum_{\lambda\vdash n \atop \ell(\lambda)\leq N} \frac{\hat\chi_\lambda(\alpha)}{H_\lambda\,C_\lambda} \,.
\end{equation}


\section{Precursors of free cumulants from the HCIZ integral}
\label{HCIZint}
\subsection{The HCIZ integral and its expansions}
\label{HCIZexp}
We refer to the section \ref{notations} for conventions and notations on partitions, matrices and invariant polynomials, of which we will make extensive use in this section. Consider the HCIZ integral
\be\label{hciz}
Z(A, B \,; z):= 
 \int_{\U(N)} DU \exp \bigl( N z\, \Tr(U A U^\dagger B) \bigr)\,,
\ee
where $DU$ stands for the $\U(N)$ Haar measure and $A$ and $B$ are two complex matrices of size $N\times N$. It is clear that $Z(A,B\,; z)$ is invariant under independent conjugations of $A$ and $B$ by elements of $\U(N)$. Moreover, it can be expanded as a power series in $z$ with an infinite radius of convergence and with the coefficient of $z^n$ being expressed in terms of invariant polynomials of degree $n$ in both $A$ and $B$. This expansion is the simplest in the basis of Schur polynomials $s_\lambda$, where it reads
\begin{equation}\label{Z2ss}
    Z(A,B\,; z) = \sum_{n=0}^\infty z^n \sum_{\lambda\vdash n\atop \ell(\lambda) \le N }\frac{s_\lambda(A)s_\lambda(B)}{C_\lambda} \,,
\end{equation}
with $C_\lambda$ defined in equation \eqref{eq:dC}. Another useful family of invariant functions are the Newton polynomials $p_\alpha(X)=\prod_{i=1}^{\ell(\alpha)} \Tr(X^{\alpha_i})$, related to Schur polynomials by the identity \eqref{frobeniusinv}. In terms of those, the expansion of the HCIZ integral can be rewritten as
\begin{eqnarray}
    Z(A,B\,; z) &=& \sum_{n=0}^\infty {z^n}\sum_{\lambda\vdash n\atop \ell(\lambda) \leq N } \frac{1}{C_\lambda} 
\sum_{\alpha\vdash n}\frac{|\Cl_\alpha |}{n!} \hat\chi_\lambda(\alpha)  s_\lambda(A) p_\alpha(B) \label{Z2sp} \\
 &=&  \sum_{n=0}^\infty {z^n} \sum_{\lambda\vdash n\atop \ell(\lambda) \leq N } \frac{1}{C_\lambda} 
\sum_{\alpha,\beta\vdash n}\frac{|\Cl_\alpha |\cdot|\Cl_\beta |}{n!^2} \hat\chi_\lambda(\alpha)\hat\chi_\lambda(\beta)  p_\beta(A) p_\alpha(B)\,.\label{expand}
\end{eqnarray}
Equivalently, the latter form can also be written in terms of the Weingarten coefficients as
\begin{equation}\label{Wgexpand}
Z(A,B\,; z) = \sum_{n=0}^\infty \frac{z^n}{n!} N^n \sum_{\sigma,\tau\in S_n }\Wg([\sigma\tau^{-1}])\, p_{[\sigma]}(A)\, p_{[\tau]}(B)\,.
\end{equation}


\subsection{Generalised precursors of free cumulants}

Let $F(B\,; z)$ be a power series in $z$, whose $z^n$-coefficient belongs to the space of invariant polynomials of $B$ of degree $n$. For $n\leq N$, the Newton polynomials $\lbrace p_\alpha(B) \rbrace_{\alpha\vdash n}$ form a basis of this space: we can thus define unambiguously the coefficient of $z^n\, p_\alpha(B)$ in $F(B\,; z)$, which we denote by $[z^n\, p_\alpha(B)] F(B\,; z)$. With these conventions, we can introduce the main protagonists of this paper:

\begin{definition}
For $\alpha$ a partition of $n\leq N$, we define the corresponding ``generalised free cumulant precursor'' as the following invariant polynomial of degree $n$:
    \begin{equation}\label{eq:Kalpha2}
        K_\alpha(A) := \frac{n!}{N^{\ell(\alpha)}|\mathrm{Cl}_\alpha|} [z^n p_\alpha(B)]\, Z(A,B\,; z)\,.
    \end{equation}
\end{definition}

\noindent We recall that $|\Cl_\alpha|$ is the size of the conjugacy class of $S_n$ formed by permutations of cycle type $\alpha$. Using its explicit expression \eqref{eq:SizeClass}, we see that the above definition of $K_\alpha$ coincides with the one \eqref{eq:Kalpha} given in the Introduction (up to a slight simplification in notation, as we now omit the $N$-dependence of $K_\alpha$). Of particular importance to us will be the polynomial associated with the maximal partition $\alpha=[n]$, which we will simply denote by $K_n$. It corresponds to the coefficient of a single trace of degree $n$ in the HCIZ integral, namely
\begin{eqnarray}\label{eq:Kn2}
    K_n(A) = \frac{n}{N} [z^n \Tr(B^n)]\,Z(A,B\,; z)\,,
\end{eqnarray}
where we have used $|\Cl_{[n]}|=(n-1)!$. We will call $K_n$ a \textit{free cumulant precursor}, in anticipation of Section \ref{LargeNlim} where we will identify its large $N$ limit with the celebrated free cumulant of degree $n$. We then keep the term \textit{generalised precursor} introduced earlier to designate the polynomial $K_\alpha$ associated with an arbitrary partition $\alpha$. We finally note that the definition \eqref{eq:Kalpha2} can be rephrased as another way of organising the expansion of the HCIZ integral (at least up to order $z^N$):
\be\label{expand2}
 Z(A,B\,; z)=  \sum_{n=0}^N {z^n}   
\sum_{\alpha\vdash n}   \frac{N^{\ell(\alpha)}|\Cl_\alpha |}{n!} K_\alpha(A)  p_\alpha(B) + {\rm O}(z^{N+1})\,.
\ee

Let us now give more explicit expressions of these generalised free cumulant precursors, in terms of standard invariant polynomials. For instance, using the expansion \eqref{Z2sp} of the HCIZ integral, it is easy to express $K_\alpha$ in the basis of Schur polynomials:
\begin{equation}\label{eq:KSchur}
    K_\alpha(A) = \inv{N^{\ell(\alpha)}}\sum_{\lambda\vdash n\atop |\lambda| \le N } \frac{1}{C_\lambda}    \hat\chi_\lambda(\alpha) s_\lambda(A)\,.
\end{equation}
This expression further simplifies in the case $\alpha=[n]$. Indeed, it is a well-known fact that the only representations of $S_n$ that have a non-vanishing character on the class $[n]$ of
cyclic permutations  are
those associated to ``hook" Young diagrams, namely those diagrams with at most one line of length longer than 1.
In the notations of section \ref{notations}, they correspond to $\lambda_{n,t}=[n-t, 1^t]$, with $0\le t \le n-1$, and one has $\hat\chi_{\lambda_{n,t}}([n])=(-1)^t$. We then conclude that 
\be \label{Kn}   K_n(A)= \inv{N} \sum_{t=0}^{n-1} \frac{(-1)^t}{C_{\lambda_{n,t}}}  s_{\lambda_t}(A)\,, \qquad \text{ where } \qquad C_{\lambda_{n,t}}=N^{-n} \frac{(N+n-t-1)!}{(N-t-1)!}\, .\ee

Similarly, one can use the expansion \eqref{expand} to express the generalised precursors in terms of Newton polynomials. This gives
\begin{eqnarray}\label{eq:Kp1}
    K_\alpha(A) = \inv{N^{\ell(\alpha)}}\sum_{\lambda,\beta\vdash n\atop |\lambda| \le N} \frac{|\Cl_\beta|}{n! C_\lambda}    \hat\chi_\lambda(\alpha)\hat\chi_\lambda(\beta) p_\beta(A)\,.
\end{eqnarray}
From the alternative way \eqref{Wgexpand} to expand the HCIZ integral along Newton polynomials, we can give another equivalent expression of $K_\alpha$ in terms of Weingarten coefficients. Namely, fixing any permutation $\sigma\in S_n$ of cycle type $[\sigma]=\alpha$, we get
\begin{eqnarray}\label{eq:Kp2}
   K_{\alpha}(A)= N^{n-\ell(\alpha)}  \sum_{\tau\in S_n}  \Wg([\sigma\tau^{-1}])\,  p_{[\tau]}(A)\,.
\end{eqnarray}
The above expressions for $K_\alpha(A)$ and $K_n(A)$ form the point 2 of our main Theorem \ref{thm:summary}. The right-hand side of \eqref{eq:Kp2} can be rephrased as the convolution over the symmetric group $S_n$ of the Weingarten coefficient $\sigma\in S_n \mapsto \Wg([\sigma])$ with the function $\sigma\in S_n \mapsto p_{[\sigma]}(A)$. Up to a global power of $N$, we thus recover the matrix cumulant introduced by Capitaine and Casalis in~\cite{CC06,CC08}, showing that our construction of the generalised precursor from the HCIZ integral coincides with their definition as a convolution over $S_n$.\\

\noindent{\it Remark 2.} The identity \eqref{Kn} should be contrasted with the following one, a direct consequence of Frobenius--Schur's formula \eqref{frobenius} evaluated 
at $\alpha=[n]$,
\be   m_n(A) := \frac{1}{N} \Tr(A^n) = \frac{1}{N}  \sum_{t=0}^{n-1}  (-1)^t s_{\lambda_{n,t}}(A) \,,\ee
which holds true for any finite $N$. For example, for $n=2$
\bea m_2 &=& \inv{N}\Big(   \oh(N^2 m_1^2+N m_2) -  \oh(N^2 m_1^2-N m_2)   \Big) \,, \\
K_2 &=& \inv{N}\Big(   \oh\frac{N^2 m_1^2+N m_2}{1+\inv{N}} -  \oh\frac{N^2 m_1^2-N m_2}{1-\inv{N}}   \Big)=\frac{N^2}{N^2-1}(m_2-m_1^2)\,,\eea
showing the non-trivial role of the weight
$1/C_{\lambda_{n,t}}$ in the sum \eqref{Kn}.

\subsection{Explicit expressions for low-degree generalised precursors}

For concreteness, we gather below the explicit expressions of the precursors $K_n(A)$ up to degree 4:\vspace{-13pt} 
\begin{subequations}\label{eq:K1to4}
\begin{eqnarray}
    K_1(A) &\!=\!& \frac{1}{N} \Tr(A)\,, \\
    K_2(A) &\!=\!& \frac{1}{N^2-1}\bigl( N\, \Tr(A^2) - \Tr^2(A) \bigr)\,, \\
    K_3(A) &\!=\!& \frac{N}{(N^2-1)(N^2-4)} \bigl( N^2 \,\Tr(A^3) - 3N\,\Tr(A)\,\Tr(A^2) + 2 \Tr^3(A)\bigr)\,, \\
    K_4(A) &\!=\!& \frac{N^2}{(N^2-1)(N^2-4)(N^2-9)}\bigl(  N(N^2+1)\,\Tr(A^4) -4(N^2+1)\Tr(A)\,\Tr(A^3) \\
    & &\hspace{130pt} -(2N^2-3)\Tr^2(A^2) + 10 N\,\Tr^2(A)\,\Tr(A^2) - 5 \Tr^4(A) \bigr) \, . \nonumber
\end{eqnarray}
\end{subequations}
In addition, we also include below all the generalised precursors $K_\alpha(A)$ up to degree 3:
\begin{subequations}\label{eq:GenK23}
\begin{eqnarray}
   \hspace{-20pt} K_{(1,1)}(A) &\!=\!& -\frac{1}{N(N^2-1)}\bigl( \Tr(A^2) - N\,\Tr^2(A) \bigr)\,, \\
    \hspace{-20pt} K_{(2,1)}(A) &\!=\!& -\frac{1}{(N^2-1)(N^2-4)} \bigl( 2N\,\Tr(A^3) - (N^2+2)\,\Tr(A)\,\Tr(A^2) + N\, \Tr^3(A)\bigr)\,, \\
    \hspace{-20pt} K_{(1,1,1)}(A) &\!=\!& \frac{1}{N(N^2-1)(N^2-4)} \bigl( 4\Tr(A^3) - 3N\,\Tr(A)\,\Tr(A^2) + (N^2-2)\Tr^3(A)\bigr)\,.
\end{eqnarray}
\end{subequations}


\section{Topological expansion of generalised precursors}
\label{LargeNlim}

\subsection[Large $N$ limit and free cumulants]{Large $\boldsymbol{N}$ limit and free cumulants}
\label{sec:FreeCum}

We now discuss the large $N$ limit of the generalised precursors $K_\alpha$ introduced in the previous section. More precisely, this subsection is devoted to the point 4 of our main Theorem \ref{thm:summary}, stating that this limit is directly expressed in terms of the well-known free cumulants introduced in~\cite{BIPZ,Spe93}. 
To do so, we first need to recall more precisely what is meant 
by the large $N$ limit of $N\times N$ matrices.

\paragraph{Large $\boldsymbol{N}$ limit of matrices.} Let $A$ be a matrix of size $N\times N$, with eigenvalues $\lambda_1, \dots, \lambda_N$. We recall that its $k$-th moment is given by the normalised trace
\begin{eqnarray}
    m_k(A) := \frac{1}{N} \Tr(A^k) = \frac{1}{N} \sum_{i=1}^N \lambda_i^k\,,
\end{eqnarray}
\ie by the average of the $k$-th power of the eigenvalues. In this section, we will be interested in sequences $(A_N)_{N\geq 1}$ of such matrices of growing size $N$ and admitting a good limiting eigenvalue distribution when this size becomes large. For our purposes, this will essentially mean that the moments $m_k(A_N)$ have a finite limit when $N\to\infty$. To ligthen the formulae, we will omit the explicit $N$-dependence in the matrices $A_N$ and use the same notation $A$ as in the previous section. In practice, we will only manipulate invariant polynomials in $A$. When writing the large $N$ limit of such a polynomial, we will implicitly mean re-expressing it in terms of the moments $m_k(A)$ and taking $N\to\infty$ while keeping all the $m_k(A)$ finite. For $\alpha=(\alpha_1,\dots,\alpha_{\ell(\alpha)})$ a partition, it will be useful to introduce
\begin{eqnarray}\label{defmalpha}
 m_\alpha(A) := \prod_{i=1}^{\ell(\alpha)} m_{\alpha_i}(A) = \frac{1}{N^{\ell(\alpha)}} p_\alpha(A)\,.   
\end{eqnarray}

\paragraph{From precursors to free cumulants.} To illustrate these ideas in the context of free cumulant precursors, let us investigate the large $N$ limit of the first few examples. The precursors $K_n(A)$ up to degree 4 have been written in equation \eqref{eq:K1to4} in terms of traces of powers of $A$. Translating to the normalised moments $m_k(A)=\frac{1}{N}\Tr(A^k)$, we then have
\bea 
\nonumber K_1&\!\!\!=\!\!\!& m_1\, 
\\
\nonumber K_2&\!\!\!=\!\!\!&  \frac{N^2}{N^2-1} ( m_2 -m_1^2)\, 
\\ \label{Kmrel} 
K_3 &\!\!\!=\!\!\!&  \frac{N^4}{(N^2-1)(N^2-4)} (m_3-3 m_2 m_1+2m_1^3)\,,
\\
\nonumber K_4 &\!\!\!=\!\!\!&  \frac{N^4}{(N^2-1)(N^2-4)(N^2-9)}\big(  (N^2+1)(m_4 -4 m_3 m_1-2m_2^2 +10 m_2 m_1^2-5 m_1^4) +5  (m_2-m_1^2)^2\big)\,, 
\eea
where we omitted the $A$-dependence of $K_n$ and $m_n$ to simplify the expressions. (See a few more terms in Appendix \ref{TableKkappam}.) From these formulae, it is easy to take the large $N$ limits of the precursors. We find that they are finite and read
\bea \nonumber
 \lim_{N\to\infty}K_1&\!\!\!=\!\!\!& \kappa_1 := m_1\, 
 \\  \nonumber
\lim_{N\to\infty}K_2&\!\!\!=\!\!\!&  \kappa_2 := m_2 -m_1^2\, \\   \nonumber
\lim_{N\to\infty}K_3 &\!\!\!=\!\!\!& \kappa_3 := m_3-3 m_2 m_1+2m_1^3\,, \\ \nonumber
\lim_{N\to\infty}K_4 &\!\!\!=\!\!\!&  \kappa_4 := m_4 -4 m_3 m_1-2m_2^2 +10 m_2 m_1^2-5 m_1^4\,.    \nonumber
\eea
The attentive reader will have recognised in these limits $\kappa_n$ the free cumulants of degree 1 to 4. We note that up to degree 3, $K_n$ coincides with $\kappa_n$ up to a global $N$-dependent factor: this is a low-degree accident, which stops from $n=4$.

The above result in fact holds for any $n$, namely
\begin{eqnarray}\label{eq:limKn}
    \lim_{N\to\infty} K_n = \kappa_n := \sum_{\beta\vdash n} (-1)^{1+\ell(\beta)}\frac{(n+\ell(\beta)-2)!}{(n-1)!\prod_{k=1}^n \hat\beta_k!}\, m_\beta
\end{eqnarray}
is the free cumulant of degree $n$, justifying the name of \textit{free cumulant precursors} for the functions $K_n$. The second equality, \ie the definition of the free cumulant $\kappa_n$ in terms of the moments,  
results from the inversion of the well-known 
formula of Kreweras\cite{Krw}
\begin{eqnarray}\label{eq:Krewe}
    m_n = \sum_{\alpha\vdash n} 
   \frac{n!}{(n+1-\ell(\alpha))! \prod_{k=1}^n \hat \alpha_k!} \, \prod_{i=1}^{\ell(\alpha)} \kappa_{\alpha_i} 
   \,.
\end{eqnarray}

The fact that $K_n$ tends to $\kappa_n$ when $N\to\infty$ 
 was first established
in \cite{IZ80}, making use of the explicit form of the integral $Z$  as a ratio of determinants and of  the differential equation it satisfies. 
(In this reference, the free cumulants $\kappa_n$ are called ``connected planar correlation functions''.)
It was then rederived using the formalism of the dispersionless Toda hierarchy in \cite{ZJ02, ZJZ03}. A diagrammatic proof of that property may also be devised, along the lines used in \cite{ZJZ03}. 
Independently it was elaborated under the assumption that the matrix $B$ is of rank 1, or more
generally of rank ${\rm o}(N)$, by several authors, using a variety of approaches \cite{ZJ99, Collins03, GM, Tan, CollinsSn07}, see also \cite{PB}.
 In the large $N$ limit, under that finite-rank assumption in $B$, the dominant terms
in the expansion of $\log Z$ indeed contain only single traces of $B$. 
The merit of these alternative approaches is that they do not make use of the explicit form of the integral \eqref{hciz} and thus extend to other
cases, like the integration over the orthogonal group, for which no compact determinantal form exists. 
(This had been anticipated in an earlier work with application to spin glasses \cite{MPR}.)

\paragraph{Large $\boldsymbol{N}$ limit of generalised precursors.} Let us now turn to the generalised precursors $K_\alpha$ (associated with arbitrary partitions $\alpha$). For degree 2 and 3, they were expressed in terms of traces in equation \eqref{eq:GenK23}. Switching to moments, we find
\begin{subequations}
\begin{eqnarray}
   \hspace{-20pt} K_{(1,1)} &\!=\!& \frac{1}{(N^2-1)}\bigl( N^2m_1^2 - m_2 \bigr)\,, \label{eq:K11}\\
    \hspace{-20pt} K_{(2,1)} &\!=\!& \frac{N^2}{(N^2-1)(N^2-4)} \bigl((N^2+2)\,m_1 m_2 - N^2 m_1^3 - 2 m_3\bigr)\,, \\
    \hspace{-20pt} K_{(1,1,1)} &\!=\!& \frac{1}{(N^2-1)(N^2-4)} \bigl( N^2(N^2-2)m_1^3 - 3N^2 m_1 m_2 + 4m_3 \bigr)\,.
\end{eqnarray}
\end{subequations}
The large $N$ limit can then be taken readily, yielding again a finite result in terms of free cumulants:
\begin{equation*}
    \lim_{N\to\infty} K_{(1,1)} = m_1^2 = \kappa_1^2\,, \qquad \lim_{N\to\infty} K_{(2,1)} = m_1(m_2-m_1^2) = \kappa_1 \kappa_2\,,\qquad \lim_{N\to\infty} K_{(1,1,1)} = m_1^3 = \kappa_1^3\,.
\end{equation*}
This generalises to higher degrees as the following ``multiplicative'' behavior:
\begin{eqnarray}\label{eq:MultKappa}
    \lim_{N\to\infty} K_\alpha = \kappa_\alpha := \prod_{i=1}^{\ell(\alpha)} \kappa_{\alpha_i}\,.
\end{eqnarray}
This result can be proven by exploiting the well-known fact that the logarithm $\log Z(A,B\,; z)$ of the HCIZ integral is of order ${\rm O}(N^2)$ in the large $N$ limit. Due to the slightly technical nature of this proof, we present it in Appendix \ref{App:MultK}. We note that this multiplicative behavior only holds in the large $N$ limit: at finite $N$, $K_\alpha$ differs from $\prod_{i=1}^{\ell(\alpha)} K_{\alpha_i}$.

Further results, such as the expression of the first generalised precursors in terms of the free cumulants, are presented in Appendix \ref{TableKkappam}.


\subsection[$1/N^2$ corrections and Hurwitz numbers]{$\boldsymbol{1/N^2}$ corrections and Hurwitz numbers}
\label{sec:Hur}

In the previous subsection, we have explained that the large $N$ limit of the generalised precursors $K_\alpha$ is directly related to free cumulants -- see equations \eqref{eq:limKn} and \eqref{eq:MultKappa}. It is then natural to search for the $1/N$ corrections to these results. To answer that question, we first need to discuss the relation between the HCIZ integral and the so-called monotone Hurwitz numbers. 

\paragraph{Monotone Hurwitz numbers.} Recall that the large $N$ limit of the matrices $A$ and $B$ is taken keeping the moments $m_k(A)$ and $m_k(B)$ finite. Quite remarkably, the large $N$ expansion of the HCIZ integral $Z(A,B\,; z)$ in terms of these moments can be seen as the generating function for combinatorial objects called monotone Hurwitz numbers~\cite{GGPN}.

\begin{theorem}\label{thm:Hur}\hspace{-2pt}{\rm{\cite{GGPN}}}\ 
    The coefficients in the power series expansion
    \begin{equation}\label{eq:PowerZ}
        Z(A,B\,; z) = \sum_{n=0}^{+\infty} Z_n(A,B)\,z^n
    \end{equation}
    of the HCIZ integral admits the following large $N$ expansion:\footnote{As explained earlier, the power series expansion \eqref{eq:PowerZ} of the HCIZ integral is absolutely convergent for $z\in\C$. The $1/N$-expansion \eqref{eq:TopZ} of the coefficient of $z^n$ is also convergent, but with a radius of convergence which decreases with $n$ and approaches zero as $n\to\infty$. As such, the large $N$ expansion of the HCIZ integral itself is not convergent and should be understood as an asymptotic series. We will not require such considerations in this work as we will only manipulate the expansion of each $z^n$-coefficient separately.}
    \begin{equation}\label{eq:TopZ}
        Z_n(A,B) =\frac{1}{n!} \sum_{\alpha,\beta\vdash n}(-1)^{\ell(\alpha)+\ell(\beta)}\sum_{g \geq g_{\alpha,\beta}}^\infty N^{2-2g} 
H_g^{\bullet\hspace{1pt},\hspace{1pt}\leq}(\alpha,\beta) \, m_\alpha(A)\, m_\beta(B)\,,
    \end{equation}
where $g_{\alpha\beta}=\text{max}(1-\ell(\alpha),1-\ell(\beta))$ and $H_g^{\bullet\hspace{1pt},\hspace{1pt}\leq}(\alpha,\beta)$ is the \emph{disconnected monotone double Hurwitz number} of genus $g$ and cycle types $\alpha,\beta \vdash n$. The latter is defined as the number of tuples $(\sigma_\alpha,\sigma_\beta,\tau_1,\dots,\tau_r)$ of permutations in $S_n$ such that:
\begin{enumerate}[(i)]
    \item $\sigma_\alpha \sigma_\beta \tau_1 \cdots \tau_r = \text{Id}$ ;
    \item $r=2g-2+\ell(\alpha)+\ell(\beta)$ ;
    \item $\sigma_\alpha$ and $\sigma_\beta$ have cycle types $[\sigma_\alpha]=\alpha$ and $[\sigma_\beta]=\beta$ ;
    \item $\tau_i$ are transpositions ;
    \item writing $\tau_i=(a_i\,b_i)$ with $a_i < b_i$, we have $b_1 \leq \cdots \leq b_r$.
\end{enumerate}
\end{theorem}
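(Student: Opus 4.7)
The plan is to derive the claimed topological expansion by feeding the Matsumoto--Novak combinatorial expansion of the Weingarten function into the Weingarten--Samuel--Collins expression \eqref{Wgexpand} of the HCIZ integral. First, I would read off the coefficient of $z^n$ in \eqref{Wgexpand}, rewrite each Newton polynomial as $p_\gamma(X)=N^{\ell(\gamma)}m_\gamma(X)$, and regroup the double sum by the cycle types $\alpha=[\sigma]$, $\beta=[\tau]$. This yields
\[
Z_n(A,B)=\frac{1}{n!}\sum_{\alpha,\beta\vdash n} N^{n+\ell(\alpha)+\ell(\beta)}\,m_\alpha(A)\,m_\beta(B)\,W(\alpha,\beta),\qquad W(\alpha,\beta):=\sum_{\substack{\sigma\in\Cl_\alpha\\ \tau\in\Cl_\beta}}\Wg([\sigma\tau^{-1}]),
\]
reducing the problem to an analysis of $W(\alpha,\beta)$.

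The key combinatorial input is the Matsumoto--Novak formula
\[
\Wg(\pi)=\sum_{r\geq 0}(-1)^r\,N^{-(n+r)}\,\vec{H}^{\leq}_r(\pi),
\]
where $\vec{H}^{\leq}_r(\pi)$ is the number of weakly monotone sequences $(\tau_1,\ldots,\tau_r)$ of transpositions with product $\pi$. Substituting this into $W(\alpha,\beta)$ and performing the cycle-type preserving change of variables $(\sigma_\alpha,\sigma_\beta):=(\sigma,\tau^{-1})$, together with a reversal of the transposition sequence (which, being an obvious bijection on monotone factorizations, preserves the count), one realises the innermost triple sum as exactly the number of tuples $(\sigma_\alpha,\sigma_\beta,\tau_1,\ldots,\tau_r)$ obeying conditions (i), (iii), (iv) and (v) of the theorem. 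Trading the length $r$ for the genus $g$ via the defining relation $r=2g-2+\ell(\alpha)+\ell(\beta)$ (condition (ii)) groups these tuples by $g$ and identifies the sum with $H^{\bullet,\leq}_g(\alpha,\beta)$. Assembling the remaining prefactors, one finds $N^{n+\ell(\alpha)+\ell(\beta)}\cdot N^{-(n+r)}=N^{\ell(\alpha)+\ell(\beta)-r}=N^{2-2g}$ and sign $(-1)^r=(-1)^{\ell(\alpha)+\ell(\beta)}$, matching \eqref{eq:TopZ}. The lower bound $g\geq g_{\alpha,\beta}$ then follows from the vanishing of $\vec{H}^{\leq}_r(\pi)$ whenever $r<n-\ell(\pi)$, combined with a short combinatorial check pinning down the maximum of $\ell(\sigma_\alpha\sigma_\beta)$ as $\sigma_\alpha,\sigma_\beta$ range over their respective conjugacy classes.

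The main obstacle is the Matsumoto--Novak formula itself, which I would treat as a black-box from prior literature: its proof realises $\Wg$ as the inverse of a convolution by Jucys--Murphy elements in the centre of $\C[S_n]$, whose Neumann-series expansion is precisely the generating function for weakly monotone walks on the Cayley graph of $S_n$ with transposition generators. An alternative, entirely representation-theoretic route (closer to the original derivation in \cite{GGPN}) would start instead from the Schur expansion \eqref{Z2ss}, expand $1/C_\lambda=\prod_{\square\in\lambda}(1+c(\square)/N)^{-1}$ as a formal $1/N$-series, and identify the resulting symmetric functions of contents with monotone Hurwitz generating series through the classical Jucys--Murphy correspondence; but this encodes essentially the same combinatorial content in a different language.
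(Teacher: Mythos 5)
The paper offers no proof of this statement: Theorem~\ref{thm:Hur} is imported verbatim from~\cite{GGPN} and used as external input, so there is nothing internal to compare against. Your derivation is the standard one (essentially that of~\cite{GGPN} and of Matsumoto--Novak): substitute the monotone-walk expansion of $\Wg$ into \eqref{Wgexpand}, regroup by cycle types, and bookkeep powers of $N$. The skeleton is sound, and the exponent and sign bookkeeping check out, as does the bound $g\geq g_{\alpha\beta}$ via the subadditivity of reflection length $n-\ell([\pi])$ under products.

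One step is mis-justified, although the conclusion survives. You must pass from $\sum_{\sigma\in\Cl_\alpha,\,\tau\in\Cl_\beta}\vec{H}^{\leq}_r(\sigma\tau^{-1})$ to the count of tuples with $\sigma_\alpha\sigma_\beta\tau_1\cdots\tau_r=\mathrm{Id}$, i.e.\ to $\sum\vec{H}^{\leq}_r(\sigma_\beta^{-1}\sigma_\alpha^{-1})$, where $\vec{H}^{\leq}_r(\pi)$ denotes the number of monotone length-$r$ factorisations of $\pi$. Reversing a sequence $(\tau_1,\dots,\tau_r)\mapsto(\tau_r,\dots,\tau_1)$ turns a weakly increasing sequence of maxima $b_1\leq\cdots\leq b_r$ into a weakly \emph{decreasing} one, so it is not a bijection from monotone factorisations of $\pi$ onto monotone factorisations of $\pi^{-1}$; the ``obvious bijection'' you invoke does not exist as stated. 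The correct justification is that $\pi\mapsto\vec{H}^{\leq}_r(\pi)$ is a class function: the sum of all monotone products of $r$ transpositions equals the complete homogeneous symmetric polynomial $h_r(J_2,\dots,J_n)$ in the (commuting) Jucys--Murphy elements, which is central in $\C[S_n]$. Since $\sigma\tau^{-1}$, $\tau^{-1}\sigma$ and their inverses are all conjugate, the two class sums agree after the change of variables $\sigma\mapsto\sigma^{-1}$ on $\Cl_\alpha$. With that substitution, and granting the Matsumoto--Novak expansion of $\Wg$ as a black box (legitimate here, since the theorem itself is a citation), your argument is complete.
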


\begin{remark}\label{Rmk:Hur}
    The various adjectives used to designate the numbers $H_g^{\bullet\hspace{1pt},\hspace{1pt}\leq}(\alpha,\beta)$ reflect the existence of many different types of Hurwitz numbers considered in the literature.

    The qualifier \textit{double} refers to the fact that $H_g^{\bullet\hspace{1pt},\hspace{1pt}\leq}(\alpha,\beta)$ depends on two partitions $(\alpha,\beta)$. There also exist simple, single or multiple versions of these numbers, depending on 0, 1 or an arbitrary number of such partitions (and which are defined by removing or adding permutations of the corresponding cycle types in the above definition). In this paper, we will mostly be concerned with double Hurwitz numbers. 

    The Hurwitz numbers appearing in Theorem \ref{thm:Hur} are called \textit{monotone} (or sometimes weakly monotone) due to the condition $b_1 \leq \cdots \leq b_r$ imposed in point (v). Relaxing this condition yields the \textit{standard} Hurwitz numbers $H_g^{\bullet}(\alpha,\beta)$, which were extensively studied in the literature but will not be of concern to us for this paper. However, we will later encounter their \textit{strictly monotone} variants $H_g^{\bullet\hspace{1pt},\hspace{1pt}<}(\alpha,\beta)$, defined by switching to strict inequalities $b_1 < \cdots < b_r$ in the condition (v).

    In Theorem \ref{thm:Hur}, we have focused on the algebraic characterisation of Hurwitz numbers, counting solutions of equations in symmetric groups. They also admit a geometric interpretation, counting ramified coverings of the Riemann sphere with $r$ simple branch points and two branch points of cycle types $\alpha$ and $\beta$ (together with appropriate conditions on labelings of sheets to distinguish between standard, monotone and strictly monotone Hurwitz numbers). In this geometric context, the integer $g$ corresponds to the \textit{genus} of the covering surface and the large $N$ expansion \eqref{eq:TopZ} is then also called a \textit{topological expansion}.

    The symbol $\bullet$ in the notations $H_g^{\bullet}(\alpha,\beta)$, $H_g^{\bullet\hspace{1pt},\hspace{1pt}\leq}(\alpha,\beta)$ and $H_g^{\bullet\hspace{1pt},\hspace{1pt}<}(\alpha,\beta)$ indicates that we considered \textit{disconnected} Hurwitz numbers, which count coverings of the sphere by potentially disconnected surfaces. In Theorem \ref{thm:Hur}, the disconnected nature of $H_g^{\bullet\hspace{1pt},\hspace{1pt}\leq}(\alpha,\beta)$ is the reason why the topological expansion \eqref{eq:TopZ} also contains terms of ``negative genus'' $g<0$ (since the minimum genus $g_{\alpha\beta}$ can generally be negative). This is a slight abuse of language, where the genus of a disconnected surface $S=\bigsqcup_i\, S_i$ is defined through the additivity of the Euler characteristic $2-2g(S)=\sum_i (2-2g(S_i))$, allowing for cases with negative $g(S)$. We note that $H_g^{\bullet\hspace{1pt},\hspace{1pt}\leq}(\alpha,\beta)$ vanishes if $g<g_{\alpha\beta}$, translating the fact that covers with a prescribed ramification profile have a minimal genus.

    It is quite natural to also define \textit{connected} Hurwitz numbers $H_g(\alpha,\beta)$, $H_g^{\leq}(\alpha,\beta)$ and $H_g^{<}(\alpha,\beta)$, by adding the condition of connectivity of the covering surface. In the algebraic language, this amounts to requiring that the subgroup of $S_n$ generated by $(\sigma_\alpha,\sigma_\beta,\tau_1,\dots,\tau_r)$ acts transitively on $\lbrace 1,\dots,n\rbrace$. The connected monotone Hurwitz numbers $H_g^{\leq}(\alpha,\beta)$ are then encoded in the topological expansion of the HCIZ free energy, \ie the logarithm of $Z(A,B\,; z)$. The fact that these numbers are non-zero only for non-negative genus is then equivalent to the existence of a finite limit of $\frac{1}{N^2} \log Z(A,B\,; z)$ when $N\to\infty$, as mentioned in the previous subsection.

    To simplify the formulations, we will often omit the adjectives double and disconnected when referring to Hurwitz numbers, since their corresponding variants do not play a role in this paper. In contrast, we will mostly keep track of the monotone or strictly monotone nature of these numbers, as we will encounter them both in what follows.

    Finally, we note that Hurwitz numbers are sometimes defined in the literature with a prefactor $1/n!$ (dividing by the size of the symmetric group $S_n$), in which case they are generally rational numbers. In this paper, we follow an alternative convention with no such prefactor, such that Hurwitz numbers are always non-negative integers.
\end{remark}

The monotone and strictly monotone Hurwitz numbers $H_g^{\bullet\hspace{1pt},\hspace{1pt}\leq}(\alpha,\beta)$ and $H_g^{\bullet\hspace{1pt},\hspace{1pt}<}(\alpha,\beta)$ will play an important role in this subsection. We note that they are symmetric with respect to the exchange of $\alpha$ and $\beta$. For explicitness, we tabulate the monotone Hurwitz numbers (or rather their generating functions) for partitions up to size 4 in Appendix \ref{HurDisconnect}. 

\paragraph{Topological expansion of generalised precursors.} Recall the definition \eqref{eq:Kalpha2} of the generalised precursor $K_\alpha(A)$ from $Z(A,B\,; z)$. Since $m_\alpha(B)=N^{-\ell(\alpha)}p_\alpha(B)$, it is easy to extract the topological expansion of $K_\alpha(A)$ from that \eqref{eq:PowerZ}--\eqref{eq:TopZ} of the HCIZ integral. This gives the following proposition and proves the point 5 in our main Theorem \ref{thm:summary}.

\begin{proposition}
    The generalised precursor $K_\alpha(A)$ admits the 
    topological expansion
    \begin{equation}\label{eq:TopK}
        K_\alpha(A) = \frac{1}{|\mathrm{Cl}_\alpha|} \sum_{g \geq 1-\ell(\alpha)}  N^{2(1-\ell(\alpha)-g)} \left(\sum_{\beta \vdash n} (-1)^{\ell(\alpha)+\ell(\beta)} H_g^{\bullet\hspace{1pt},\hspace{1pt}\leq}(\alpha,\beta)\,  m_\beta(A) \right)\,,
    \end{equation}
    in terms of monotone Hurwitz numbers.\footnote{Note that, as in Theorem \ref{thm:Hur}, we could have restricted the genus expansion to $g\geq g_{\alpha,\beta} = \text{max}\bigl(1-\ell(\alpha),1-\ell(\beta)\bigr)$. For simplicity, we chose here to use a common bound for $g$, independent of $\beta$. When $\beta$ is such that $\ell(\beta)<\ell(\alpha)$, the expansion of $[m_\beta(A)]K_\alpha(A)$ can thus be refined by removing the terms with $1-\ell(\alpha) \leq g < 1-\ell(\beta)$.}
\end{proposition}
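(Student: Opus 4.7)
The strategy is a direct computation: combine the defining equation \eqref{eq:Kalpha2} of $K_\alpha$ with the topological expansion of Theorem \ref{thm:Hur} and match coefficients. There is no genuinely hard step here, just careful bookkeeping of powers of $N$, signs, and the conversion between the $m_\beta$ and $p_\beta$ bases.

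The plan is first to recall that $K_\alpha(A)$ is, up to a known prefactor, the coefficient of $z^n p_\alpha(B)$ in $Z(A,B;z)$. I would then substitute the formula of Theorem \ref{thm:Hur} for $Z_n(A,B)$, where the $B$-dependence is initially encoded in $m_\beta(B)$ rather than $p_\beta(B)$. Using the identity $m_\beta(B) = N^{-\ell(\beta)}p_\beta(B)$, each term in the sum over $\beta$ contributes a factor $p_\beta(B)$ with coefficient proportional to $N^{2-2g-\ell(\beta)}$. Since the Newton polynomials $\lbrace p_\beta(B)\rbrace_{\beta\vdash n}$ form a basis for $n\leq N$, extracting the coefficient of $p_\alpha(B)$ simply selects the term $\beta = \alpha$ on the $B$-side of the sum in \eqref{eq:TopZ}.

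Collecting everything, the extraction produces
\begin{equation*}
[z^n p_\alpha(B)]\, Z(A,B;z) = \frac{N^{-\ell(\alpha)}}{n!}\sum_{\beta\vdash n}(-1)^{\ell(\alpha)+\ell(\beta)}\sum_g N^{2-2g}\, H_g^{\bullet\hspace{1pt},\hspace{1pt}\leq}(\alpha,\beta)\, m_\beta(A),
\end{equation*}
where I have used the symmetry $H_g^{\bullet,\leq}(\alpha,\beta)=H_g^{\bullet,\leq}(\beta,\alpha)$ to relabel, and renamed the summed partition on the $A$-side as $\beta$. Multiplying by the prefactor $n!/(N^{\ell(\alpha)}|\mathrm{Cl}_\alpha|)$ converts $N^{-\ell(\alpha)}$ to $N^{-2\ell(\alpha)}$ and combines with $N^{2-2g}$ to give exactly $N^{2(1-\ell(\alpha)-g)}$, reproducing the claimed expression.

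The only point that needs a brief comment is the range of the genus sum. Theorem \ref{thm:Hur} restricts $g$ to $g \geq g_{\alpha,\beta}=\max(1-\ell(\alpha),1-\ell(\beta))$. Stating the expansion with the uniform lower bound $g\geq 1-\ell(\alpha)$ is legitimate because $H_g^{\bullet\hspace{1pt},\hspace{1pt}\leq}(\alpha,\beta)$ vanishes for $g<g_{\alpha,\beta}$, so the additional terms with $1-\ell(\alpha)\leq g < 1-\ell(\beta)$ (possible when $\ell(\beta)<\ell(\alpha)$) contribute nothing. This is precisely the content of the footnote attached to the proposition, and it is the only subtlety of the argument; everything else is a direct consequence of the bilinear structure of \eqref{eq:TopZ} and the definition of $K_\alpha$.
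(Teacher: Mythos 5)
Your proof is correct and follows exactly the route the paper takes: the paper's own justification is the single observation that, since $m_\alpha(B)=N^{-\ell(\alpha)}p_\alpha(B)$, the expansion follows by extracting the coefficient of $z^n p_\alpha(B)$ from Theorem \ref{thm:Hur} using the definition \eqref{eq:Kalpha2}, which is precisely your computation. Your bookkeeping of the powers of $N$, the use of the symmetry of $H_g^{\bullet\hspace{1pt},\hspace{1pt}\leq}$, and the remark on the genus range all check out.
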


We note that the powers of $N$ in the expansion \eqref{eq:TopK} are always even and non-positive. The latter property is consistent with the finiteness of $K_\alpha(A)$ in the large $N$ limit. In fact, the above expansion gives an alternative expression for this limit:
\begin{equation}\label{eq:LimKalphaH}
    \lim_{N\to\infty} K_\alpha(A) = \frac{1}{|\mathrm{Cl}_\alpha|} \sum_{\beta \vdash n} (-1)^{\ell(\alpha)+\ell(\beta)} H_{1-\ell(\alpha)}^{\bullet\hspace{1pt},\hspace{1pt}\leq}(\alpha,\beta)\,  m_\beta(A)\,.
\end{equation}
Of particular interest is the case $[\alpha]=n$, \ie that of the precursor $K_n(A)$:
\begin{equation}
    \lim_{N\to\infty} K_n(A) = \frac{1}{(n-1)!} \sum_{\beta \vdash n} (-1)^{1+\ell(\beta)} H_{0}^{\bullet\hspace{1pt},\hspace{1pt}\leq}([n],\beta)\,  m_\beta(A)\,.
\end{equation}
Recall from equation \eqref{eq:limKn} that we identified this limit with the free cumulant $\kappa_n$. The right-hand side of the above equation then provides an alternative expression for this free cumulant in terms of monotone Hurwitz numbers $ H_{0}^{\bullet\hspace{1pt},\hspace{1pt}\leq}([n],\beta)$, which already appeared in the work~\cite{N15}. Comparing this expression with the one in \eqref{eq:limKn}, we then deduce the following formula:
\begin{equation}\label{eq:H0n}
    H_{0}^{\bullet\hspace{1pt},\hspace{1pt}\leq}([n],\beta) = \frac{(n+\ell(\beta)-2)!}{\prod_{k=1}^n \hat\beta_k!}\,.
\end{equation}
We note that these disconnected monotone Hurwitz numbers $H_{0}^{\bullet\hspace{1pt},\hspace{1pt}\leq}([n],\beta)$ in fact also coincide with the connected ones $H_{0}^{\leq}([n],\beta)$.\footnote{Indeed, $H_{0}^{\bullet\hspace{1pt},\hspace{1pt}\leq}([n],\beta)$ -- resp. $H_{0}^{\leq}([n],\beta)$ -- is extracted from the coefficient of $N^2 m_n(A)m_\beta(B)$ in $Z(A,B\,; z)$ -- resp. $\log Z(A,B\,; z)$. Since we focus on a ``single trace'' $m_n(A)=\frac{1}{N}\Tr(A^n)$ of $A$, the passage to the logarithm does not affect this coefficient and thus these particular disconnected and connected Hurwitz numbers agree.} 
To the best of our knowledge, the expression \eqref{eq:H0n}  of that monotone Hurwitz number is a new result: it would be interesting to obtain a direct combinatorial proof of this formula. Similarly, one can extract the monotone Hurwitz numbers $H_{1-\ell(\alpha)}^{\bullet\hspace{1pt},\hspace{1pt}\leq}(\alpha,\beta)$ by comparing the expressions \eqref{eq:MultKappa} and \eqref{eq:LimKalphaH} of the large $N$ limit of the generalised precursor $K_\alpha$.

Having identified the dominant term ${\rm O}(N^0)$ in the topological expansion \eqref{eq:TopK} of $K_\alpha$ with $\kappa_\alpha$, the remaining terms can thus be seen as explicit $1/N^2$ corrections to the limits \eqref{eq:limKn} and \eqref{eq:MultKappa}. 

\paragraph{The generalised moment-cumulant relation.} Recall the Kreweras relation \eqref{eq:Krewe} expressing moments in terms of free cumulants. This can be compactly rewritten as
\begin{eqnarray}\label{eq:MomCumRel}
    m_n = \sum_{\alpha\vdash n} \text{NC}(\alpha)\,\kappa_\alpha\,,
\end{eqnarray}
where $\text{NC}(\alpha)$ denotes the number of non-crossing set-partitions of $\lbrace 1,\dots,n\rbrace$ formed by $\ell(\alpha)$ subsets of sizes $\alpha_1,\dots,\alpha_{\ell(\alpha)}$. The formula \eqref{eq:MomCumRel} is called the \textit{moment-cumulant relation} of free probability.

It is natural to wonder if there exists a finite $N$ version of this relation, involving the generalised precursors $K_\alpha$ instead of their large $N$ limits $\kappa_\alpha$. Such a generalisation can be derived formally starting from equation \eqref{eq:Kp1} and making use of orthogonality relations \eqref{eq:OrthoChi} of symmetric characters:
\begin{equation}
    m_\alpha = \sum_{\beta,\lambda\vdash n} N^{\ell(\beta)-\ell(\alpha)} \frac{C_\lambda \, | \Cl_\beta|}{n!} \hat\chi_\lambda(\alpha)\hat\chi_\lambda(\beta) K_\beta\,.
\end{equation}
The complicated $N$-dependence of this identity is hidden in the coefficient $C_\lambda$, as defined in \eqref{eq:dC}. In particular, it is not obvious how this relation reduces to the moment-cumulant one \eqref{eq:MomCumRel} in the large $N$ limit, nor how to derive the corresponding $1/N$ corrections. Remarkably, this can be done by inverting the topological expansion \eqref{eq:TopK} in terms of \textit{strictly monotone} Hurwitz number.

\begin{proposition}\label{Prop:TopMK}
    Let $n\geq 1$ and $\alpha\vdash n$. We have
    \begin{equation}\label{eq:TopMK}
        m_\alpha(A) = \frac{1}{|\mathrm{Cl}_\alpha|} \sum_{g \geq 1-\ell(\alpha)}  N^{2(1-\ell(\alpha)-g)} \left(\sum_{\beta \vdash n} H_g^{\bullet\hspace{1pt},\hspace{1pt}<}(\alpha,\beta)\,  K_\beta(A) \right)\,,
    \end{equation}
    where $H_g^{\bullet\hspace{1pt},\hspace{1pt}<}(\alpha,\beta)$ is the \emph{disconnected strictly monotone double Hurwitz number} of genus $g$ and cycle types $\alpha,\beta \vdash n$, as defined above in Remark \ref{Rmk:Hur}.
\end{proposition}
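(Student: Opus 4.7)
The plan is to invert the Schur-basis expansion \eqref{eq:KSchur} of $K_\alpha$ via character orthogonality, and then identify the resulting transition coefficients with the strictly monotone Hurwitz numbers by means of the Jucys--Murphy elements of $\mathbb{C}[S_n]$. Since $n\leq N$, the constraint $\ell(\lambda)\leq N$ is automatic for every $\lambda\vdash n$. Setting $\widetilde K_\alpha := N^{\ell(\alpha)}K_\alpha$, equation \eqref{eq:KSchur} reads $\widetilde K_\alpha = \sum_{\lambda\vdash n}\hat\chi_\lambda(\alpha)\,C_\lambda^{-1}\,s_\lambda(A)$, and the symmetric-group orthogonality $\sum_{\alpha\vdash n}|\mathrm{Cl}_\alpha|\,\hat\chi_\lambda(\alpha)\,\hat\chi_\mu(\alpha) = n!\,\delta_{\lambda\mu}$ inverts this to
\begin{equation*}
s_\mu(A) = \frac{C_\mu}{n!}\sum_{\alpha\vdash n}|\mathrm{Cl}_\alpha|\,\hat\chi_\mu(\alpha)\,\widetilde K_\alpha(A)\,.
\end{equation*}
Substituting into the Frobenius--Schur identity $p_\gamma(A) = \sum_{\mu\vdash n}\hat\chi_\mu(\gamma)\,s_\mu(A)$ and dividing by $N^{\ell(\gamma)}$ produces
\begin{equation*}
m_\gamma(A) = \sum_{\alpha\vdash n}\frac{|\mathrm{Cl}_\alpha|}{n!}\,N^{\ell(\alpha)-\ell(\gamma)}\,K_\alpha(A)\,\Xi(\gamma,\alpha;N)\,,\qquad \Xi(\gamma,\alpha;N) := \sum_{\mu\vdash n}C_\mu\,\hat\chi_\mu(\gamma)\,\hat\chi_\mu(\alpha)\,.
\end{equation*}
The problem is thus reduced to identifying the central character sum $\Xi(\gamma,\alpha;N)$ with the desired topological generating function.

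For this I would use the central element $\Phi_N := \prod_{k=1}^n(1+J_k/N) \in Z(\mathbb{C}[S_n])$, where $J_k = \sum_{a<k}(a,k)$ is the $k$-th Jucys--Murphy element. Since symmetric polynomials in the $J_k$ act on each isotypic subspace by the corresponding symmetric polynomials in the box contents, $\Phi_N$ acts on irrep $\mu$ by the scalar $\prod_{\square\in\mu}(1+c(\square)/N) = C_\mu$. On the other hand, expanding
\begin{equation*}
\Phi_N = \sum_{r\geq 0}N^{-r}\,e_r(J_1,\dots,J_n) = \sum_{r\geq 0}N^{-r}\sum_{\substack{1\leq b_1<b_2<\dots<b_r\leq n\\ 1\leq a_i<b_i}}(a_1\,b_1)(a_2\,b_2)\cdots(a_r\,b_r)
\end{equation*}
exhibits precisely the strictly monotone transposition sequences of Theorem \ref{thm:Hur}. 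Multiplying by $C_\gamma = \sum_{\rho\in\mathrm{Cl}_\gamma}\rho$ and reading off the coefficient of a fixed $\sigma\in\mathrm{Cl}_\alpha$ therefore gives $\sum_r N^{-r}$ times the number of tuples $(\tau_1,\dots,\tau_r,\rho)$ with $\rho\in\mathrm{Cl}_\gamma$, $(\tau_i)$ strictly monotone transpositions, and $\tau_1\cdots\tau_r\rho=\sigma$. Summing this over $\sigma\in\mathrm{Cl}_\alpha$ and using the Riemann--Hurwitz relation $r = 2g-2+\ell(\alpha)+\ell(\gamma)$ produces exactly $H_g^{\bullet\hspace{1pt},\hspace{1pt}<}(\gamma,\alpha)$.

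On the representation side, $\Phi_N\,C_\gamma$ acts on the $\mu$-isotypic subspace by $C_\mu\cdot|\mathrm{Cl}_\gamma|\hat\chi_\mu(\gamma)/\hat d_\mu$, so the standard character-inversion formula for central elements identifies the coefficient of the class sum $C_\alpha$ in $\Phi_N C_\gamma$ as $\frac{|\mathrm{Cl}_\gamma|}{n!}\,\Xi(\gamma,\alpha;N)$. Equating with the combinatorial count above yields
\begin{equation*}
\Xi(\gamma,\alpha;N) = \frac{n!}{|\mathrm{Cl}_\alpha||\mathrm{Cl}_\gamma|}\sum_{g\geq 1-\ell(\gamma)}N^{2-2g-\ell(\alpha)-\ell(\gamma)}\,H_g^{\bullet\hspace{1pt},\hspace{1pt}<}(\gamma,\alpha)\,,
\end{equation*}
the lower bound on $g$ being harmless since $H^{\bullet\hspace{1pt},\hspace{1pt}<}_g$ vanishes below the minimum genus. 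Plugging this back into the expression for $m_\gamma(A)$, the factors $|\mathrm{Cl}_\alpha|/n!$ cancel and the powers of $N$ collapse to $N^{2(1-\ell(\gamma)-g)}$, yielding \eqref{eq:TopMK}. The main obstacle is the Jucys--Murphy identification of $\Xi(\gamma,\alpha;N)$ as a strictly monotone Hurwitz generating function; the remaining steps are character-theoretic bookkeeping. An equivalent route would be to invert the weakly monotone topological expansion \eqref{eq:TopK} directly, using the algebraic duality $C_\mu\cdot C_\mu^{-1}=1$ which encodes, at the level of generating functions on $Z(\mathbb{C}[S_n])$, the classical duality between weakly and strictly monotone factorisations.
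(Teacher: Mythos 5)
Your proof is correct, but it follows a genuinely different route from the paper's. The paper starts from the already-established weakly monotone expansion \eqref{eq:TopK} of $K_\alpha$ in terms of moments, packages the weakly and strictly monotone Hurwitz numbers into the generating functions \eqref{eq:GenHm} and \eqref{eq:GenHsm}, and then simply cites the known fact that these two matrices are inverse to one another \eqref{eq:InvH} in order to invert the relation. You instead work directly from the Schur-basis expansion \eqref{eq:KSchur}: character orthogonality inverts it, Frobenius--Schur converts back to Newton polynomials, and the resulting transition coefficient $\Xi(\gamma,\alpha;N)=\sum_\mu C_\mu\,\hat\chi_\mu(\gamma)\hat\chi_\mu(\alpha)$ is identified with the strictly monotone generating function by realising $C_\mu$ as the eigenvalue of the central element $\prod_k(1+J_k/N)$ and expanding the latter in elementary symmetric polynomials of the Jucys--Murphy elements. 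I checked the bookkeeping: the coefficient-of-class-sum computation, the matching of $r=2g-2+\ell(\alpha)+\ell(\gamma)$ with condition (ii) of Theorem \ref{thm:Hur}, and the final cancellation of the $|\Cl_\alpha|/n!$ factors all work out, and the symmetry $H_g^{\bullet\hspace{1pt},\hspace{1pt}<}(\gamma,\alpha)=H_g^{\bullet\hspace{1pt},\hspace{1pt}<}(\alpha,\gamma)$ puts the result in the stated form. What your approach buys is self-containment: you effectively re-prove the inversion lemma that the paper imports from the literature (the duality between weakly and strictly monotone factorisations being the identity $C_\mu\cdot C_\mu^{-1}=1$ in the centre of $\C[S_n]$, as you note), at the cost of invoking the Jucys--Murphy machinery. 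The paper's route is shorter but opaque without the external reference. One cosmetic point: your use of $C_\gamma$ for the class sum collides with the paper's use of $C_\lambda$ for the content product $\prod_{\square\in\lambda}(1+c(\square)/N)$; since both appear in the same computation, you should choose a different symbol for the class sum.
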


\begin{remark}
    For a given choice of $\alpha,\beta\vdash n$, the strictly monotone Hurwitz numbers $H_g^{\bullet\hspace{1pt},\hspace{1pt}<}(\alpha,\beta)$ vanish for genera such that $2g>n+1-\ell(\alpha)-\ell(\beta)$, so that the expansion \eqref{eq:TopMK} is finite. 
\end{remark}

\begin{proof}
    Let us define the series
    \begin{equation}\label{eq:GenHsm}
        H^{\bullet\hspace{1pt},\hspace{1pt}<}(\alpha,\beta) = \frac{1}{|\Cl_\alpha|} \sum_{g \geq g_{\alpha\beta}} N^{2-2g-\ell(\alpha)-\ell(\beta)} H_g^{\bullet\hspace{1pt},\hspace{1pt}<}(\alpha,\beta)
    \end{equation}
    and
    \begin{equation}\label{eq:GenHm}
        H^{\bullet\hspace{1pt},\hspace{1pt}\leq}(\alpha,\beta) = \frac{1}{|\Cl_\alpha|} \sum_{g \geq g_{\alpha\beta}} (-N)^{2-2g-\ell(\alpha)-\ell(\beta)} H_g^{\bullet\hspace{1pt},\hspace{1pt}\leq}(\alpha,\beta)\,.
    \end{equation}
    These are generating functions for the strictly monotone and monotone Hurwitz numbers of type $(\alpha,\beta)$, with expansion parameter $1/N$ (we note that the power $r=2g-2+\ell(\alpha)+\ell(\beta)$ of $1/N$ appearing in these series keeps track of the number of transpositions in the algebraic definition of Hurwitz numbers in Theorem \ref{thm:Hur} and is thus a natural alternative to the genus). In terms of the monotone generating function, the equation \eqref{eq:TopK} is rewritten as
    \begin{equation}\label{eq:Km}
        K_\alpha = \sum_{\beta\vdash n} N^{\ell(\beta)-\ell(\alpha)} H^{\bullet\hspace{1pt},\hspace{1pt}\leq}(\alpha,\beta)\, m_\beta\,.
    \end{equation}
    The prefactors in equations \eqref{eq:GenHsm} and \eqref{eq:GenHm} might seem slightly strange at first sight, as they break the symmetry $\alpha \leftrightarrow \beta$. However, they have been choosen to ensure a remarkable property (see for instance~\cite[Lemma 2.9]{BCGLS21}), namely the fact that the matrices $\bigl( H^{\bullet\hspace{1pt},\hspace{1pt}<}(\alpha,\beta) \bigr)_{\alpha,\beta\vdash n}$ and $\bigl( H^{\bullet\hspace{1pt},\hspace{1pt}\leq}(\alpha,\beta)\bigr)_{\alpha,\beta\vdash n}$ are inverse one of another, \ie
    \begin{equation}\label{eq:InvH}
        \sum_{\gamma\vdash n} H^{\bullet\hspace{1pt},\hspace{1pt}<}(\alpha,\gamma)  H^{\bullet\hspace{1pt},\hspace{1pt}\leq}(\gamma,\beta) = \delta_{\alpha\beta}\,.
    \end{equation}
    It is then straightforward to invert the relation \eqref{eq:Km}, yielding
    \begin{equation}\label{malpha}
        m_\alpha = \sum_{\beta\vdash n} N^{\ell(\beta)-\ell(\alpha)} H^{\bullet\hspace{1pt},\hspace{1pt}<}(\alpha,\beta)\,K_\beta\,.
    \end{equation}
    Re-inserting the definition \eqref{eq:GenHsm} of $H^{\bullet\hspace{1pt},\hspace{1pt}<}(\alpha,\beta)$ then gives the desired result \eqref{eq:TopMK}.
\end{proof}

To further make the link with the moment-cumulant relation at large $N$, we need to introduce a few new concepts.

\begin{definition}
    Let $\sigma\in S_n$ be a permutation. The number of cycles of $\sigma$ is equal to $\ell([\sigma])$, \ie the length of the partition $[\sigma]$ describing the cycle type of $\sigma$. We then define the \emph{genus} of $\sigma$ by\cite{Jacques}
    \begin{equation}\label{eq:GenPerm}
        2g(\sigma) = n+1 - \ell([\sigma]) - \ell([\sigma^{-1}\zeta_n])\,,
    \end{equation}
    where $\zeta_n=(1\dots n)$ is the canonically ordered $n$-cycle. We further denote by $P_g(\alpha)$ the number of permutations of cycle type $\alpha$ and genus $g$.
\end{definition}

The computation of the numbers $P_g(\alpha)$, \ie the enumeration of permutations by cycle type and genus, is a standard problem in combinatorics, which has been well studied in the literature~\cite{Cori75,CoriH, Hock}. We now claim the following result, proven in the Appendix \ref{App:GenPerm}, which relates this problem to strictly monotone Hurwitz numbers.

\begin{lemma}\label{Lem:GenPermHur}
    Let $g\geq 0$ and $\alpha\vdash n$. We have \begin{equation}
        P_g(\alpha) = \dfrac{1}{(n-1)!} H_g^{\bullet\hspace{1pt},\hspace{1pt}<}([n],\alpha)\,.
    \end{equation}
\end{lemma}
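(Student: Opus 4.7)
The plan is to express $H_g^{\bullet,<}([n], \alpha)$ inside the group algebra $\mathbb{C}[S_n]$ and then exploit a classical identity for the Jucys--Murphy elements $J_k := \sum_{i<k}(i,k)$, $k = 2, \dots, n$. Denoting class sums by $C_\gamma := \sum_{\pi \in \Cl_\gamma}\pi$ and setting $r = 2g - 1 + \ell(\alpha)$, the defining conditions (i)--(v) of the strictly monotone Hurwitz number translate immediately into
\[
H_g^{\bullet,<}([n], \alpha) = \bigl[\,\text{Id}\,\bigr]\, \bigl(C_{[n]}\cdot C_\alpha \cdot e_r(J_2, \dots, J_n)\bigr),
\]
where $[\,\text{Id}\,]$ extracts the coefficient of the identity element. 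Indeed, the strict monotonicity $b_1 < \cdots < b_r$ is exactly the statement that the sum of the transposition products $\tau_1 \cdots \tau_r$ over admissible sequences is the elementary symmetric polynomial $e_r$ evaluated on the Jucys--Murphy elements.

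The key input is then the Jucys identity
\[
\prod_{k=2}^n(1 + t\, J_k) \;=\; \sum_{\pi \in S_n} t^{n-\ell([\pi])}\,\pi,
\]
whose coefficient of $t^r$ yields the striking formula
\[
e_r(J_2, \dots, J_n) = \sum_{\pi \in S_n,\; \ell([\pi]) = n-r} \pi.
\]
A short induction on $n$ suffices: assuming the identity in $\mathbb{C}[S_{n-1}]$ (embedded in $\mathbb{C}[S_n]$ as the stabiliser of $n$), right-multiplication by $(1 + tJ_n)$ either leaves a permutation $\sigma$ fixing $n$ unchanged (contributing $t^{n-\ell([\sigma])}$) or replaces it by $\sigma(i, n)$ for some $i < n$. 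The latter operation inserts $n$ into the cycle of $\sigma$ containing $i$, decreasing $\ell$ by $1$; the extra factor of $t$ ensures that each $\pi \in S_n$ is accounted for exactly once with weight $t^{n - \ell([\pi])}$.

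Substituting the Jucys identity into the previous display and eliminating the trailing permutation via the constraint $c\sigma\tau_1\cdots\tau_r = \text{Id}$, one obtains
\[
H_g^{\bullet,<}([n], \alpha) \;=\; \#\bigl\{(c, \sigma) \in \Cl_{[n]} \times \Cl_\alpha : \ell\bigl([\sigma^{-1}c^{-1}]\bigr) = n + 1 - \ell(\alpha) - 2g\bigr\}.
\]
Conjugation invariance now decouples $c$ from $\sigma$: for any two $n$-cycles $c$ and $c'$ with $c' = g c g^{-1}$, the substitution $\sigma \mapsto g\sigma g^{-1}$ gives a bijection on $\Cl_\alpha$ preserving the cycle-length condition, so summing over $c$ simply produces the factor $|\Cl_{[n]}| = (n-1)!$. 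Fixing $c = \zeta_n^{-1}$ (so that $c^{-1} = \zeta_n$) then yields
\[
H_g^{\bullet,<}([n], \alpha) = (n-1)!\cdot\#\bigl\{\sigma \in \Cl_\alpha : \ell([\sigma^{-1}\zeta_n]) = n + 1 - \ell(\alpha) - 2g\bigr\} = (n-1)!\,P_g(\alpha),
\]
by the genus formula \eqref{eq:GenPerm}. The main technical hurdle is establishing the Jucys identity cleanly; the remainder is routine bookkeeping with class sums.
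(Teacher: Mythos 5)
Your proof is correct and follows essentially the same route as the paper's: your identity $e_r(J_2,\dots,J_n)=\sum_{\pi:\,\ell([\pi])=n-r}\pi$ is precisely the unique-primitive-factorisation statement (Lemma~\ref{Lem:PrimFac}, cited from Guay-Paquet) on which the paper's Appendix~\ref{App:GenPerm} rests, and the subsequent reduction --- rewriting $H_g^{\bullet\hspace{1pt},\hspace{1pt}<}([n],\alpha)$ as a count of pairs $(c,\sigma)$ subject to a cycle-count condition and then decoupling $c$ by conjugation to extract the factor $(n-1)!$ before specialising to $c^{-1}=\zeta_n$ --- is the same argument. The only substantive difference is that you phrase the key lemma via Jucys--Murphy elements and supply its inductive proof, whereas the paper cites it and proves the slightly more general Proposition~\ref{Prop:StrictMon} for arbitrary $\beta$ before specialising to $\beta=[n]$.
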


\noindent Combining Proposition \ref{Prop:TopMK} and Lemma \ref{Lem:GenPermHur}, we obtain the main result of this paragraph, forming the point 6 of Theorem \ref{thm:summary}:

\begin{corollary}\label{Coroll1}
    Let $n\geq 1$. We have
    \begin{equation}\label{eq:TopMnK}
        m_n(A) = \sum_{g \geq 0}  N^{-2g} \sum_{\alpha\vdash n} P_g(\alpha)\,  K_\alpha(A)\,.
    \end{equation}
\end{corollary}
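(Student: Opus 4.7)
The plan is essentially a substitution, since the corollary is really a specialisation of Proposition \ref{Prop:TopMK} combined with Lemma \ref{Lem:GenPermHur}. No genuine obstacle arises, so I will keep the sketch short.

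First, I would specialise Proposition \ref{Prop:TopMK} to the partition $\alpha=[n]$ (a single $n$-cycle). For this choice, $\ell([n])=1$, $|\mathrm{Cl}_{[n]}|=(n-1)!$, and by definition $m_{[n]}(A)=m_n(A)$. The lower bound $g\geq 1-\ell(\alpha)$ then becomes $g\geq 0$, and the exponent $2(1-\ell(\alpha)-g)$ collapses to $-2g$. The expansion \eqref{eq:TopMK} therefore reduces to
\begin{equation*}
m_n(A) \;=\; \frac{1}{(n-1)!}\sum_{g\geq 0} N^{-2g}\sum_{\beta\vdash n} H_g^{\bullet\hspace{1pt},\hspace{1pt}<}([n],\beta)\,K_\beta(A)\,.
\end{equation*}

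Next, I would invoke Lemma \ref{Lem:GenPermHur}, which states that $H_g^{\bullet\hspace{1pt},\hspace{1pt}<}([n],\beta) = (n-1)!\,P_g(\beta)$. Substituting this into the previous display cancels the prefactor $1/(n-1)!$ and yields
\begin{equation*}
m_n(A) \;=\; \sum_{g\geq 0} N^{-2g} \sum_{\beta\vdash n} P_g(\beta)\,K_\beta(A)\,,
\end{equation*}
which, after relabelling $\beta\to\alpha$, is precisely the claimed identity \eqref{eq:TopMnK}. The non-negativity of the genus in the sum is automatic from the specialisation $\ell(\alpha)=1$, which matches the convention $g(\sigma)\geq 0$ used in \eqref{eq:GenPerm}, so nothing further needs to be verified. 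All the work has already been absorbed into Proposition \ref{Prop:TopMK} (inverting the topological expansion using the inverse-matrix relation \eqref{eq:InvH} between weakly and strictly monotone Hurwitz generating functions) and Lemma \ref{Lem:GenPermHur} (the combinatorial identification $P_g(\alpha)=H_g^{\bullet\hspace{1pt},\hspace{1pt}<}([n],\alpha)/(n-1)!$).
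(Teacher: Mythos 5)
Your proof is correct and is exactly the argument the paper intends: the paper derives Corollary \ref{Coroll1} by "combining Proposition \ref{Prop:TopMK} and Lemma \ref{Lem:GenPermHur}", which is precisely your specialisation of \eqref{eq:TopMK} to $\alpha=[n]$ followed by the substitution $H_g^{\bullet\hspace{1pt},\hspace{1pt}<}([n],\beta)=(n-1)!\,P_g(\beta)$. The bookkeeping of $\ell([n])=1$, $|\mathrm{Cl}_{[n]}|=(n-1)!$ and the resulting bound $g\geq 0$ is all handled correctly.
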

Recall that in the large $N$ limit, $K_\alpha$ becomes $\kappa_\alpha$. Moreover, it is well-known~\cite{Cori75} that the number of permutations of cycle type $\alpha$ and genus 0 coincides with the number of non-crossing set-partitions of size $\alpha$, \textit{i.e.,} $P_0(\alpha)=\text{NC}(\alpha)$. Thus,  equation \eqref{eq:TopMnK} reduces to the standard moment-cumulant relation \eqref{eq:MomCumRel} in the large $N$ limit and provides $1/N^2$ corrections to this relation. Recall, however, that for finite $N$, the generalised precursor $K_\alpha$ does not factorise into $\prod_{i=1}^{\ell(\alpha)} K_{\alpha_i}$: in contrast to the equation \eqref{eq:MomCumRel}, the right-hand side of the generalised moment-cumulant relation \eqref{eq:TopMnK} is not directly expressed as a polynomial  
in the $K_m$'s.


\section[Further properties of the $K_\alpha$]{Further properties of the $\boldsymbol{K_\alpha}$}

\subsection{Behaviour under shifts by the identity and removing singletons}
\label{sec:shift}

Writing $A=\hat A + m_1(A) \,\mathbb{I}$ with $\hat A$ traceless, we have 
\be\label{trivial-id}Z(\hat A +m_1(A) \,\mathbb{I}, B\,; z)= Z(\hat A,B\,; z)\, e^{N z\, m_1(A) \,\Tr B}\,. \ee
Therefore, upon identification of the coefficient of $\Tr(B^n)$
\be\label{shift} K_n(A) = K_n(\hat A) + m_1(A)\,\delta_{n,1}\,, \qquad \qquad 1\le n \le N \ \,,\ee
and  if the partition 
$\alpha$ is singleton-free (\ie contains no 1):
\be\label{KAA} 
K_\alpha(A)= K_\alpha(\hat A)\,.\ee
More generally, any partition $\alpha$ can be written as the concatenation $[\alpha_{\text{sf}},1^{\ah_1}]$ of a singleton-free partition $\alpha_{\text{sf}}$ and $\ah_1$ singletons. Projecting the identity \eqref{trivial-id} on $p_\alpha(B)$ then gives
\begin{equation}\label{KAAa}
    K_{\alpha}(A) = K_\alpha(\hat A) + \sum_{r=1}^{\ah_1} {\ah_1 \choose r} m_1^r(A) K_{[\alpha_{\text{sf}},1^{\ah_1-r}]}(\hat A)\,.
\end{equation}
Knowledge of the generalised precursors $K_\alpha(\hat A)$ on traceless matrices is therefore sufficient to fully reconstruct their expressions on any matrix.\\

Now, still with $\hat A$ traceless, $Z(\hat A, B\,; z)$
is invariant under a shift of $B$ by a multiple of the identity: 
\begin{equation*}
    Z(\hat A, B+ b\,\mathbb{I}\,; z)= Z(\hat A,B\,; z)\,.
\end{equation*}
This entails identities between the $K_\alpha(\hat A)$'s.
Indeed, fixing $\alpha=(\alpha_1,\dots, \alpha_\ell)$ and identifying the coefficient of $b\, p_\alpha(B)$  in 
$Z(\hat A, B +  b\,\mathbb{I}\,; z)$, we eventually find 
\be\label{identity1}
K_{(\alpha_1,\dots, \alpha_\ell, 1)}(\hat A) =-\inv{N^2}
\sum_{i=1}^\ell \alpha_i\, K_{(\alpha_1,\dots, \alpha_i+1,\dots, \alpha_\ell)}(\hat A) \,,\ee
hence allowing to recursively remove all singletons 
from the partition $\alpha$. For example,
\begin{equation*}
K_{[2,1]}(\hat A)=-\frac{2}{N^2}K_3(\hat A)\,,\qquad\quad K_{[3,2,1]}(\hat A)=-\frac{1}{N^2}\bigl(3 K_{[4,2]}(\hat A)+2 K_{[3^2]}(\hat A)\bigr)
\end{equation*} 
\begin{equation*}
K_{[2,1^2]}(\hat A) = -\frac{1}{N^2}\bigl(2 K_{[3,1]}(\hat A) +K_{[2^2]}(\hat A) \bigr) = \frac{6}{N^4} K_{4}(\hat A)-\frac{1}{N^2}K_{[2^2]}(\hat A)  \,,
\end{equation*}
and so on and so forth.


\subsection{The generating function of the precursors $K_n$}
\label{genfn}

We now prove the point 7 of the main theorem \ref{thm:summary}, by constructing a generating function for the polynomials $K_n(A)$. To do so, consider the integral 
\begin{equation}\label{eq:IntZRes}
    \frac{1}{N}\int_{\U(N)} DU\,  Z(A,U^\dagger\,; z)\, \Tr\left(\inv{1-x U} \right)
\end{equation}
in which the (normalized) resolvent of the matrix $U$ has been inserted against the HCIZ integral and integration over $U \in \U(N)$ carried out. For $|x|<1$, we can expand the resolvent as
\begin{equation}
    \Tr\left(\inv{1-x U} \right) = \sum_{n=0}^{+\infty} x^n\, p_{[n]}(U)\,,
\end{equation}
with $p_{[n]}(U)=\Tr(U^n)$, while the HCIZ integral admits the expansion \eqref{expand2}. Recalling the orthogonality property \eqref{eq:OrthoP} of the Newton polynomials $p_\alpha$, it is clear that the above integral yields
\begin{equation}
    \frac{1}{N}\int_{\U(N)} DU\,  Z(A,U^\dagger\,; z)\, \Tr\left(\inv{1-x U} \right) = 1 + \sum_{n=1}^N (x z)^n K_n(A) + {\rm O}(x^{N+1})\,,
\end{equation}
hence providing a generating function for the precursors $K_n(A)$. Let us now simplify the expression of this function. Firstly, it is clear that we can set $z=1$ without losing its generating property. Secondly, we now reinsert the integral expression \eqref{hciz} of $Z(A,B\,; 1)$ in the left-hand side to obtain
\begin{equation}
    \frac{1}{N}\int_{\U(N)} DU \int_{\U(N)} DV\,  e^{N \Tr(AV \Ud \Vd)}\, \Tr\left(\inv{1-x U} \right)\,.
\end{equation}
Performing the change of variables $U \mapsto V^\dagger U V$, which leaves the Haar measure $DU$ invariant, $V$ disappears and the integration over it can thus be carried out readily (recalling that we have normalised the total Haar measure to 1). Putting things together, we then obtain
\begin{equation}
   \mathcal{K}^{(N)}(A;x) := \frac{1}{N} \,  \int_{\U(N)} DU\, e^{N\Tr(A\Ud)}\, \Tr\left(\inv{1-x U} \right) = 1 + \sum_{n=1}^N x^n K_n(A) + {\rm O}(x^{N+1}) \,,
\end{equation} 
where the first equality serves as a definition of $\mathcal{K}^{(N)}(A;x)$. This proves point 7 of Theorem \ref{thm:summary}, as announced. In other words, the generating function of the precursors $K_n$ appears as the Fourier--Laplace transform of the resolvent. Since we identified the large $N$ limit of these precursors with the free cumulants in section \ref{LargeNlim}, this implies that the function $\dfrac{1}{x}\bigl( \mathcal{K}^{(N)}(A;x) - 1\bigr)$ tends to the Voiculescu $\mathcal{R}$-transform in this limit. However, a direct proof of that limit would be desirable.


\subsection[Integrating the $K_\alpha$ against a Gaussian weight: the Wick property]{Integrating the $\boldsymbol{K_\alpha}$ against a Gaussian weight: the Wick property}
\label{int-wick}

We now come to point 8 of our main Theorem \ref{thm:summary}. Take $A$ to be a random Hermitian Gaussian matrix with variance $\sigma$, \ie in the ensemble GUE$(N,\sigma)$. For any function $f$ on the space of matrices, its expectation value with respect to this distribution is defined as
\begin{equation}
    \mathbb{E}_{A\sim\text{GUE}(N,\sigma)}\bigl(f(A)\bigr) = \frac{1}{Z_{\text{GUE}}} \int DA \, f(A)\,e^{-\frac{1}{2\sigma^2} N \Tr A^2}\,,
\end{equation}
where $Z_{\text{GUE}}:= \int DA\,  e^{-\frac{1}{2\sigma^2} N \Tr A^2}$. Our goal is to compute the average of the polynomials $K_\alpha$. Obviously, they must be even to give a non-vanishing average, hence we take $\alpha \vdash 2n$. Then

\begin{proposition}\label{prop:GUE}
 The generalised precursors enjoy the Wick property, namely for  $\alpha\vdash 2n$,
 \be\label{intG} \mathbb{E}_{A\sim\text{GUE}(N,\sigma)} \bigl( K_\alpha(A) \bigr) = \delta_{\alpha, [2^n]}\,\sigma^{2n}\,.
 \ee
In particular, $\mathbb{E}_{A\sim\text{GUE}(N,\sigma)} \bigl( K_n(A)\bigr) =\delta_{n,2}\,\sigma^2$.
\end{proposition}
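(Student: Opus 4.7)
The strategy is to compute the GUE average of the full HCIZ integral $Z(A,B\,;z)$ and then read off the expectations $\mathbb{E}_{A\sim\text{GUE}(N,\sigma)}(K_\alpha(A))$ by comparing with the expansion \eqref{expand2}. The vanishing on odd-degree partitions is immediate from the symmetry $A\mapsto-A$ of the GUE measure, so one may focus on $\alpha\vdash 2n$ as stated in the proposition.

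Using unitary invariance of the GUE (the law of $UAU^\dagger$ coincides with the law of $A$ for any fixed $U\in\U(N)$), one exchanges the order of the Haar and Gaussian integrations in the definition \eqref{hciz} to obtain
\begin{equation*}
\mathbb{E}_{A\sim\text{GUE}(N,\sigma)}\bigl[Z(A,B\,;z)\bigr] \;=\; \mathbb{E}_{A\sim\text{GUE}(N,\sigma)}\bigl[e^{Nz\,\Tr(AB)}\bigr].
\end{equation*}
The right-hand side is a Gaussian moment generating function: using the GUE two-point function $\mathbb{E}(A_{ij}A_{k\ell})=\frac{\sigma^2}{N}\delta_{i\ell}\delta_{jk}$, or equivalently completing the square on $-\frac{N}{2\sigma^2}\Tr(A^2)+Nz\,\Tr(AB)$, one finds
\begin{equation*}
\mathbb{E}_{A\sim\text{GUE}(N,\sigma)}\bigl[Z(A,B\,;z)\bigr] \;=\; \exp\!\Bigl(\tfrac{N\sigma^2 z^2}{2}\,\Tr(B^2)\Bigr) \;=\; \sum_{k\geq 0} \frac{N^k\sigma^{2k}}{2^k\,k!}\,z^{2k}\,p_{[2^k]}(B),
\end{equation*}
which involves only even powers of $z$ and only Newton polynomials $p_{[2^k]}(B)$ associated with partitions consisting entirely of $2$'s.

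It remains to identify coefficients with the averaged version of the expansion \eqref{expand2}. For each $m\leq N$ the Newton polynomials $\{p_\alpha(B)\}_{\alpha\vdash m}$ are linearly independent, so matching the coefficient of $z^m\,p_\alpha(B)$ forces $\mathbb{E}(K_\alpha(A))=0$ whenever $\alpha\vdash 2n$ is not of the form $[2^n]$. For $\alpha=[2^n]$ with $2n\leq N$, inserting $\ell([2^n])=n$ and $|\Cl_{[2^n]}|=(2n)!/(2^n n!)$ into \eqref{expand2} yields the single equation $\frac{N^n}{2^n n!}\,\mathbb{E}(K_{[2^n]}(A))=\frac{N^n\sigma^{2n}}{2^n n!}$, hence $\mathbb{E}(K_{[2^n]}(A))=\sigma^{2n}$. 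The particular case $\alpha=[n]$ then gives $\mathbb{E}(K_n(A))=\delta_{n,2}\,\sigma^2$, since the one-part partition $[n]$ coincides with $[2^{n/2}]$ only for $n=2$.

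There is no substantive obstacle here: the two ingredients are the unitary invariance of GUE and a standard Gaussian computation. The decisive observation is that averaging the HCIZ integral over a Gaussian weight collapses it to an exponential in $\Tr(B^2)$ alone, and this automatically selects the pair partition $[2^n]$ among all partitions of $2n$.
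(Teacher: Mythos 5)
Your proposal is correct and follows essentially the same route as the paper: average the HCIZ integral over the GUE using unitary invariance, evaluate the Gaussian integral to get $\exp(\tfrac{1}{2}N\sigma^2z^2\Tr B^2)$, and match coefficients against the expansion \eqref{expand2} using $|\Cl_{[2^n]}|/(2n)!=1/(2^n n!)$. The coefficient identification and the final numerology agree with the paper's proof.
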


In that respect, the $K_\alpha$ qualify again as precursors of the free cumulants which are known to satisfy a similar property when integrated over a semi-circle law, the celebrated limit of the Gaussian distribution. To be explicit, recall that if
$\rho(x)=\inv{2\pi \sigma^2} \sqrt{4 \sigma^2-x^2}$   
is the asymptotic form of the eigenvalue density of the $\text{GUE}(N,\sigma)$ matrix $A$, we have
\bea\label{Catal} \lim_{N\to \infty} \mathbb{E}_{A\sim\text{GUE}(N,\sigma)}\bigl(m_{2n}(A)\bigr)=\int_{-2 \sigma}^{2 \sigma} x^{2n} \rho(x) dx&=& \mathrm{Catalan}(n)\,  \sigma^{2n}\,,\\
\label{kappanG}
\lim_{N\to \infty} \mathbb{E}_{A\sim\text{GUE}(N,\sigma)}\bigl(\kappa_n(A)\bigr)&=& \delta_{n,2}\,\sigma^2\,,\\
\label{kappalphaG} 
\lim_{N\to \infty} \mathbb{E}\bigl(\kappa_\alpha(A)\bigr)=  \lim_{N\to \infty}  \mathbb{E}\left(\prod_k \kappa_{k}^{\ah_k}(A) \right)=\prod_k  \Bigl(\lim_{N\to \infty}  \mathbb{E}(\kappa_{k}(A))\Bigr)^{\ah_k} 
 &=& \delta_{\alpha, [2^n]}\,\sigma^{2n}\,.\eea
  The term ``Wick property''  originates from Quantum Field Theory and refers to a Feynman diagram interpretation of such results. For a Gaussian free field $\phi$, 
the ``$n$-point function", \ie  the expectation value of $\phi^n$,  ($n$ even), is interpreted  as a sum of pairings between $n$ points. 
In the computation of  $\lim_{N\to \infty}  \mathbb{E}_{A\sim\text{GUE}(N,\sigma)}(m_{2n}(A))$, only non-crossing pairings are allowed,
whose number is a Catalan number, see \eqref{Catal}. Moreover, the free cumulants vanish in mean, except for the one of degree 2, as stated in equation \eqref{kappanG}.  Finally,
\eqref{kappalphaG} follows from the 
factorisation of means of invariant quantities in the large $N$ limit. In the current context, \eqref{intG} says that this property already
applies at finite $N$ for the expectation value of $K_\alpha$.  For finite $N$, $K_\alpha$ is not a product of $K_n$'s, so that the ``Wick nature'' of our result \eqref{intG} should be understood as a formal generalisation referring to the structure of partitions labeling the generalised precursors.

\begin{proof} Taking the GUE average of the expansion \eqref{expand2}, we have
\begin{equation}\nonumber  \mathbb{E}_{A\sim\text{GUE}(N,\sigma)} \bigl( Z(A,B\,; z) \bigr)
 = \sum_{n=0}^{N/2} z^{2n} \sum_{\alpha\vdash 2n} N^{\ell(\alpha)} \mathbb{E}_{A\sim\text{GUE}(N,\sigma)} \bigl( K_\alpha(A) \bigr) \,\frac{|\Cl_\alpha|}{(2n)!}\, p_\alpha(B) + {\rm O}(z^{N+1})\,.
 \end{equation}
On the other hand, a direct computation starting from the HCIZ integral representation \eqref{hciz} yields
\bea \mathbb{E}_{A\sim\text{GUE}(N,\sigma)} \bigl( Z(A,B\,; z) \bigr) &=& Z_{\text{GUE}}^{-1} \int DU \int DA\, e^{-\frac{1}{2\sigma^2} N \Tr A^2}  e^{N z \Tr (U^\dagger A U B)} \nonumber  \\
&=& Z_{\text{GUE}}^{-1} \int DA\, e^{-\frac{1}{2\sigma^2} N \Tr A^2}  e^{N z \Tr (AB)} \nonumber  \\
&=& e^{\oh N \sigma^2 z^2 \Tr B^2} =\sum_n \frac{N^n \sigma^{2n}}{2^n n!}\,  z^{2n}\,\Tr^n B^2 \,,\nonumber  
\eea
on which it is manifest that only $\alpha=[2^n]$ contributes (the second equality made use of the $\U(N)$-invariance of the GUE while the third one is the matrix version of the Gaussian Fourier transform).  Comparing the two expressions of $\mathbb{E}_{A\sim\text{GUE}(N,\sigma)} \bigl( Z(A,B\,; z) \bigr)$ and using $|\Cl_{[2^n]}|/(2n)! = 1/(2^n n!)$, $p_{[2^n]}(B)=\Tr^n B^2$ and $\ell(\alpha)=n$, we obtain the desired result \eqref{intG}.
\end{proof}


\subsection{Relation between cyclic products and generalised precursors}
\label{cyclicpr}
There is a well-known relation in the large $N$ limit between the cyclic product of $n$ matrix 
elements of a matrix $A$ averaged over a $\U(N)$ invariant distribution and the $n$-th free cumulant $\kappa_n(A)$. To establish it, it suffices to 
prove for $A$ deterministic and $A^U:= U A \Ud$ that
\be \label{CMSSrel}
\kappa_n(A) =\lim_{N\to \infty}  N^{n-1}\,\int_{\U(N)} DU\, A^U_{i_1i_2}A^U_{i_2i_3}\cdots A^U_{i_ni_1}  
\ee
for any $n$-plet of pairwise distinct indices $i_1,i_2,\cdots i_n$ and no summation over repeated indices.
This relation seems to have been pointed out first in \cite[Theorem 2.6]{CMSS06}, see also \cite[Theorems 1 and  2]{Maillard-etal}. 
An elegant proof has been given in a recent paper of
Bernard and Hruza \cite{BH24}. In fact their proof extends word for word to finite $N$, substituting the precursor $K_n$ for the free cumulant $\kappa_n$: 
\be \label{CMSSrel2}
 K_n(A) = N^{n-1} \int_{\U(N)} DU\, A^U_{i_1i_2}A^U_{i_2i_3}\cdots A^U_{i_ni_1}  \,.\ee
\begin{proof}\!\!\cite{BH24} Consider the $N\times N$ matrix $\Sigma$ made of the $n\times n$ matrix
 of the cyclic permutation $\Sigma_{i j}=\delta_{i,j+1\!\mod n}$, ($i,j=1,\cdots n$), bordered by zeros. One easily checks that $\Sigma$ is of rank $n$ and that the first non-vanishing trace of $\Sigma^k$ is for $k=n$. It is then clear that the first non-constant coefficient in the expansion \eqref{expand2} of the HCIZ integral $Z(A, \Sigma\,; z)$ is of order $z^n$ and arises from the single trace $p_{[n]}(\Sigma)=\Tr(\Sigma^n)=n$. Therefore,
 \begin{equation*}
  Z(A, \Sigma\,; z) = 1 + z^n N\, K_n(A) + {\rm O}(z^{2n})\,.
 \end{equation*}
 On the other hand, a series expansion of the exponential gives
 \begin{equation*} Z(A, \Sigma\,; z) = \sum_{k=0}^{+\infty} \frac{z^k N^k}{k!} \int DU \, \Tr^k (A^U \Sigma) = 1 + z^n N^n \int_{\U(N)} DU \, A^U_{i_1i_2}A^U_{i_2i_3}\cdots A^U_{i_ni_1} + {\rm O}(z^{2n})
 \end{equation*}
 for any choice of a set of pair-wise distinct indices $i_j$.  Identifying the $z^n$ term yields the result. \end{proof} 
 
This result extends to the case of a permutation $\sigma$ of arbitrary cycle type, establishing the point 9 of our main Theorem \ref{thm:summary}:
\begin{proposition}\label{prop4}
    Let $\sigma\in S_n$. Then, we have
    \be \label{EspKalpha} K_{[\sigma]}(A) = N^{n-\ell([\sigma])}\,  \int_{\U(N)}DU\, A^U_{1\,\sigma(1)}\,A^U_{2\,\sigma(2)}\cdots A^U_{n\,\sigma(n)} \,, \ee
\end{proposition}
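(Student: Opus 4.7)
The strategy is to compute the right-hand side of \eqref{EspKalpha} directly using the Weingarten--Samuel--Collins formula \eqref{Wg}, and then match the resulting expression against the Newton-polynomial form \eqref{Kalpha-alt} of $K_{[\sigma]}(A)$. This is a natural extension of the Bernard--Hruza argument for the cyclic case, but it seems cleaner to treat the general permutation at the level of Weingarten calculus rather than by choosing a suitable test matrix $\Sigma$.

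First, I would expand $A^U_{k\sigma(k)} = \sum_{a_k,b_k} U_{k,a_k}\,A_{a_k b_k}\,\Ud_{b_k,\sigma(k)}$ in each factor, pulling the $2n$-fold $U$-average inside. Applying \eqref{Wg} with the identification $(i_a,j_a,k_a,\ell_a)=(a,a_a,b_a,\sigma(a))$, the key observation is that the product $\prod_a \delta_{a,\sigma(\rho(a))}$ forces $\rho=\sigma^{-1}$ uniquely; the double sum over $(\rho,\tau)\in S_n\times S_n$ thus collapses to a single sum over $\tau$. The remaining Kronecker deltas $\prod_a\delta_{a_a,b_{\tau(a)}}$ substitute $b_j=a_{\tau^{-1}(j)}$.

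After this substitution, the sum over $\vec a$ of $\prod_k A_{a_k,a_{\tau^{-1}(k)}}$ factors along the cycles of $\tau^{-1}$: each cycle of length $m$ contributes a free trace $\Tr(A^m)$, so the sum produces the Newton polynomial $p_{[\tau^{-1}]}(A)=p_{[\tau]}(A)$. Hence
\[
\int_{\U(N)} DU\, A^U_{1\sigma(1)} \cdots A^U_{n\sigma(n)} = \sum_{\tau \in S_n} \Wg\bigl([\sigma^{-1}\tau^{-1}]\bigr)\, p_{[\tau]}(A).
\]
To conclude, I would reindex $\tau\to\tau^{-1}$ and use the cycle-type identities $[\sigma^{-1}\tau]=[\tau\sigma^{-1}]=[(\sigma\tau^{-1})^{-1}]=[\sigma\tau^{-1}]$, which follow from cyclic invariance of the product $\mathrm{cycle type}(ab)=\mathrm{cycle type}(ba)$ together with invariance of the cycle type under inversion. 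Combined with $p_{[\tau^{-1}]}(A)=p_{[\tau]}(A)$, this rewrites the integral as $\sum_\tau \Wg([\sigma\tau^{-1}])\,p_{[\tau]}(A) = N^{\ell([\sigma])-n}\,K_{[\sigma]}(A)$ by \eqref{Kalpha-alt} (taken with $\sigma_0=\sigma$), which gives \eqref{EspKalpha} after multiplying by $N^{n-\ell([\sigma])}$.

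The only delicate step is purely combinatorial book-keeping: the Weingarten function that arises naturally from \eqref{Wg} involves $[\sigma^{-1}\tau^{-1}]$, while \eqref{Kalpha-alt} is written in terms of $[\sigma\tau^{-1}]$; reconciling the two relies on the conjugacy-class identities recalled above. Apart from this, every step is a routine manipulation of the Weingarten formula and of cycle decompositions.
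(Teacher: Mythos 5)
Your proof is correct and follows essentially the same route as the paper's: apply the Weingarten--Samuel--Collins rule \eqref{Wg}, use the identity $\sum_{j_1,\dots,j_n}\prod_b A_{j_b j_{\tau(b)}}=p_{[\tau]}(A)$ to resum along cycles, and conclude via \eqref{Kalpha-alt}. The reindexing $\tau\to\tau^{-1}$ together with the conjugacy/inversion invariance of cycle types, which you spell out to reconcile $[\sigma^{-1}\tau^{-1}]$ with $[\sigma\tau^{-1}]$, is exactly the content of the ``few manipulations'' the paper leaves implicit.
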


\begin{proof}
The integral can be computed from the rule \eqref{Wg} of the Weingarten--Samuel--Collins calculus. After a few manipulations and the use of the identity
\begin{equation*}
    \sum_{j_1,\dots,j_n=1}^N \prod_{b=1}^n A_{j_b\,j_{\sigma(b)}} = p_{[\sigma]}(A)\,,
\end{equation*}
we get
\begin{equation*}
          \int_{\U(N)}DU\, \prod_{i=1}^n A^U_{i\,\sigma(i)}
 =\sum_{\tau\in S_n}    \Wg\bigl([\sigma \tau^{-1}]\bigr)\,  p_{[\tau]} (A)= N^{\ell([\sigma])-n} K_{[\sigma]}(A)\,,
\end{equation*}
where the last equality follows from \eqref{eq:Kp2}.
\end{proof}


\subsection{Integrals of powers of a rotated matrix element}
\label{sec:A11n}

In the previous subsection, we have studied specific products of matrix components $A_{ij}^U$ integrated on conjugacy orbits. We now consider another type of products, namely powers of a single diagonal component $A_{11}^U$. Let $B$ be the rank 1 matrix defined by $B_{ij}=\delta_{i,1}\delta_{j,1}$, such that $A_{11}^U = \Tr(UA\Ud B)$. With this choice of $B$, the HCIZ integral serves as a generating function for the integrals of $(A_{11}^U)^n$:
\begin{equation}\label{eq:ExpAU11}
    Z(A,B\,; z) = \int_{U(N)} DU\, e^{z N A_{11}^U} = \sum_{n = 0}^{+\infty} z^n\frac{N^n}{n!} \int_{\U(N)} DU\, (A_{11}^U)^n\,.
\end{equation}
This can alternatively be computed from the three expansions \eqref{Z2sp}, \eqref{expand} and \eqref{expand2}, noting that our specific choice of $B$ implies $p_\alpha(B)=1$ for all partitions $\alpha$. Using the identity 
 $ \sum_{\alpha\vdash n}|\Cl_\alpha| \hat\chi_\lambda(\alpha)=\delta_{\lambda [n]} $, which follows from the orthogonality \eqref{eq:OrthoChi} of $S_n$ characters, we may then express the integral of $(A_{11}^U)^n$ in terms of Schur functions, Newton polynomials or generalised precursors:
 \bea\label{A11n} \int_{\U(N)} DU\, (A_{11}^U)^n&=& 
 \frac{n!}{\prod_{j=0}^{n-1} (N+j)} s_{[n]}(A) \\
 \nonumber &=&
\prod_{j=0}^{n-1} \inv{(N+j)}  \sum_{\alpha\vdash n} |\Cl_\alpha|\,  p_\alpha(A) \\
\nonumber &=&  \sum_{\alpha\vdash n} |\Cl_\alpha|\, N^{\ell(\alpha)-n}  K_\alpha(A)\,.\eea
We will come back on this result from a slightly different point of view in Section \ref{sec:CGL}.


\section{Sums of unitary conjugacy orbits}
\label{Additivity}

\subsection[Multiplicativity of the HCIZ integral and additivity of $K_n$]{Multiplicativity of the HCIZ integral and additivity of $\boldsymbol{K_n}$}

We now come to a central subject of our work. Let us fix two matrices $A$ and $B$ of size $N\times N$. In this section, we will consider the sum of the (unitary) conjugacy orbits of $A$ and $B$, formed by matrices $UA\Ud+VB\Vd$ with $U,V\in\U(N)$. Our goal will be to characterise the values of various conjugacy-invariant functions on this space averaged over the Haar measure. If $f$ is such a function, we note that $f(UA\Ud+VB\Vd)=f(A+\Ud VB\Vd U)$ so that it is enough to keep $A$ fixed and average over elements of the conjugacy orbit of $B$. Therefore, we are interested in the quantity
\begin{equation}
    \int_{\U(N)} DU \, f(A+UB\Ud)\,.
\end{equation}
The starting point of our discussion is the following result, which shows that the HCIZ integral  is ``multiplicative'' with respect to this operation, while the free cumulant precursors are ``additive'' (proving the point 3 in the summary Theorem \ref{thm:summary}).

\begin{theorem}\label{Thm:ZKSum}
    Let $A,B,C$ be fixed $N\times N$ matrices. Then
    \begin{equation}\label{eq:MultZ}
        \int_{\U(N)} DU \, Z(A+UB\Ud,C\,; z) = Z(A,C\,; z)\,Z(B,C\,; z)\,.
    \end{equation}
    Moreover, for $n\in\lbrace 1,\dots,N\rbrace$, we have
    \begin{equation}\label{eq:AddK}
        \int_{\U(N)} DU \, K_n(A+UB\Ud,C) = K_n(A) + K_n(B)\,.
    \end{equation}
\end{theorem}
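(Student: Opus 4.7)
The plan is to first establish the multiplicativity identity \eqref{eq:MultZ} by a direct manipulation of the integral defining the HCIZ expression, and then derive the additivity \eqref{eq:AddK} as an immediate corollary by extracting the coefficient of $z^n \Tr(C^n)$ on both sides.

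For the multiplicativity, I would start from the definition
\begin{equation*}
Z(A+UB\Ud,C\,;z) = \int_{\U(N)} DV\, \exp\bigl(Nz\,\Tr(V(A+UB\Ud)\Vd C)\bigr),
\end{equation*}
split the exponential into a product of two factors (one involving $A$, one involving $B$), and swap the two Haar integrations (over $U$ and $V$) using Fubini. In the inner integral over $U$, I would use the substitution $W = VU$, which leaves the Haar measure invariant and turns $VUB\Ud\Vd$ into $WBW^\dagger$. This decouples the $V$-dependence from the $B$-factor, producing $Z(B,C\,;z)$ as the inner integral, independent of $V$. The remaining integral over $V$ is then exactly $Z(A,C\,;z)$, giving \eqref{eq:MultZ}.

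For the additivity, I would use the expansion \eqref{expand2} to read off coefficients of $z^n\,\Tr(C^n)$ on both sides. On the left-hand side, recalling $K_n(A) = (n/N)[z^n\Tr(C^n)]\,Z(A,C\,;z)$ from \eqref{eq:Kn2}, that coefficient equals $\tfrac{N}{n}\int DU\,K_n(A+UB\Ud)$. On the right-hand side, the product $Z(A,C\,;z)\,Z(B,C\,;z)$ is expanded as a double sum of products $p_\alpha(C)\,p_\beta(C) = p_{\alpha\cup\beta}(C)$, with $\alpha\vdash k$ and $\beta\vdash n-k$. Here the key observation is that the concatenated partition $\alpha\cup\beta\vdash n$ equals the single-part partition $[n]$ only when one of $\alpha,\beta$ is empty and the other is $[n]$ itself, since any product of two non-trivial Newton polynomials has length at least $2$ and hence is linearly independent from $\Tr(C^n)$. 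Combined with $[z^0]Z(\cdot,C\,;z)=1$, this selects exactly the two contributions $\tfrac{N}{n}K_n(A)$ and $\tfrac{N}{n}K_n(B)$, yielding \eqref{eq:AddK}.

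The main (small) obstacle is the coefficient-extraction step: one must be careful that the Newton polynomials $\lbrace p_\gamma(C)\rbrace_{\gamma\vdash n}$ form a basis (valid because $n\leq N$, as noted in the definition of $K_\alpha$), so that reading off the $\Tr(C^n)=p_{[n]}(C)$ coefficient is unambiguous. The change of variable $W=VU$ in the multiplicativity step is routine thanks to the bi-invariance of the Haar measure, and the Fubini swap is justified by absolute convergence of the HCIZ power series in $z$.
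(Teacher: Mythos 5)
Your proposal is correct and follows essentially the same route as the paper: the change of variable $W=VU$ you use is the same bi-invariance trick as the paper's substitution $U\mapsto \Vd U$, and the extraction of the $z^n\,\Tr(C^n)$ coefficient via the observation that a single trace can only arise from pairing $p_{[n]}$ in one factor with the constant term in the other is exactly the paper's argument. Your added care about the linear independence of the Newton polynomials for $n\leq N$ is a point the paper leaves implicit but is indeed the reason the coefficient extraction is unambiguous.
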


\begin{proof}
    The identity \eqref{eq:MultZ} is well-known and is easily proven as follows. By the definition \eqref{hciz} of the HCIZ integral, we have
    \begin{equation}
        \int_{\U(N)} DU \, Z(A+UB\Ud,C\,; z) = \iint\biggl._{\!\U(N)} DU \,DV\, e^{zN\,\Tr(VA\Vd C) + zN\,\Tr(VU B \Ud \Vd C)}\,.
    \end{equation}
    Factorising the exponential and performing the change of variable $U \mapsto V^\dagger U$ (which leaves the Haar measure $DU$ invariant), we get
    \begin{equation}
        \int_{\U(N)} DU \, Z(A+UB\Ud,C\,; z) = \iint\biggl._{\!\U(N)} DU  \,DV\, e^{zN\,\Tr(VA\Vd C)} e^{zN\,\Tr(U B \Ud C)}\,.
    \end{equation}
    The right-hand side factorises into $Z(A,C\,; z)\,Z(B,C\,; z)$, proving the desired result \eqref{eq:MultZ}.

    Recall from equation \eqref{expand2} that the HCIZ integral $Z(X,C\,; z)$ admits a series expansion in powers of $z$ and traces of powers of $C$, starting with 1 and whose coefficient in $z^n\, \Tr(C^{\alpha_1}) \cdots \Tr(C^{\alpha_\ell})$ is proportional to the generalised precursor $K_{(\alpha_1,\dots,\alpha_\ell)}(X)$. To obtain the second result \eqref{eq:AddK} of the theorem, we then project the identity \eqref{eq:MultZ} along $z^n\,\Tr(C^n)$. The key observation is that a single trace $z^n \,\Tr(C^n)$ in the product $Z(A,C\,; z)\,Z(B,C\,; z)$ can only be produced by picking the term proportional to $z^n\,\Tr(C^n)$ in the first factor with the constant term 1 in the second factor, or the opposite. We then get
    \begin{equation*}
        \int_{\U(N)} DU \, [z^n\, \Tr(C^n)] Z(A+UB\Ud,C\,; z) = [z^n\, \Tr(C^n)] Z(A,C\,; z) + [z^n\, \Tr(C^n)] Z(B,C\,; z)\,,
    \end{equation*}
    proving the identity \eqref{eq:AddK}.
\end{proof}

\begin{remark} The additivity property \eqref{eq:AddK} is the main feature characterising $K_n$ as a finite $N$ precursor of the free cumulant $\kappa_n$. Indeed, the latter is well-known to satisfy this property in the large $N$ limit, see equation \eqref{eq:AddFree}. That the additivity of $\kappa_n$ can be alternatively proven from the multiplicative identity \eqref{eq:MultZ} (together with the appearance of $\kappa_n$ in $\log Z$ in the finite rank,
large $N$ limit) was first observed by Guionnet and Ma\"\i da in~\cite{GM}.\end{remark}

Theorem \ref{Thm:ZKSum} shows that the precursors $K_n$ have a particularly simple behaviour with respect to the operation of averaging on sums of conjugacy orbits, namely they are additive. We will study the behaviour of more general polynomials later in Section \ref{Sec:Coproduct}. Before that, we will expand more on this additivity property of $K_n$ and its applications.


\subsection[Higher cumulants of the random variable $\delta K_n$]{Higher cumulants of the random variable $\boldsymbol{\delta K_n}$}
\label{Cumul-deltaK}
Consider the random variable
\be \delta K_n(A,B;U):= K_n(A+UB\Ud)-K_n(A)-K_n(B)\,,\ee
where $U$ is taken randomly and uniformly following the Haar measure of $\U(N)$.
By Theorem \ref{Thm:ZKSum}, 
\be\label{average} \mathbb{E} ( \delta K_n(A,B;U))=0\ee
and it then is natural to examine the higher moments and  (classical)  cumulants of $\delta K_n$, to estimate its fluctuations around zero.
One computes (assuming $A$ and $B$ to be traceless for simplicity)
\bea \mathrm{var}(\delta K_2)&=&  \mathbb{E} \bigl( \delta K_2^2 \bigr)= \frac{4 N^4}{(N^2-1)^3} \kappa_2(A) \kappa_2(B) \,, \\ 
\nonumber  \mathbb{E} ( \delta K_2^3)&=& \frac{16 N^6}{(N^2-1)^4(N^2-4)}{\kappa_3(A) \kappa_3(B)} \approx \frac{16}{N^4}\kappa_3(A) \kappa_3(B) \,,\\[3pt]
\nonumber   \mathbb{E} ( \delta K_2^4)&\approx &\frac{48 \kappa_2^2(A) \kappa_2^2(B)^2}{N^4} \approx 3 \mathrm{var}(\delta K_2)^2\,,\eea
where $\approx$ denotes asymptotics in the $N\to\infty$ limit. This implies that the ``dimensional skewness'' and the ``normalized kurtosis'' of $\delta K_2$ vanish respectively as $1/N$ and $1/N^2$ for large $N$:
\bea  \nonumber \frac{\mathbb{E} ( \delta K_2^3)}{(\mathrm{var}(\delta K_2))^{3/2}} & \approx &\frac{2}{N} \frac{\kappa_3(A) \kappa_3(B)}{(\kappa_2(A) \kappa_2(B))^{3/2}} \,\\
\nonumber \mathrm{kur}(\delta K_2) &:=& \frac{\mathbb{E} ( \delta 
K_2^4)}{(\mathrm{var}(\delta K_2))^2}-3 \\ 
\nonumber & \approx & 
\frac{6}{N^2}\left(\frac{\kappa_4(A) \kappa_4(B)}{\kappa_2(A)^2 \kappa_2(B)^2}  -1 \right) \,\eea
showing that the law of $\delta K_2$
converges at large $N$ to a normal centered distribution of variance $\mathrm{var}(\delta K_2) \approx 4N^{-2}\kappa_2(A) \kappa_2(B)$.\\

We reach a similar conclusion for $\delta K_n$ at higher $n$.
For example
\bea
\mathrm{var} (\delta K_3)&=&{\frac{9 N^8}{(N^2-1)^3(N^2-9)^2}} \\
&& \hspace{10pt}\big(\kappa_4(A)  \kappa_2(B) + 2 \kappa_3(A) \kappa_3(B) + \kappa_2(A)
\kappa_4(B) +\kappa_2(A) \kappa_2(B)(\kappa_2(A)+\kappa_2(B))\big) \,,\nonumber\\
 \nonumber \mathrm{var} (\delta K_4)&\approx&   
 \frac{8}{N^2} \Big( 2( \kappa_6(A) \kappa_2(B)+\kappa_2(A) \kappa_6(B))+4(\kappa_5(A)\kappa_3(B)+\kappa_3(A)\kappa_5(B))+
 6\kappa_4(A)\kappa_4(B)\\ \nonumber
 && \hspace{25pt} +4 \kappa_2(A) \kappa_2(B)(\kappa_4(A)+\kappa_4(B))+4\kappa_3(A)^2 \kappa_2(B)+4 \kappa_2(A) \kappa_3(B)^2 \\[3pt]  \nonumber
 && \hspace{25pt} +2\kappa_4(A) \kappa_2(B)^2 +2 \kappa_2(A)^2  \kappa_4(B)
 +8 \kappa_3(A) \kappa_3(B) (\kappa_2(A)+\kappa_2(B))\\[3pt] \nonumber
 && \hspace{25pt} +\kappa_2(A) \kappa_2(B)(2\kappa_2(A)^2+3 \kappa_2(A) \kappa_2(B) +2 \kappa_2(B)^2)\Big)\,.
\eea
More generaly, for large $N$, $ \mathrm{var}(\delta K_n) ={\rm O}(N^{-2})$ and the higher  (classical) cumulants of $\delta K_n$ are more and more suppressed: $c^{(2k)}(\delta K_n)={\rm O}(N^{-2k})$, $c^{(2k+1)}(\delta K_n)={\rm O}(N^{-2k-2})$, see \cite{CMSS06}.


\subsection{An illustration: Horn's problem}
\label{Horn}
Horn's problem deals with the spectral properties of the sum of two (say Hermitian) $N\times N$ matrices of given spectrum. 
In particular, if one takes $A$ and $B$ randomly and independently chosen on their U$(N)$ orbits, what can be said about the spectrum of the random 
variable $A+B$ ? It suffices to consider, for $A$ and $B$ fixed, the random variable $A+U B U^\dagger$, with $U$ randomly taken in $\U(N)$ with the Haar measure.

 While at finite $N$ there is a  non trivial distribution of the eigenvalues of the sum,  the problem simplifies drastically in the large $N$ limit:
as proved by Voiculescu  \cite{Voi91}, $A$ and $UB\Ud$ may be regarded as free variables  and  the distribution of eigenvalues of $A+U B U^\dagger$ concentrates on a limit described by free convolution, with respect to which the free cumulants $\kappa_n$ are additive. In a notation parallel to Section \ref{Cumul-deltaK}, the random variable
 \bea  \label{DeltaKappa}  \delta\kappa_n(A,B;U)&:=& \kappa_n(A+U B U^\dagger)-\kappa_n(A)-\kappa_n(B)
\eea 
satisfies
\be  \label{limit} \lim_{N\to\infty}  \delta\kappa_n = 0 
\ee 
in probability, since $\delta \kappa_n$ asymptotically has a centered normal law of variance $\mathrm{O}(N^{-2})$
\cite{CMSS06}. In particular, in the $N\to\infty$ limit, the additivity $\kappa_n(A+U B U^\dagger)\approx \kappa_n(A)+\kappa_n(B)$ holds at the level of random variables, not only on average. Since the precursor $K_n$ converges to the free cumulant $\kappa_n$ in this limit, we note that we also have $K_n(A+U B U^\dagger)\approx K_n(A)+K_n(B)$ as $N\to\infty$.

 \begin{figure}[h] \begin{center}   
\includegraphics[width=.3\textwidth]
{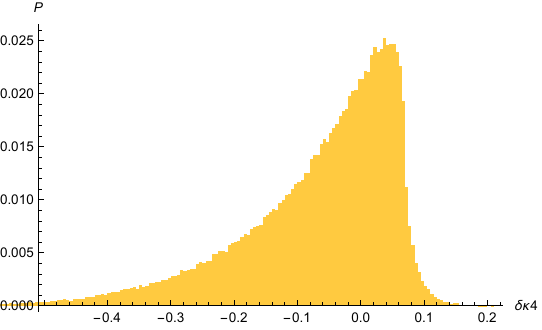}
\quad \includegraphics[width=.3\textwidth]
{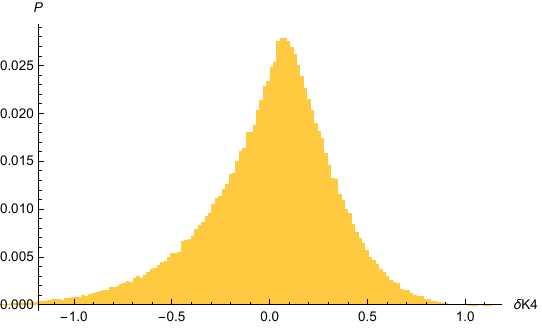}\quad \includegraphics[width=.3\textwidth]
{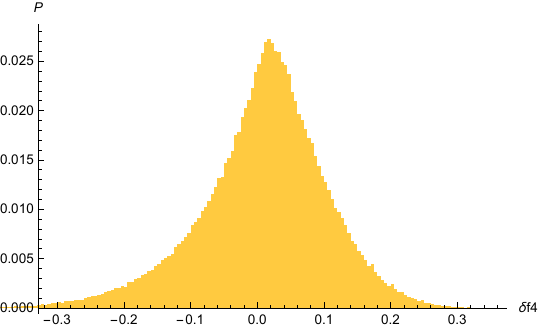}
\\ [15pt] \caption{\label{hornfig} 
 A numerical experiment with  $4\times 4$  matrices  $A$ and $B$ of regularly spaced spectrum on $[-1,1]$ and a sample of 200,000 Haar-distributed matrices $U$, showing the
histograms of $\delta \kappa_4(U)$ (left),
$\delta K_4(U)$ (middle) and $\delta f_4(U)$ (right). 
The improvement of the latter two with respect to the former is manifest: smaller
skewness, distribution closer to Gaussian, etc. The broader range of values of   $\delta K_4$ with respect to  $\delta \kappa_4$ is due to the prefactor $N^4(N^2 + 1) /((N^2 - 1) (N^2 - 4) (N^2 - 9)$
in \eqref{Kmrel}. } \end{center}\end{figure}

At finite $N$, the eigenvalues of $A+U B U^\dagger$ follow a complicated distribution in a certain convex polytope. It is then natural to search for simpler ways to approximate and visualise this distribution. The above results suggest to look at $\kappa_n(A+UB\Ud)$ and $K_n(A+UB\Ud)$, which are symmetric polynomials of these eigenvalues and which we expect to spread around $\kappa_n(A)+\kappa_n(B)$ and $K_n(A)+K_n(B)$ respectively, with more and more precision as $N$ grows. We note that, at finite $N$, the additivity $\E(K_n(A+UB\Ud))=K_n(A)+K_n(B)$ (on average) only holds for $K_n$, and not for $\kappa_n$. We thus expect the former to provide a better approximation of the Horn distribution compared to the latter. This is precisely quantified by the distributions of the random variables $\delta\kappa_n(U)$ and $\delta K_n(U)$ around zero. We performed some numerical experiments for various choices of $A$ and $B$ (picked from different ensembles: uniform spectrum, GUE, etc), comparing the distribution of these two variables.  Already for a small size $N=4$, the improvement of $K_n$ compared to $\kappa_n$ is manifest; see Figure \ref{hornfig}, where we also show data for the 4-th order finite free cumulant of \cite{AP}, which for traceless matrices reads $f_4={m}_4 -(2N - 3)/(N - 1) {m}_2^2$.

\subsection{Behaviour of more general polynomials and the orbit coproduct}
\label{Sec:Coproduct}

\paragraph{The orbit coproduct.} The previous subsections were mainly devoted to the additivity property \eqref{eq:AddK} of the precursors $K_n$ with respect to averaging over sums of conjugacy orbits. It is natural to wonder what is the behaviour of more general invariant polynomials under this operation. To formalise this question, let us introduce some notations and definitions. We will denote by $\AcN$ the algebra of polynomials on $N\times N$ matrices invariant under the action of $\U(N)$ by conjugacy. This is an associative, commutative and unital algebra. It is naturally $\mathbb{Z}_{\geq 0}$-graded with respect to the degree, in the sense that the decomposition
\begin{eqnarray}
    \AcN = \bigoplus_{n \geq 0} \AcN_n\,, \qquad \AcN_n := \bigl\lbrace f\in\AcN\,|\, \text{deg}(f)=n \bigr\rbrace\,,
\end{eqnarray}
satisfies $\AcN_n \cdot \AcN_{n'} \subseteq \AcN_{n+n'}$.

For $f\in\AcN$, recall that our object of interest is the quantity
\begin{equation}\label{eq:Coproduct}
    \Delta f(A,B) = \int_{\U(N)} DU \, f(A+UB\Ud)\,,
\end{equation}
seen as a function of the two matrices $A$ and $B$. It is clear that $\Delta f(A,B)$ is a polynomial in the entries of $A$ and $B$. Moreover, it is invariant under independent conjugations of both $A$ and $B$ by unitary matrices. It then takes the form of a finite sum of products of an invariant polynomial of $A$ with an invariant polynomial of $B$. Abstractly, $\Delta f$ can thus be seen as an element of the tensor product $\AcN \otimes \AcN$.\footnote{An element $h = \sum_i f_i \otimes g_i$ in the tensor factor $\AcN \otimes \AcN$ is identified with an invariant polynomial in two matrix variables by taking $h(A,B) = \sum_i f_i(A)g_i(B)$.} The operation of averaging over the sum of two conjugacy orbits then defines a coproduct $\Delta : \AcN \to \AcN \otimes \AcN$, which we will call the \textit{orbit coproduct} and which will be the main protagonist of this subsection.\\

We note that $\Delta$ is not a morphism with respect to the product of invariant polynomials and thus does not define a Hopf algebra structure on $\AcN$. We gather some of the basic properties of this coproduct in the following proposition.

\begin{proposition}\label{prop:Coproduct}
    The orbit coproduct $\Delta$
    \begin{enumerate}
        \item is cocommutative,
        \item is coassociative,
        \item is graded, \ie $\Delta\bigl(\AcN_n\bigr) \subseteq \bigoplus_{k=0}^n \AcN_k \otimes \AcN_{n-k}$,
        \item admits the precursors $K_n$ as primitive elements,  \ie
\begin{equation}\label{eq:KPrim}
    \Delta K_n = K_n \otimes 1 + 1 \otimes K_n\,.
\end{equation}
    \end{enumerate}
\end{proposition}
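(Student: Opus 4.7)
The plan is to observe that each of the four properties is essentially forced by the defining formula \eqref{eq:Coproduct}, combined with the $\U(N)$-invariance of $f$, the bi-invariance of the Haar measure, and Fubini's theorem. I would handle the four items in the order $1, 3, 4, 2$, saving coassociativity for last since it is the most substantial step.

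Cocommutativity (item 1) is the easiest: starting from $\Delta f(A,B) = \int_{\U(N)} DU\, f(A + U B \Ud)$, I would use the invariance of $f$ to rewrite the integrand as $f(\Ud A U + B)$, and then apply the Haar-preserving substitution $U \mapsto \Ud$ to obtain $\Delta f(B,A)$. The graded property (item 3) is immediate from the fact that, for $f$ homogeneous of degree $n$, the polynomial $f(A + U B \Ud)$ has total degree $n$ in the entries of $A$ and $B$ jointly (with $U$ treated as a parameter), so that every monomial that survives averaging over $U$ carries complementary bidegrees $k$ and $n-k$ with $0 \leq k \leq n$. The primitivity of $K_n$ (item 4) is then a direct restatement of the additivity identity \eqref{eq:AddK} already established in Theorem \ref{Thm:ZKSum}, since $\Delta K_n(A,B) = \int_{\U(N)} DU\, K_n(A + U B \Ud) = K_n(A) + K_n(B)$.

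The main step is coassociativity (item 2). I would evaluate both $(\Delta \otimes \mathrm{id})\Delta f$ and $(\mathrm{id} \otimes \Delta)\Delta f$ on a triple $(A,B,C)$ and use Fubini to merge the two layers of integration. Unfolding the definitions yields, on the one hand,
\begin{equation*}
\bigl((\Delta \otimes \mathrm{id})\Delta f\bigr)(A,B,C) = \iint_{\U(N)^2} DU\,DV\, f\bigl(A + U B \Ud + V C \Vd\bigr),
\end{equation*}
while on the other hand,
\begin{equation*}
\bigl((\mathrm{id} \otimes \Delta)\Delta f\bigr)(A,B,C) = \iint_{\U(N)^2} DU\,DV\, f\bigl(A + V (B + U C \Ud) \Vd\bigr).
\end{equation*}
Expanding the outer conjugation in the second integrand gives $V B \Vd + (V U) C (V U)^\dagger$, after which the change of variable $W = V U$ at fixed $V$ is Haar-preserving and converts the second expression into the first up to a harmless relabeling of the dummy variables. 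The main subtlety I anticipate is the careful book-keeping of the order of integrations: the change of variables must be performed with the outer variable $V$ held fixed, which is legitimate by Fubini since the integrand is polynomial and $\U(N)^2$ is compact.
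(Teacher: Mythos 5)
Your proposal is correct and follows essentially the same route as the paper: cocommutativity and grading from the definition and invariance of $f$, primitivity as a restatement of the additivity identity \eqref{eq:AddK}, and coassociativity by showing both iterated coproducts equal the common double integral $\iint_{\U(N)^2} DU\,DV\, f(A+UBU^\dagger+VCV^\dagger)$. The only difference is that you spell out the Haar-preserving change of variables $W=VU$ explicitly, which the paper leaves implicit.
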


\begin{proof}
    The cocommutativity means that for any $f\in\AcN$, $\Delta f$ is symmetric with respect to the exchange of the two tensor factors. Coming back to the functional formulation, this is equivalent to $\Delta f(A,B)=\Delta f(B,A)$, which is obvious from the definition \eqref{eq:Coproduct} of $\Delta$ and the invariance of $f$ under conjugation. The coassociativity $(\Delta\otimes\text{Id})\circ\Delta=(\text{Id}\otimes\Delta)\circ\Delta$ translates to the identity
    \begin{equation}
        \int_{\U(N)} DU\, \Delta f(A+UBU^\dagger,C) = \int_{\U(N)} DU\,\Delta f(A,B+UCU^\dagger)
    \end{equation}
    for all matrices $A,B,C$. This is easily verified, with these integrals taking the common value
    \begin{eqnarray}
        \iint\biggl._{\!\U(N)} DU\,DV\, f(A+UBU^\dagger+VCV^\dagger)\,,
    \end{eqnarray}
    which measures the average of $f$ on the sum of the three conjugacy orbits of $A$, $B$ and $C$. The graded property of $\Delta$ follows directly by going back to its functional definition. Finally, the primitivity of the precursors $K_n$ is an equivalent reformulation of their additivity property \eqref{eq:AddK}. 
\end{proof}

\begin{remark}\label{rem:StandardCoproduct}
    The algebra $\AcN$ can also be seen as the algebra of symmetric polynomials in $N$ variables, by passing to eigenvalues of matrices. We note that the algebra of symmetric polynomials in an arbitrary number of variables is equipped with a well-known cocommutative coassociative graded coproduct, defined by splitting any such polynomial into a sum of products of polynomials in two subsets of the variables (see for instance~\cite[Example I.5.25]{Mac}). Although this coproduct shares some common properties with the one $\Delta$ studied here, it also differs from it on various aspects. To start with, $\Delta$ is defined on symmetric functions of a fixed number $N$ of variables and will depend explicitely on $N$, in contrast with this other coproduct. Moreover, the latter defines a Hopf algebra structure whereas we observed above that $\Delta$ does not. We will comment on other similarities and differences of these two coproducts throughout this subsection.
\end{remark}

\paragraph{The Newton basis of $\boldsymbol{\AcN}$.} To understand the coproduct of more general polynomials than the precursors, it will be useful to discuss some natural bases of $\AcN$. For $\alpha$ a partition, recall that $p_\alpha(A)=\prod_{i=1}^{\ell(\alpha)} \Tr(A^{\alpha_i})$. If $n\leq N$, it is clear that $\lbrace p_\alpha \rbrace_{\alpha\vdash n}$ forms a basis of $\AcN_n$. Such a family is not independent anymore for $n> N$, due to the Newton identities relating traces of powers of $A$. However, it is easy to extract a basis from this family by removing the traces $\Tr(A^k)$ with $k>N$. Define $\Pc_n$ as the set of partitions $\alpha\vdash n$ such that $\alpha_i\leq N$ for all $i\in\lbrace 1,\dots,\ell(\alpha)\rbrace$, \ie partitions formed only by integers smaller or equal to $N$. We also let $\Pc=\bigsqcup_{n\geq 0} \Pc_n$.\footnote{\label{foot:Empy}Note that we include $n=0$ in this union, with the convention that $\Pc_0=\lbrace \emptyset \rbrace$ is the set formed by the empty partition $\emptyset$. The corresponding Newton polynomial is simply the constant function $p_{\emptyset}(A)=1$.} It is then clear that $\lbrace p_\alpha \rbrace_{\alpha\in\Pc_n}$ is a basis of $\AcN_n$ and $\lbrace p_\alpha \rbrace_{\alpha\in\Pc}$ is a basis of $\AcN$. We will refer to it as the Newton basis. We note that it is homogeneous, in the sense that each element $p_\alpha$ is of a well-defined degree $d(\alpha)=\alpha_1+\dots+\alpha_{\ell(\alpha)}$ (which coincides with the standard notion of degree of a partition). We will denote by $\Pi_{\alpha\beta}\null^\gamma$ the structure constants of the algebra $\AcN$ in this basis, defined by
\begin{equation}\label{eq:ProdNewton}
    p_\alpha\,p_\beta = \sum_{\gamma\in\Pc} \Pi_{\alpha\beta}\null^\gamma\,p_\gamma\,.
\end{equation}
For degrees $d(\gamma)$ smaller than $N$, these structure constants are quite simple:
\begin{equation}\label{eq:PiConc}
    \Pi_{\alpha\beta}\null^\gamma = \left\lbrace \begin{array}{ll}
        1 & \text{ if } \gamma \text{ is the concatenation of } \alpha \text{ and } \beta\,,  \\
        0 & \text{ otherwise}\,. 
    \end{array} \right.
\end{equation}

\paragraph{The HCIZ integral and the dual basis.} We define the \textit{dual} $\lbrace p^\alpha_\ast \rbrace_{\alpha\in\Pc_n}$ of the Newton basis by expanding the HCIZ integral as
\begin{equation}\label{eq:Dual}
    Z(A,B\,; z) = \sum_{\alpha\in\Pc} z^{d(\alpha)}\,p^\alpha_\ast(A)\,p_\alpha(B)\,.
\end{equation}
This produces another homogeneous basis of $\AcN$. By equation \eqref{expand2}, the dual polynomials $p^\alpha_\ast$ essentially coincide with the generalised precursors for degrees smaller or equal to $N$, up to a normalisation constant:
\begin{equation}\label{eq:pstarK}
    p^\alpha_\ast(A) = \frac{N^{\ell(\alpha)}|\Cl_\alpha|}{n!} K_\alpha(A)\,, \qquad  \forall \, \alpha\in\Pc_n\text{ with } n\leq N .
\end{equation}
For degrees bigger than $N$, the generalised precursors were not defined and the dual polynomials $p^\alpha_\ast(A)$ are thus new objects.\\

Let us also define numbers $(\eta^{\alpha\beta})_{\alpha,\beta\in\Pc}$ and $(\eta_{\alpha\beta})_{\alpha,\beta\in\Pc}$ by the following decompositions of the HCIZ integral:
\begin{equation}\label{eq:eta}
    Z(A,B\,; z) = \sum_{\alpha,\beta\in\Pc} z^{d(\alpha)}\,\eta^{\alpha\beta}\, p_\alpha(A)\,p_\beta(B) = \sum_{\alpha,\beta\in\Pc} z^{d(\alpha)}\,\eta_{\alpha\beta}\, p^\alpha_\ast(A)\,p^\beta_\ast(B)\,.
\end{equation}
The following properties are essentially straightforward.

\begin{lemma} ~
    \begin{enumerate}
        \item $(\eta^{\alpha\beta})_{\alpha,\beta\in\Pc}$ and $(\eta_{\alpha\beta})_{\alpha,\beta\in\Pc}$ are symmetric and block-diagonal by degrees, \ie $\eta^{\alpha\beta}=\eta_{\alpha\beta}=0$ if $d(\alpha) \neq d(\beta)$.
        \item $(\eta^{\alpha\beta})_{\alpha,\beta\in\Pc}$ and $(\eta_{\alpha\beta})_{\alpha,\beta\in\Pc}$ are inverse of one another, in the sense that
        \begin{equation}
            \sum_{\gamma\in\Pc} \eta^{\alpha\gamma}\,\eta_{\gamma\beta} = \delta^{\alpha}_{\;\,\beta}\,.
        \end{equation}
        \item For $\alpha\in\Pc$, we have
        \begin{equation}\label{eq:ppstar}
            p^\alpha_\ast = \sum_{\beta\in\Pc} \eta^{\alpha\beta}\, p_\beta \qquad \text{ and } \qquad p_\alpha = \sum_{\beta\in\Pc} \eta_{\alpha\beta}\, p^\beta_\ast\,.
        \end{equation}
        \item The map $\eta : \AcN \times \AcN \to \C$, defined on the Newton basis by $\eta(p_\alpha,p_\beta)=\eta_{\alpha\beta}$ and extended on $\AcN \times \AcN$ by bilinearity, is a non-degenerate symmetric bilinear form on $\AcN$.
        \item The bases $\lbrace p_\alpha \rbrace_{\alpha\in\Pc}$ and $\lbrace p^\alpha_\ast \rbrace_{\alpha\in\Pc}$ are dual with respect to the pairing $\eta(\cdot,\cdot)$.
    \end{enumerate}
\end{lemma}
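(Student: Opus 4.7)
All five claims follow from close reading of the three expansions of $Z(A,B\,;z)$ in \eqref{eq:Dual}--\eqref{eq:eta}, combined with the symmetry $Z(A,B\,;z)=Z(B,A\,;z)$ of the HCIZ integral (a consequence of the cyclicity of the trace and the invariance of the Haar measure under $U\mapsto U^\dagger$ in \eqref{hciz}), and the fact that $\{p_\alpha(A)\,p_\beta(B)\}_{\alpha,\beta\in\Pc}$ is a basis of the space of bivariate invariant polynomials, so that coefficients of Newton-basis expansions are uniquely determined.

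Property 1 is obtained by swapping $A$ and $B$ in the first expansion of \eqref{eq:eta}, relabelling $\alpha\leftrightarrow\beta$, and equating Newton-basis coefficients: the resulting identity $z^{d(\alpha)}\eta^{\alpha\beta}=z^{d(\beta)}\eta^{\beta\alpha}$ simultaneously yields the symmetry $\eta^{\alpha\beta}=\eta^{\beta\alpha}$ and the vanishing $\eta^{\alpha\beta}=0$ when $d(\alpha)\neq d(\beta)$, since powers of $z$ on the two sides must match. The same argument applied to the second expansion of \eqref{eq:eta} handles $\eta_{\alpha\beta}$. For the first identity of property 3, I compare \eqref{eq:Dual} with the first expansion of \eqref{eq:eta} and match coefficients of $p_\beta(B)$: this gives $z^{d(\beta)}p^\beta_\ast(A)=\sum_\alpha z^{d(\alpha)}\eta^{\alpha\beta}p_\alpha(A)$, which by the block-diagonality and symmetry of property 1 rearranges into $p^\beta_\ast=\sum_\alpha \eta^{\beta\alpha}p_\alpha$.

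Property 2 then follows by substituting this first identity of \eqref{eq:ppstar} into the third expansion of \eqref{eq:eta}, comparing with the first expansion, and matching coefficients of $p_\gamma(A)p_\delta(B)$: the resulting constraint on Newton-basis coefficients is exactly $\sum_\beta\eta^{\alpha\beta}\eta_{\beta\gamma}=\delta^\alpha_\gamma$, provided one knows that $(\eta^{\alpha\beta})$ is invertible. Crucially, invertibility is supplied \emph{from outside the system} by the Schur expansion \eqref{Z2ss}, which is manifestly diagonal in the basis $\{s_\lambda(A)s_\mu(B)\}$ with nonvanishing coefficients $1/C_\lambda$: since the Newton and Schur bases of $\AcN_n$ are related by an invertible change of basis (via the Frobenius--Schur formulae \eqref{frobenius}--\eqref{frobeniusinv}), this transports to the invertibility of $(\eta^{\alpha\beta})$ on each degree block. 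The second identity of \eqref{eq:ppstar} is then an immediate corollary of inverting the first. Finally, properties 4 and 5 are one-line consequences: symmetry and non-degeneracy of the bilinear form $\eta(\cdot,\cdot)$ follow from symmetry and invertibility of its Gram matrix $(\eta_{\alpha\beta})$, while the duality computation $\eta(p_\alpha,p^\beta_\ast)=\sum_\gamma\eta^{\beta\gamma}\eta(p_\alpha,p_\gamma)=\sum_\gamma\eta^{\beta\gamma}\eta_{\alpha\gamma}=\delta^\beta_\alpha$ uses property 2 directly.

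The only real subtlety, which is a matter of logical ordering rather than a genuine obstacle, is the apparent circularity between the second identity of property 3 and property 2: one must establish invertibility of $(\eta^{\alpha\beta})$ externally (from \eqref{Z2ss}) before asserting the inverse relation and thus the dual change of basis, rather than deriving both from a circular appeal to $\{p^\alpha_\ast\}_{\alpha\in\Pc}$ being a basis. Once this ordering is respected, the remainder of the proof is routine bookkeeping of the three expansions.
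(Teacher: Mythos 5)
Your proof is correct, and it supplies exactly the argument the paper leaves implicit (the lemma is stated with no proof, only the remark that the properties are ``essentially straightforward''). You also correctly isolate the one genuinely non-trivial point: the expansion defining $\eta_{\alpha\beta}$ only makes sense once $\lbrace p^\alpha_\ast\rbrace$ is known to be a basis, which must be imported from the diagonal Schur expansion \eqref{Z2ss} (with $C_\lambda\neq 0$ for $\ell(\lambda)\leq N$) rather than deduced circularly from property 2.
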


\begin{remark}
    The property 1 means that, for a given $\alpha\in\Pc$, only a finite number of $\eta^{\alpha\beta}$ and $\eta_{\alpha\beta}$ are non-zero, when $d(\beta)=d(\alpha)$. In particular the sums appearing in the points 2 and 3 are in fact finite and can be truncated to sums over $\Pc_{d(\alpha)}$.

    For degrees $d(\alpha),d(\beta)$ smaller than $N$, an explicit expression of $\eta^{\alpha\beta}$ can be extracted from the expansion \eqref{expand} of the HCIZ integral:
    \begin{equation}
        \eta^{\alpha\beta} = \sum_{\lambda\vdash n\atop \ell(\lambda) \leq N } \frac{1}{C_\lambda} 
\sum_{\alpha,\beta\vdash n}\frac{|\Cl_\alpha |\cdot|\Cl_\beta |}{n!^2} \hat\chi_\lambda(\alpha)\hat\chi_\lambda(\beta)\,, \qquad \forall \, \alpha,\beta\in\Pc_n \text{ with }  1 \leq n\leq N .
    \end{equation}
    A similar expression for the inverse $\eta_{\alpha\beta}$ can be derived using orthogonality relations \eqref{eq:OrthoChi} of symmetric characters. Reinserting this formula into equation \eqref{eq:ppstar} then essentially gives back the relation \eqref{eq:Kp1} between the Newton polynomials $p_\alpha$ and the generalised precursors $K_\alpha$, up to the normalisation constants in \eqref{eq:pstarK}. In contrast, when the degree $d(\alpha)=d(\beta)=n$ is strictly greater than $N$, one cannot extract such a direct expression of $\eta^{\alpha\beta}$ from the expansion \eqref{expand} of the HCIZ integral, since the latter involves all Newton polynomials $\lbrace p_\alpha \rbrace_{\alpha\vdash n}$ and not only the linearly independent ones $\lbrace p_\alpha \rbrace_{\alpha\in\Pc_n}$. To compute $\eta^{\alpha\beta}$, one would in principle need to replace the redundant $p_\alpha$'s using Newton identities. We will not attempt this here and will simply manipulate $\eta^{\alpha\beta}$ and  $\eta_{\alpha\beta}$ through their formal characterisation \eqref{eq:eta}.

    The numbers $\eta^{\alpha\beta}$ and $\eta_{\alpha\beta}$ depend in a complicated way on $N$. The large $N$ expansion of $\eta^{\alpha\beta}$ is essentially a generating function for monotone Hurwitz numbers $H^{\bullet\hspace{1pt},\hspace{1pt}\leq}_g(\alpha,\beta)$, while the expansion of $\eta_{\alpha\beta}$ encodes their strictly monotone variants $H^{\bullet\hspace{1pt},\hspace{1pt}<}_g(\alpha,\beta)$. This shows that the bilinear form $\eta(\cdot,\cdot)$ is also natural from the point of view of combinatorics and enumerative geometry.
\end{remark}

\paragraph{Explicit expression of the orbit coproduct.}  Recall that our goal is to describe the orbit coproduct \eqref{eq:Coproduct}. We are now in position to state the main result of this subsection, which gives an explicit and simple expression of this coproduct in the dual Newton basis.

\begin{theorem}\label{Thm:DeltaDual}
    Let $\gamma\in\Pc$. We have
    \begin{equation}\label{eq:DeltaDual}
        \Delta(p^\gamma_\ast) = \sum_{\alpha,\beta\in\Pc}  \Pi_{\alpha\beta}\null^\gamma\,p^\alpha_\ast \otimes p^\beta_\ast\,,
    \end{equation}
    where $\Pi_{\alpha\beta}\null^\gamma$ are the structure constants of the Newton basis, defined by equation \eqref{eq:ProdNewton}. In other words, for any matrices $A,B$:
    \begin{equation}
        \int_{\U(N)} DU\, p^\gamma_\ast(A+UB\Ud) = \sum_{\alpha,\beta\in\Pc}  \Pi_{\alpha\beta}\null^\gamma\,p^\alpha_\ast(A)\, p^\beta_\ast(B)\,
.    \end{equation}
\end{theorem}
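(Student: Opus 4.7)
The plan is to combine the multiplicative identity \eqref{eq:MultZ} for the HCIZ integral with the definition \eqref{eq:Dual} of the dual basis $\{p^\gamma_\ast\}_{\gamma\in\Pc}$, and then read off the identity \eqref{eq:DeltaDual} by comparing coefficients in an auxiliary third matrix $C$. The crucial observation is that the dual basis is designed precisely so that the HCIZ integral is a \emph{generating function} for the pairing between $p^\gamma_\ast$ and $p_\gamma$; thus any algebraic identity on $Z$ translates directly into a statement about the $p^\gamma_\ast$.

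Concretely, I would first expand the left-hand side of \eqref{eq:MultZ} in the dual basis in its first argument: since $Z(X,C;z)=\sum_{\gamma\in\Pc}z^{d(\gamma)}p^\gamma_\ast(X)p_\gamma(C)$, applying this with $X=A+UB\Ud$ and integrating term by term over $\U(N)$ gives
\begin{equation*}
\int_{\U(N)}DU\, Z(A+UB\Ud,C;z) = \sum_{\gamma\in\Pc} z^{d(\gamma)}\, \Delta(p^\gamma_\ast)(A,B)\, p_\gamma(C),
\end{equation*}
where I have used the definition \eqref{eq:Coproduct} of the orbit coproduct (viewing $\Delta(p^\gamma_\ast)$ as an invariant polynomial in the two matrices $A,B$). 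Next I would expand the right-hand side of \eqref{eq:MultZ} similarly in both factors and then collapse the product $p_\alpha(C)p_\beta(C)$ using the Newton-basis structure constants \eqref{eq:ProdNewton}:
\begin{equation*}
Z(A,C;z)\,Z(B,C;z) = \sum_{\alpha,\beta,\gamma\in\Pc} z^{d(\alpha)+d(\beta)}\,\Pi_{\alpha\beta}\null^\gamma\, p^\alpha_\ast(A)\, p^\beta_\ast(B)\, p_\gamma(C).
\end{equation*}

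At this point the main (but mild) technical step is to justify comparing coefficients. First, $\Pi_{\alpha\beta}\null^\gamma$ is nonzero only when $d(\gamma)=d(\alpha)+d(\beta)$ because the ordinary product on $\AcN$ is graded; hence the power $z^{d(\alpha)+d(\beta)}$ on the right can be rewritten as $z^{d(\gamma)}$ wherever the summand is nonzero, matching the $z$-grading on the left. Second, the family $\{p_\gamma(C)\}_{\gamma\in\Pc}$ is by construction a basis of $\AcN$ (viewed as polynomials in $C$), so it is linearly independent over $\C[z]$, and the two expansions may be equated coefficient by coefficient in $z^{d(\gamma)}p_\gamma(C)$. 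This yields exactly \eqref{eq:DeltaDual}.

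The main obstacle to watch out for is not in the algebra itself, which is short, but in the bookkeeping around partitions with components of size $>N$. The restriction $\alpha\in\Pc$ is what makes $\{p_\gamma(C)\}$ a genuine basis, so I need to check that the expansions of $Z(A,C;z)$, $Z(B,C;z)$ and $Z(A+UB\Ud,C;z)$ are all written in the reduced form \eqref{eq:Dual} (over $\Pc$, not over all partitions)—this is ensured by the \emph{definition} of $p^\gamma_\ast$, which absorbs all Newton relations into the dual coefficients, so no redundancy survives. Once this is observed, the comparison of coefficients is unambiguous and the identity follows; the functional reformulation in the theorem then simply re-expresses \eqref{eq:DeltaDual} by pairing against $A$ and $B$.
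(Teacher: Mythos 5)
Your proposal is correct and follows essentially the same route as the paper's own proof: expand $\int DU\,Z(A+UB\Ud,C;z)$ in the dual basis, apply the multiplicativity \eqref{eq:MultZ}, collapse $p_\alpha(C)p_\beta(C)$ via \eqref{eq:ProdNewton}, and match coefficients of $z^{d(\gamma)}p_\gamma(C)$ using the grading. Your extra remarks on why $\{p_\gamma\}_{\gamma\in\Pc}$ is a genuine basis and why the comparison of coefficients is unambiguous are sound and merely make explicit what the paper leaves implicit.
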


\begin{proof}
    We start with the decomposition \eqref{eq:Dual} of the HCIZ integral and take the average over a sum of conjugacy orbits in its first argument:
    \begin{equation}
        \sum_{\gamma\in\Pc} z^{d(\gamma)}\,\Delta (p^\gamma_\ast)(A,B)\,p_\gamma(C)  = \int_{\U(N)} DU \,Z(A+UB\Ud,C\,; z)\,.
    \end{equation}
    We now use the multiplicativity \eqref{eq:MultZ} of the HCIZ integral to rewrite the right-hand side as $Z(A,C\,; z)Z(B,C\,; z)$, hence
    \begin{equation}
        \sum_{\gamma\in\Pc} z^{d(\gamma)}\,\Delta (p^\gamma_\ast)(A,B)\,p_\gamma(C)  = \left( \sum_{\alpha\in\Pc} z^{d(\alpha)}\,p^\alpha_\ast(A)\,p_\alpha(C) \right) \left( \sum_{\beta\in\Pc} z^{d(\beta)}\,p^\beta_\ast(B)\,p_\beta(C) \right)\,.
    \end{equation}
    Expanding and using the product rule \eqref{eq:ProdNewton}, we get
    \begin{equation}
        \sum_{\gamma\in\Pc} z^{d(\gamma)}\,\Delta (p^\gamma_\ast)(A,B)\,p_\gamma(C) = \sum_{\alpha,\beta,\gamma\in\Pc} z^{d(\alpha)+d(\beta)}\,\Pi_{\alpha\beta}\null^\gamma\,p^\alpha_\ast(A) p^\beta_\ast(B)\,p_\gamma(C)\,.
    \end{equation}
    The grading property of the algebra $\AcN$ implies that $\Pi_{\alpha\beta}\null^\gamma$ is non-vanishing only if $d(\alpha)+d(\beta)=d(\gamma)$. Recalling also the functional interpretation of the tensor product, we finally get
    \begin{equation}
        \sum_{\gamma\in\Pc} z^{d(\gamma)}\,\Delta (p^\gamma_\ast)(A,B)\,p_\gamma(C) = \sum_{\gamma\in\Pc} z^{d(\gamma)}\left(\sum_{\alpha,\beta\in\Pc}\Pi_{\alpha\beta}\null^\gamma\,(p^\alpha_\ast\otimes p^\beta_\ast)(A,B)\right)\,p_\gamma(C)\,.
    \end{equation}
    Extracting the coefficient of $z^{d(\gamma)}\,p_\gamma(C)$ on both side then proves the theorem.
\end{proof}

Theorem \ref{Thm:DeltaDual} provides an explicit formula for the orbit coproduct $\Delta$, which is also quite convenient since the structure constants $\Pi_{\alpha\beta}\null^\gamma$ of the Newton basis are particularly simple (recall in particular their expression \eqref{eq:PiConc} valid for degrees $d(\gamma)\leq N$). An important example of this formula is obtained by specialising to $\gamma=[n]$ with $n\leq N$. In this case, there are only two non-vanishing structure constants $\Pi_{\alpha\beta}\null^{[n]}$, which are $\Pi_{[n]\,\emptyset}\null^{[n]}=1$ and $\Pi_{\emptyset\,[n]}\null^{[n]}=1$, where we recall that $\emptyset$ is the empty partition of degree $0$ (see footnote \ref{foot:Empy}). The result \eqref{eq:DeltaDual} then gives $\Delta p^{[n]}_\ast = p^{[n]}_\ast \otimes 1 + 1 \otimes p^{[n]}_\ast$. This is to be expected since the dual polynomial $p^{[n]}_\ast$ is proportional to the precursor $K_n$ by equation \eqref{eq:pstarK}: we then recover the primitivity \eqref{eq:KPrim} of $K_n$. The Theorem \ref{Thm:DeltaDual} can thus be thought of as a vast generalisation of the additivity of the precursors $K_n$. If $\gamma\vdash n$ with $n\leq N$, we can use it to compute the orbit coproduct of the generalised precursor $K_\gamma$ using its relation \eqref{eq:pstarK} with $p^\gamma_\ast$. After a few manipulations, this gives\footnote{Note that when $\text{concatenation}(\alpha,\beta)=\gamma$, we have $\ell(\alpha)+\ell(\beta)=\ell(\gamma)$, explaining why powers of $N$ decouple from equation \eqref{eq:OrbitK}. Moreover, we stress that the sum over the partitions $\alpha,\beta$ also includes the empty partition $\emptyset$, for which $d(\emptyset)=0$, $|\Cl_\emptyset|=1$, $\text{concatenation}(\alpha,\emptyset)=\alpha$ and $K_{\emptyset}(A)=1$ by convention.}
\begin{equation}\label{eq:OrbitK}
    \Delta K_\gamma(A,B)=\int DU\,K_\gamma(A+U B\Ud) = \sum_{\alpha,\beta \text{ such that} \atop  \text{concatenation}(\alpha,\beta)=\gamma} \frac{|\Cl_\alpha|}{d(\alpha)!}\,\frac{|\Cl_\beta|}{ d(\beta)!} \,\frac{d(\gamma)!}{|\Cl_\gamma|}\,K_\alpha(A)\,K_\beta(B)\,.
\end{equation}

In principle, Theorem \ref{Thm:DeltaDual} allows to compute the value of any invariant polynomial averaged over sums of conjugacy orbits. To do so in practice, one first have to express this polynomial in terms of the dual Newton basis $\lbrace p^\alpha_\ast \rbrace_{\alpha\in\Pc}$: this is the price to pay for the apparent simplicity of the formula \eqref{eq:DeltaDual}, where the complexity of the coproduct is hidden in the choice of basis in which it is expressed. One can translate these results for the coproduct into a more standard basis, such as the Newton one. In this formulation, the expression for the coproduct then becomes more involved:

\begin{corollary}\label{cor:DeltaNewton}
Let $\alpha\in\Pc$. Then we have
\begin{equation}
    \Delta(p_\alpha) = \sum_{\beta,\gamma,\alpha',\beta',\gamma'\in\Pc} \eta_{\alpha\alpha'}\,\eta^{\beta\beta'}\,\eta^{\gamma\gamma'} \Pi_{\beta'\gamma'}\null^{\alpha'}\,p_\beta\otimes p_\gamma\,.
\end{equation}
\end{corollary}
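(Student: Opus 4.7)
The plan is to reduce the statement to Theorem \ref{Thm:DeltaDual} by two successive changes of basis: first expand $p_\alpha$ in the dual Newton basis $\lbrace p^\bullet_\ast \rbrace$, then apply $\Delta$ there, and finally convert the result back to the Newton basis.

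First, I would use the second identity in \eqref{eq:ppstar} to write
\begin{equation*}
    p_\alpha = \sum_{\alpha'\in\Pc} \eta_{\alpha\alpha'}\,p^{\alpha'}_\ast\,.
\end{equation*}
The sum is effectively finite since $\eta_{\alpha\alpha'}=0$ unless $d(\alpha')=d(\alpha)$. Applying $\Delta$ by linearity and invoking Theorem \ref{Thm:DeltaDual} gives
\begin{equation*}
    \Delta(p_\alpha) = \sum_{\alpha'\in\Pc} \eta_{\alpha\alpha'}\, \Delta(p^{\alpha'}_\ast) = \sum_{\alpha',\beta',\gamma'\in\Pc} \eta_{\alpha\alpha'}\,\Pi_{\beta'\gamma'}\null^{\alpha'}\,p^{\beta'}_\ast \otimes p^{\gamma'}_\ast\,.
\end{equation*}

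Next I would convert each tensor factor back into the Newton basis, using the first identity in \eqref{eq:ppstar}, namely $p^{\beta'}_\ast = \sum_{\beta}\eta^{\beta\beta'}p_\beta$ and similarly for $p^{\gamma'}_\ast$. Substituting and collecting coefficients yields
\begin{equation*}
    \Delta(p_\alpha) = \sum_{\beta,\gamma,\alpha',\beta',\gamma'\in\Pc} \eta_{\alpha\alpha'}\,\eta^{\beta\beta'}\,\eta^{\gamma\gamma'}\,\Pi_{\beta'\gamma'}\null^{\alpha'}\,p_\beta\otimes p_\gamma\,,
\end{equation*}
which is the desired identity.

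There is essentially no obstacle here: the argument is purely a double change of basis from $\lbrace p_\bullet\rbrace$ to $\lbrace p^\bullet_\ast \rbrace$ and back. The only point requiring a brief comment is that all sums are well-defined (in fact finite): the block-diagonality by degree of the matrices $(\eta_{\alpha\alpha'})$ and $(\eta^{\beta\beta'})$, together with the gradedness of $\Pi_{\beta'\gamma'}\null^{\alpha'}$ (itself a consequence of the grading of $\AcN$ established in Proposition \ref{prop:Coproduct}, point 3), forces $d(\alpha')=d(\alpha)$ and $d(\beta')+d(\gamma')=d(\alpha')$, so only finitely many terms contribute for each fixed $\alpha$.
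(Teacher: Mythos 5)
Your proof is correct and follows exactly the route the paper intends (the paper states the corollary without writing out the derivation): expand $p_\alpha$ in the dual basis via \eqref{eq:ppstar}, apply Theorem \ref{Thm:DeltaDual}, and convert back, using the symmetry of $\eta^{\beta\beta'}$ to match the stated index placement. Your closing remark on finiteness of the sums via block-diagonality by degree is a sensible addition.
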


\paragraph{Algebra/coalgebra duality.} Theorem \ref{Thm:DeltaDual} can be rephrased more abstractly as follows.

\begin{corollary}\label{cor:DeltaDual2}
    The algebra $(\AcN,\cdot)$ is dual to the coalgebra $(\AcN,\Delta)$ with respect to the non-degenerate pairing $\eta(\cdot,\cdot)$.\\
    If $\lbrace f_\alpha\rbrace_{\alpha\in\mathcal{I}}$ and $\lbrace f^\alpha_\ast\rbrace_{\alpha\in\mathcal{I}}$ are homogeneous bases of $\AcN$ dual with respect to $\eta(\cdot,\cdot)$, then
    \begin{eqnarray}
        \Delta(f^\gamma_\ast) = \sum_{\alpha,\beta\in\mathcal{I}} \Lambda_{\alpha\beta}\null^\gamma\,f^\alpha_\ast \otimes f^\beta_\ast\,, \qquad \text{ where } \qquad f_\alpha\,f_\beta = \sum_{\gamma\in\mathcal{I}} \Lambda_{\alpha\beta}\null^\gamma\,f_\gamma\,.
    \end{eqnarray}
\end{corollary}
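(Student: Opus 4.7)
The plan is to deduce Corollary \ref{cor:DeltaDual2} directly from Theorem \ref{Thm:DeltaDual}, by recognising that the latter already is (in disguise) an abstract duality between the product of $\AcN$ and the coproduct $\Delta$ relative to the bilinear form $\eta(\cdot,\cdot)$. The strategy is first to establish this duality in a basis-independent form, and then to observe that the second claim is then a formal consequence of pairing with dual bases.

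Concretely, I would first extend $\eta$ to a bilinear pairing on $\AcN\otimes\AcN$ by
\begin{equation*}
(\eta\otimes\eta)(f_1\otimes f_2,\,g_1\otimes g_2):=\eta(f_1,g_1)\,\eta(f_2,g_2),
\end{equation*}
extended bilinearly, and establish the identity
\begin{equation*}
\eta(f\cdot g,\,h) \;=\; (\eta\otimes\eta)\bigl(f\otimes g,\,\Delta h\bigr), \qquad f,g,h\in\AcN.
\end{equation*}
By trilinearity, it suffices to check this on Newton-basis triples $f=p_\alpha$, $g=p_\beta$, $h=p^\gamma_\ast$. The left-hand side evaluates to $\Pi_{\alpha\beta}\null^\gamma$ by the definition \eqref{eq:ProdNewton} of the Newton structure constants together with the duality $\eta(p_\delta,p^\epsilon_\ast)=\delta_\delta^{\,\epsilon}$, while Theorem \ref{Thm:DeltaDual} makes the right-hand side equal to the same quantity. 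This establishes the algebra/coalgebra duality asserted in the first sentence of the corollary.

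With this duality in hand, the formula in the second sentence falls out automatically. Given two homogeneous bases $\lbrace f_\alpha\rbrace_{\alpha\in\mathcal{I}}$ and $\lbrace f^\alpha_\ast\rbrace_{\alpha\in\mathcal{I}}$ that are dual with respect to $\eta$, I would write the unknown coproduct expansion $\Delta(f^\gamma_\ast)=\sum_{\alpha,\beta\in\mathcal{I}} M_{\alpha\beta}\null^\gamma\,f^\alpha_\ast\otimes f^\beta_\ast$ and apply the duality identity to the triple $(f_\alpha,f_\beta,f^\gamma_\ast)$. The right-hand side of that identity collapses to $M_{\alpha\beta}\null^\gamma$ via $\eta(f_\delta,f^\epsilon_\ast)=\delta_\delta^{\,\epsilon}$, while the left-hand side equals $\Lambda_{\alpha\beta}\null^\gamma$ by the definition of the product structure constants. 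Hence $M_{\alpha\beta}\null^\gamma = \Lambda_{\alpha\beta}\null^\gamma$.

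The only bookkeeping point to be careful about is that $\AcN$ is infinite-dimensional, so one needs the coproduct expansions and the sums defining $\eta\otimes\eta$ to be well-posed. This is handled entirely by the grading: since $\eta$ is block-diagonal by degree, $\Delta$ is graded (Proposition \ref{prop:Coproduct}, point 3), and the bases are assumed homogeneous, all sums are finite at each fixed total degree and every manipulation reduces to a finite-dimensional linear-algebra computation. I do not anticipate any substantive obstacle: the corollary is a formal, basis-independent repackaging of Theorem \ref{Thm:DeltaDual}, and the argument is a standard algebra/coalgebra duality.
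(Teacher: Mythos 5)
Your proposal is correct and follows essentially the same route as the paper: the paper also reduces the corollary to a Newton-basis computation (which is exactly the content of Theorem \ref{Thm:DeltaDual}) and then transfers the resulting duality identity $\eta(f\cdot g,h)=(\eta\otimes\eta)(f\otimes g,\Delta h)$ to arbitrary homogeneous dual bases, with the grading guaranteeing finiteness of all sums. Your write-up just makes explicit the formal steps the paper describes as "straightforward".
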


Let us explain the content of this Corollary. The coalgebra structure $(\AcN,\Delta)$ is equivalent to an algebra structure $(\AcN\null^\ast,\Delta^\ast)$ on the dual space, where $\Delta^\ast: \AcN\null^\ast \otimes \AcN\null^\ast \to \AcN\null^\ast$ defines an associative and commutative product. On the other hand, the presence of the non-degenerate pairing $\eta(\cdot,\cdot)$ on $\AcN$ provides a natural identification between $\AcN$ and its dual $\AcN\null^\ast$. The corollary then states that this identification is also an isomorphism of algebra from $(\AcN,\cdot)$ to $(\AcN\null^\ast,\Delta^\ast)$. The proof is straightforward when writing the explicit definitions of these different algebraic structures in the Newton basis and its dual.

The main practical consequence of Corollary \ref{cor:DeltaDual2} is the second formulation of this duality property, namely that the structure constants of the product of $\AcN$ in any given homogeneous basis $\lbrace f_\alpha\rbrace_{\alpha\in\mathcal{I}}$ coincide with the structure constants of the orbit coproduct $\Delta$ on the dual basis $\lbrace f^\alpha_\ast\rbrace_{\alpha\in\mathcal{I}}$. This extends the result \eqref{eq:DeltaDual} to arbitrary bases of $\AcN$ and could also be derived naturally from the multiplicativity \eqref{eq:MultZ} of the HCIZ integral and its expansion
\begin{equation}
    Z(A,B\,; z) = \sum_{\alpha\in\mathcal{I}} z^{\text{deg}(f_\alpha)}\,f^\alpha_\ast(A)\,f_\alpha(B)\,.
\end{equation}

\begin{remark}
    Recall from Remark \ref{rem:StandardCoproduct} that the algebra of  symmetric polynomials is equipped with a natural coproduct. Let us further discuss the similarities and differences between this coproduct and the orbit one $\Delta$ defined here. While the latter is dual to the product of symmetric functions through the pairing $\eta(\cdot,\cdot)$, the former is known to be dual to the same product but with respect to another well-known pairing on symmetric functions, called the Hall scalar product. With respect to this Hall pairing, the Newton basis is orthogonal, so that the standard coproduct is already quite simple in this basis. In particular, the ``single traces'' $p_{[n]}$ turn out to be primitive elements. This is in contrast with our case at hand, where the primitive elements are the dual polynomials $p^{[n]}_\ast$ (proportional to the precursors), since the pairing $\eta(\cdot,\cdot)$ is quite more involved and in particular not diagonal in the Newton basis.
\end{remark}

\paragraph{Expressions for other bases.} So far, we have described the orbit coproduct $\Delta$ and the bilinear pairing $\eta(\cdot,\cdot)$ in the Newton basis and its dual. There are other natural bases of $\AcN$, as for instance the Schur polynomials $\lbrace s_\lambda \rbrace_{\ell(\lambda)\leq N}$.\footnote{This Schur basis $\lbrace s_\lambda \rbrace_{\ell(\lambda)\leq N}$ is labeled by partitions of length smaller or equal to $N$: this corresponds to Young diagrams with at most $N$ rows. In contrast, the Newton basis $\lbrace p_\alpha \rbrace_{\alpha\in\Pc}$ is labeled by the partitions $\Pc$ whose constituents are all smaller or equal to $N$, corresponding to Young diagrams with at most $N$ columns. For a fixed degree $n$, there is the same number of these two types of diagrams, as expected since they label two different bases of $\AcN_n$.} This Schur basis is particularly adapted to expand the HCIZ integral, as made obvious by the diagonal expansion \eqref{Z2ss}. In the language of this section, this means that this basis is orthogonal with respect to the bilinear pairing $\eta(\cdot,\cdot)$:
\begin{equation}\label{eq:EtaSchur}
    \eta(s_\lambda,s_\mu) = C_\lambda\,\delta_{\lambda\mu}\,, \qquad \text{ for } \ell(\lambda),\ell(\mu) \leq N\,.
\end{equation}
In particular, the dual of the Schur basis is itself up to a proportionality factor: $s^\lambda_\ast = C_\lambda^{-1}s_\lambda$. It is well-known that the structure constants of $\AcN$ in the Schur basis are the Littlewood--Richardson coefficients $c_{\lambda,\mu}\null^\nu$:
\begin{equation}
    s_\lambda\,s_\mu = \sum_{\ell(\nu)\leq N} c_{\lambda,\mu}\null^\nu\,s_\nu\,.
\end{equation}
It is now straightforward to write the orbit coproduct in terms of the Schur basis, as an application of Corollary \ref{cor:DeltaDual2}.

\begin{proposition}
    Let $\nu$ be a partition with $\ell(\nu)\leq N$. We then have
    \begin{equation}\label{eq:DeltaSchur}
        \Delta(s_\nu) = \sum_{\ell(\lambda),\ell(\mu) \leq N} \frac{C_\nu}{C_\lambda\,C_\mu}\,c_{\lambda,\mu}\null^\nu\,s_\lambda \otimes s_\mu\,.
    \end{equation}
\end{proposition}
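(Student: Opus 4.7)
The plan is to obtain \eqref{eq:DeltaSchur} as a direct application of Corollary \ref{cor:DeltaDual2} to the Schur basis, which is already diagonal with respect to $\eta(\cdot,\cdot)$. The only real work is to check the two structural inputs that Corollary \ref{cor:DeltaDual2} requires: the pairing values \eqref{eq:EtaSchur} and the product structure constants.

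First I would read off the pairing from the diagonal expansion \eqref{Z2ss} of the HCIZ integral. Indexing by $\mathcal{I}=\{\lambda:\ell(\lambda)\le N\}$, set $f_\lambda:=s_\lambda$. Then \eqref{Z2ss} can be rewritten in the general form $Z(A,B;z)=\sum_{\lambda\in\mathcal{I}} z^{d(\lambda)} f^\lambda_\ast(A)\,f_\lambda(B)$, with $f^\lambda_\ast=C_\lambda^{-1}s_\lambda$, which by the definition \eqref{eq:Dual} identifies the dual basis and simultaneously yields $\eta(s_\lambda,s_\mu)=C_\lambda\,\delta_{\lambda\mu}$, exactly the content of \eqref{eq:EtaSchur} (one can also read this off the second expression in \eqref{eq:eta} after expanding $p^\alpha_\ast$ in Schurs).

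Next, I would invoke the classical fact that the Littlewood--Richardson coefficients $c_{\lambda,\mu}\null^\nu$ are precisely the structure constants of the Schur basis, i.e.\ $\Lambda_{\lambda\mu}\null^\nu=c_{\lambda,\mu}\null^\nu$. Corollary \ref{cor:DeltaDual2} then gives
\begin{equation*}
\Delta(s^\nu_\ast)=\sum_{\ell(\lambda),\ell(\mu)\le N} c_{\lambda,\mu}\null^\nu\, s^\lambda_\ast \otimes s^\mu_\ast.
\end{equation*}
Substituting $s^\lambda_\ast=C_\lambda^{-1}s_\lambda$ on both sides, multiplying through by $C_\nu$, and collecting the $C$-factors yields \eqref{eq:DeltaSchur}.

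There is essentially no obstacle: everything reduces to combining \eqref{Z2ss} with Corollary \ref{cor:DeltaDual2}. The only point that deserves a brief sanity check is the grading constraint, namely that $c_{\lambda,\mu}\null^\nu=0$ unless $d(\lambda)+d(\mu)=d(\nu)$, which ensures compatibility with the graded coproduct (Proposition \ref{prop:Coproduct}, part 3) and makes the right-hand side of \eqref{eq:DeltaSchur} a finite sum for each fixed $\nu$. As an optional corroboration I would derive \eqref{eq:DeltaSchur} directly: apply $\int DU$ to $Z(A+UBU^\dagger,C;z)=Z(A,C;z)Z(B,C;z)$, expand both sides using \eqref{Z2ss}, use $s_\lambda(A)s_\mu(A)=\sum_\nu c_{\lambda,\mu}\null^\nu s_\nu(A)$ on the right, and equate the coefficients of $z^{d(\nu)}s_\nu(C)$ on both sides, recovering \eqref{eq:DeltaSchur} without detouring through the abstract duality.
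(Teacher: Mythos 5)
Your proposal is correct and follows essentially the same route as the paper: the paper also obtains \eqref{eq:DeltaSchur} as an immediate application of Corollary \ref{cor:DeltaDual2}, using the orthogonality \eqref{eq:EtaSchur} read off from the diagonal expansion \eqref{Z2ss} (so that $s^\lambda_\ast=C_\lambda^{-1}s_\lambda$) together with the Littlewood--Richardson coefficients as the structure constants of the Schur basis. Your optional direct verification via the multiplicativity \eqref{eq:MultZ} is a sound cross-check but not a genuinely different argument.
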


We finally consider yet another family of invariant functions, the elementary symmetric ones $\lbrace e_n \rbrace_{n=0,\dots,N}$, appearing in the characteristic polynomial $\det(x\,\text{Id}-A) = \sum_{n=0}^N (-1)^{N-n}\,x^n e_{N-n}(A)$ -- note that these generate $\AcN$ as an algebra but do not form a linear basis of it. The expression of the orbit coproduct on these functions can be derived from the results of~\cite{Marcus}, yielding the following proposition.

\begin{proposition}\label{prop:e}
    Let $n\in\lbrace 1,\dots,N\rbrace$. We then have
    \begin{equation}\label{eq:DeltaE}
        \Delta(e_n) = \sum_{0\leq i,j\leq n\atop i+j=n} \frac{(N-i)!(N-j)!}{N!(N-i-j)!}\, e_i \otimes e_j\,.
    \end{equation}
\end{proposition}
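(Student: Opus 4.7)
The plan is to specialise the Schur-basis formula \eqref{eq:DeltaSchur} for the orbit coproduct to the single-column partition $\nu=[1^n]$, using the classical identification $e_n=s_{[1^n]}$ (both are the elementary symmetric polynomial in the eigenvalues of $A$).

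First I would determine which Littlewood--Richardson coefficients $c_{\lambda,\mu}\null^{[1^n]}$ contribute. Since $c_{\lambda,\mu}\null^\nu\neq 0$ forces $\lambda,\mu\subseteq \nu$, and the only sub-Young-diagrams of the single column $[1^n]$ are the sub-columns $[1^i]$ with $0\leq i\leq n$, only pairs $\lambda=[1^i]$, $\mu=[1^{n-i}]$ can appear. The dual Pieri rule then gives
\begin{equation*}
s_{[1^i]}\cdot s_{[1^{n-i}]}=\sum_{b=0}^{\min(i,n-i)}s_{[2^b,\,1^{n-2b}]}\,,
\end{equation*}
so the coefficient of $s_{[1^n]}$ (obtained at $b=0$) is $1$. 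Substituting into \eqref{eq:DeltaSchur} collapses the sum to
\begin{equation*}
\Delta(e_n)=\sum_{i+j=n,\,i,j\geq 0}\frac{C_{[1^n]}}{C_{[1^i]}\,C_{[1^{j}]}}\, e_i\otimes e_{j}\,.
\end{equation*}

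Second, I would evaluate the content factors from definition \eqref{eq:dC}. The box in row $k$ of the column $[1^n]$ has content $c(\square)=1-k$, so
\begin{equation*}
C_{[1^n]}=\prod_{k=1}^{n}\Bigl(1-\tfrac{k-1}{N}\Bigr)=\frac{N!}{N^n\,(N-n)!}\,.
\end{equation*}
A direct algebraic simplification, using $i+j=n$, then gives $C_{[1^n]}/(C_{[1^i]}C_{[1^j]})=(N-i)!(N-j)!/(N!(N-n)!)$, matching exactly the prefactor in \eqref{eq:DeltaE}.

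The only mildly non-routine step is the Littlewood--Richardson identification, but the reduction to column sub-partitions via $\lambda\subseteq[1^n]$ is immediate and the resulting Pieri-type product is entirely classical. As an alternative route, one could appeal directly to \cite{Marcus}: the identity $\E_U\bigl[\det(xI-A-UB\Ud)\bigr]=\chi_A(x)\boxplus_N\chi_B(x)$ for the finite free additive convolution, combined with the known coefficient formula for $\boxplus_N$, yields the result upon extracting the coefficient of $x^{N-n}$ in the expansion $\chi_A(x)=\sum_{k=0}^N(-1)^k e_k(A)\,x^{N-k}$. Both approaches give the same answer and provide mutual consistency checks.
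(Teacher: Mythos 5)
Your proof is correct, and it takes a genuinely different route from the paper. The paper does not prove Proposition \ref{prop:e} internally: it simply states that the formula ``can be derived from the results of~\cite{Marcus}'', i.e.\ from the finite free convolution identity for averaged characteristic polynomials — which is precisely the alternative route you sketch at the end. Your main argument instead derives \eqref{eq:DeltaE} self-containedly from the paper's own Schur-basis formula \eqref{eq:DeltaSchur}, using $e_n=s_{[1^n]}$, the fact that $c_{\lambda,\mu}\null^{[1^n]}$ is supported on column shapes $\lambda=[1^i]$, $\mu=[1^{n-i}]$ with value $1$ (dual Pieri), and the explicit evaluation $C_{[1^n]}=N!/(N^n(N-n)!)$, whose ratio indeed simplifies to $(N-i)!(N-j)!/(N!(N-i-j)!)$ when $i+j=n$. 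All steps check out (the hypothesis $n\leq N$ guarantees $\ell([1^n])\leq N$ so that \eqref{eq:DeltaSchur} applies and $e_n=s_{[1^n]}$ is a genuine basis element). What your approach buys is independence from the real-rootedness machinery of \cite{Marcus}: the proposition becomes a one-line specialisation of the duality between the product and the orbit coproduct already established in Section \ref{Sec:Coproduct}, and conversely it gives an internal consistency check of \eqref{eq:DeltaSchur} against the known finite free convolution coefficients.
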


\begin{remark}
    Let us compare again our results with those concerning the standard coproduct and the Hall scalar product on symmetric functions. It is well-known that Schur polynomials $s_\lambda$ are orthonormal with respect to the Hall scalar product, in contrast with the orthogonality property \eqref{eq:EtaSchur} found here for the pairing $\eta(\cdot,\cdot)$. As a consequence, the Schur polynomials are self-dual with respect to the Hall pairing and the standard coproduct of $s_\nu$ simply reads $\sum_{\lambda,\mu} c_{\lambda,\mu}\null^\nu\,s_\lambda \otimes s_\mu$ (\ie removing the ``content'' factors compared to the equation \eqref{eq:DeltaSchur} for the orbit coproduct).

    This also suggests a relation between the pairing $\eta(\cdot,\cdot)$ and the Hall scalar product. Namely, the former can be obtained from the latter by inserting a linear operator, which acts diagonally on the Schur basis $\lbrace s_\lambda\rbrace_{\ell(\lambda)\leq N}$ by the factor $C_\lambda$. Such an operator is in fact well-known in the representation theory of symmetric and unitary groups and their Schur-Weyl duality: it is built as a natural combination of the Jucys-Murphy elements and plays an important role in the abstract formulation of the  Weingarten--Samuel--Collins calculus (see for instance~\cite{Collins03,CollinsSn06,MatNov,ZJ09}).

    Let us finally comment on the elementary symmetric polynomial $e_n$, whose orbit coproduct is given by equation \eqref{eq:DeltaE}. For the standard coproduct on symmetric functions, this formula is replaced by the even simpler one $\sum_{i+j=n} e_i \otimes e_j$.
\end{remark}


\section{Averaged precursors and sums of independent random matrices}
\label{sec:Proba}

\subsection{Random matrices and averaged polynomials}
\label{sec:Random}

So far, the generalised precursors $K_\alpha$ have been defined as invariant polynomials on $N\times N$ matrices, \ie as elements of the algebra $\AcN$ in the notations of Section \ref{Sec:Coproduct}. In particular, this definition is purely algebraic and does not require any probabilistic considerations. However, having in mind applications to the theories of random matrices and free probability, we will now consider the setup where we are given a random matrix $A$ and we want to study the scalar random variable $K_\alpha(A)$ and its properties, such as its expectation value.

Let us recall or introduce a few notations. We will denote by $\McN$ the space of $\U(N)$-invariant probability measures on $N\times N$ matrices. If $A$ is a random matrix with distribution $\mu\in\McN$, we will write $A\sim \mu$. For any invariant polynomial $f\in\AcN$, one can then consider the $\C$-valued random variable $f(A)$. Of particular importance will be its expectation value
\begin{equation}
    \fb[\mu] := \E_{A\sim\mu}\bigl( f(A) \bigr) = \int f(A)\,\mu(A)\,,
\end{equation}
where the last expression denotes the Lebesgue integral of $f$ on the space of $N\times N$ matrices, with respect to the measure $\mu$. As the notation suggests, this defines a functional $\fb : \McN \to \C$ on the space of matrix probability measures, which assigns a number $\fb[\mu]$ to any $\mu\in\McN$. In particular, one can define the \textit{averaged generalised precursors} $\Kb_\alpha$ in this way. Note that we have already computed the value of these functionals $\Kb_\alpha$ on a very specific choice of measure in $\McN$, namely the Gaussian Unitary Ensemble measure $\text{GUE}(N,\sigma)$, see Section \ref{int-wick}. We then found $\Kb_\alpha[\text{GUE}(N,\sigma)]=\delta_{\alpha,[2^n]}\sigma^{2n}$ (see Proposition \ref{prop:GUE}). This illustrates on a particular example how the functionals $\Kb_\alpha$ assign a number to a measure.

If $\fb$ and $\gb$ are two such averaged polynomials, one can naturally construct the linear combination $a\fb+b\gb$ (with $a,b\in\C$) and the product $\fb\,\gb$, which are also functionals on $\McN$. We will denote by $\OcN$ the space of functionals obtained by taking finite products and linear combinations of averaged polynomials, which is then an associative commutative unital algebra.\footnote{The unit element of $\AcN$ is the constant polynomial equal to 1. Its average is the constant functional which assigns 1 to each measure, defining the unit element of $\OcN$. By a slight abuse of notations, we will denote these unit
elements by 1.} Such functionals can be thought of as natural ``observables'' for random (one-)matrix models. We note that the map $\AcN\to\OcN, f \mapsto\fb$ is not a morphism of algebra, since
\begin{equation}
    \overline{fg}[\mu] \neq \fb[\mu]\, \gb[\mu]
\end{equation}
for generic polynomials $f,g\in\AcN$ and measure $\mu\in\McN$. Indeed, the average of a product is not the product of the averages in general. As an algebra, $\OcN$ is in fact quite more involved than $\AcN$. More precisely, if $\lbrace 1 \rbrace \cup \lbrace f_\alpha \rbrace_{\alpha\in\mathcal{I}}$ is a linear basis of $\AcN$ (as a vector space), then the averaged polynomials $\lbrace \fb_\alpha \rbrace_{\alpha\in\mathcal{I}}$ form a set of algebraically independent generators of $\OcN$ (as an algebra).\\

Among invariant polynomials, a particular role is played by the normalised traces $m_n(A)=\frac{1}{N}\Tr(A^n)$. Following the notation introduced above, we can consider their expectation values $\overline{m}_n[\mu]$: these are the moments of the matrix distribution $\mu$, in the standard sense of random matrix theory. We stress that so far, we have essentialy only used the purely algebraic objects $m_n$, rather than the integrated quantities $\overline{m}_n$. By a slight abuse of terminology, we used a similar name for the moment of a matrix $m_n(A)$ and the moment of a distribution $\overline{m}_n[\mu]$, letting the context and the different notations distinguish these two concepts. Similarly, we called $\kappa_n(A)$ the $n$-th free cumulant of the (deterministic) matrix $A$, while we call $\overline{\kappa}_n[\mu]$ the $n$-th free cumulant of the distribution $\mu$.\footnote{With this definition, the free cumulant $\overline{\kappa}_n[\mu]$ of a distribution is given by the expectation value of a certain polynomial built from normalised traces. For instance, the free cumulant of degree 2 is $\overline{\kappa}_2[\mu]=\overline{m}_2[\mu]-\overline{m_1^2}[\mu]$. In the theory of free probability, it would be more standard to define the free cumulants as polynomials of integrated moments $\overline{m}_k[\mu]$. For example, the free cumulant of degree 2 would be $\overline{m}_2[\mu]-\overline{m_1}[\mu]^2$, which does not coincide with the quantity $\overline{\kappa}_2[\mu]=\overline{m}_2[\mu]-\overline{m_1^2}[\mu]$ introduced above. This apparent contradiction can in fact be resolved. Indeed, the theory of free probability typically deals with random matrices in the large $N$ limit, for which fluctuations are suppressed. For instance, in this limit, we have $\overline{m_1^2}[\mu] \approx \overline{m_1}[\mu]^2$, so that the two notions of degree 2 free cumulant discussed above coincide. Our definition of $\overline{\kappa}_n[\mu]$ can thus be thought of as a specific choice for extending the standard definition of free cumulants to finite size matrices, with the focus put on the associated invariant polynomials $\kappa_n(A)$ and thus on algebraic considerations.}


\subsection{Sums of independent random matrices and convolution}

Suppose that we are given two independent $\U(N)$-invariant random matrices $A\sim \mu$ and $B\sim\nu$ and consider $C=A+B$. We will denote by $\mu\star\nu\in\McN$ the distribution of this random matrix, which is the \textit{convolution} of the measures $\mu$ and $\nu$. The goal of this subsection is to answer the following question: suppose we know the values of functionals $F\in\OcN$ on the measures $\mu$ and $\nu$, can we find the value of these functionals on the convolution $\mu\star\nu$. Recall that any functional in $\OcN$ is built from algebraic combinations of averaged invariant polynomials $\fb$, with $f\in\AcN$. To find the value of such functionals on $\mu\star\nu$, it is therefore enough to know the value of all averaged polynomials $\fb$ on this convolution. Our main object of interest is then
    \begin{equation}\label{eq:fConvolDef}
        \fb[\mu\star\nu] = \E_{C\sim\mu\star\nu}\bigl(f(C)\bigr) = \iint f(A+B)\,\mu(A)\,\nu(B)\,.
    \end{equation}
In other words, we are asking whether we can reconstruct the expectation values of invariant polynomials of $C=A+B$ from the expectation values of invariant polynomials of $A\sim\mu$ and $B\sim\nu$. The main result of this subsection is then the following theorem, which answers this question using the orbit coproduct introduced in Section \ref{Sec:Coproduct}.

\begin{theorem}\label{Thm:fConvolution}
Let $f\in\AcN$ be an invariant polynomial and $\mu,\nu\in\McN$ be $\U(N)$-invariant measures. We write the orbit coproduct of $f$ as $\Delta f=\sum_{i=1}^p g_i\otimes h_i$, with $g_i,h_i\in\AcN$. Then\footnote{In the second member of equation \eqref{eq:fbConv}, we have extended the ``averaging operator'' $\overline{\,\cdot\,} : \AcN \to \OcN$ to $\AcN \otimes \AcN \to \OcN \otimes \OcN$ and identified elements of $\OcN \otimes \OcN$ with functionals of two measures, by evaluating the first and second tensor factor on the first and second measure respectively.}
\begin{equation}\label{eq:fbConv}
    \fb[\mu\star\nu] = \overline{\Delta f}[\mu,\nu] = \sum_i \overline{g_i}[\mu]\,\overline{h_i}[\nu]\,.
\end{equation}
\end{theorem}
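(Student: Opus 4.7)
The plan is to unpack the definition of $\fb[\mu\star\nu]$ as a double integral against the product measure $\mu\otimes\nu$, and then use the $\U(N)$-invariance of $\nu$ (or equivalently of $\mu$) to insert a Haar average that turns $f(A+B)$ into the orbit coproduct $\Delta f(A,B)$.

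More concretely, I would start from the definition \eqref{eq:fConvolDef}:
\begin{equation*}
\fb[\mu\star\nu] = \iint f(A+B)\,\mu(A)\,\nu(B)\,.
\end{equation*}
Since $\nu$ is $\U(N)$-invariant, the change of variables $B \mapsto UBU^\dagger$ (for any fixed $U\in\U(N)$) preserves the measure $\nu$. Therefore, for each $U$,
\begin{equation*}
\int f(A+B)\,\nu(B) = \int f(A+UBU^\dagger)\,\nu(B)\,.
\end{equation*}
Averaging this identity over $U$ with the normalised Haar measure (and using Fubini, which applies since $f$ is polynomial and $\mu,\nu$ are probability measures) gives
\begin{equation*}
\int f(A+B)\,\nu(B) = \int \left(\int_{\U(N)} DU\, f(A+UBU^\dagger)\right) \nu(B) = \int \Delta f(A,B)\,\nu(B)\,,
\end{equation*}
by the very definition \eqref{eq:Coproduct} of the orbit coproduct.

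Inserting this into $\fb[\mu\star\nu]$ and applying Fubini once more yields
\begin{equation*}
\fb[\mu\star\nu] = \iint \Delta f(A,B)\,\mu(A)\,\nu(B)\,.
\end{equation*}
Writing $\Delta f = \sum_{i=1}^p g_i\otimes h_i$, that is $\Delta f(A,B) = \sum_i g_i(A)\,h_i(B)$, and using independence of the two integrations, one obtains
\begin{equation*}
\fb[\mu\star\nu] = \sum_{i=1}^p \left(\int g_i(A)\,\mu(A)\right) \left(\int h_i(B)\,\nu(B)\right) = \sum_{i=1}^p \overline{g_i}[\mu]\,\overline{h_i}[\nu]\,,
\end{equation*}
which is exactly $\overline{\Delta f}[\mu,\nu]$. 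No real obstacle arises in this argument: the only substantive step is the insertion of the Haar average, which is free because of the $\U(N)$-invariance of $\nu$ (the symmetric choice of inserting it around $A$ instead of $B$ works equally well thanks to the cocommutativity of $\Delta$ noted in Proposition \ref{prop:Coproduct}). Everything else is Fubini and bookkeeping.
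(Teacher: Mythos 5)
Your proof is correct and follows essentially the same route as the paper's: insert a Haar average using the $\U(N)$-invariance of $\nu$ to recognise the orbit coproduct $\Delta f(A,B)$, then apply Fubini to factorise the integrals. The only difference is that you spell out the insertion step (change of variables for fixed $U$, then average) slightly more explicitly than the paper does.
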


\begin{proof}
    We want to compute the quantity \eqref{eq:fConvolDef}. Since the measure $\nu$ is $\U(N)$-invariant, we can freely insert a conjugation of $B$ by $U\in \U(N)$ and integrate over the Haar measure, yielding:
    \begin{equation*}
        \fb[\mu\star\nu] = \iint \left( \int_{\U(N)} DU\,f(A+UB\Ud) \right)\,\mu(A)\,\nu(B)\,.
    \end{equation*}
    We recognise in the inner integral the definition of the orbit coproduct \eqref{eq:Coproduct} of $f$. Writing it as $\Delta f = \sum_{i=1}^p g_i \otimes h_i$, we get
    \begin{equation*}
        \fb[\mu\star\nu] = \iint \Delta f(A,B)\,\mu(A)\,\nu(B) = \sum_{i=1}^p \left( \int g_i(A) \mu(A) \right) \left( \int h_i(B) \nu(B) \right) = \sum_i \overline{g_i}[\mu]\,\overline{h_i}[\nu] \,,
    \end{equation*}
    where we have used Fubini's theorem to factorise the integrals in the second equality (in the probablistic language, we used the independence of the random variables $A$ and $B$). This proves the theorem.
\end{proof}

Theorem \ref{Thm:fConvolution} directly relates the expectation value of averaged polynomials on convoluted measures with the orbit coproduct $\Delta$ on the polynomials themselves. Since the results of Section \ref{Sec:Coproduct} allow one to compute the orbit coproduct of any invariant polynomial, the Theorem \ref{Thm:fConvolution} then provides a concrete (albeit involved) procedure to compute $F[\mu\star\nu]$ for any functional $F\in\OcN$, answering the question raised at the beginning of this subsection. In Section \ref{Sec:Coproduct}, equation \eqref{eq:OrbitK}, we have argued that the generalised precursors $K_\gamma$ have a quite simple orbit coproduct. In the present setup, we then get the following.

\begin{corollary}\label{cor:ConvK}
    Let $\mu,\nu\in\McN$ be invariant measures on $N\times N$ matrices and $\gamma$ be a partition of degree smaller or equal to $N$. Then
    \begin{equation}\label{eq:ConvK}
        \Kb_\gamma[\mu\star\nu] =  \sum_{\alpha,\beta \text{ such that} \atop  \text{concatenation}(\alpha,\beta)=\gamma} \frac{|\Cl_\alpha|}{d(\alpha)!}\,\frac{|\Cl_\beta|}{ d(\beta)!} \,\frac{d(\gamma)!}{|\Cl_\gamma|}\,\Kb_\alpha[\mu]\,\Kb_\beta[\nu]\,.
    \end{equation}
    In particular, for $n\leq N$,
    \begin{equation}
        \Kb_n[\mu\star\nu] = \Kb_n[\mu] + \Kb_n[\nu]\,.
    \end{equation}
\end{corollary}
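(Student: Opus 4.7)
The plan is to view this corollary as a direct specialisation of Theorem \ref{Thm:fConvolution} to the generalised precursors $K_\gamma$, for which the orbit coproduct has already been computed explicitly in equation \eqref{eq:OrbitK}. No new combinatorial input is needed, only careful bookkeeping of the prefactors.

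Concretely, I would first apply Theorem \ref{Thm:fConvolution} with $f=K_\gamma$, which gives $\Kb_\gamma[\mu\star\nu]=\overline{\Delta K_\gamma}[\mu,\nu]$. Then I would substitute the explicit formula \eqref{eq:OrbitK} for $\Delta K_\gamma$, which expresses it as a sum over pairs $(\alpha,\beta)$ of partitions whose concatenation equals $\gamma$, weighted by the combinatorial prefactor $\tfrac{|\Cl_\alpha|}{d(\alpha)!}\tfrac{|\Cl_\beta|}{d(\beta)!}\tfrac{d(\gamma)!}{|\Cl_\gamma|}$, tensored with $K_\alpha\otimes K_\beta$. Since expectation values of $\U(N)$-invariant measures act linearly on the two tensor factors independently, the averaging map $\overline{\,\cdot\,}$ simply replaces each $K_\alpha(A)\,K_\beta(B)$ by $\Kb_\alpha[\mu]\,\Kb_\beta[\nu]$, producing exactly the right-hand side of \eqref{eq:ConvK}.

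For the specialisation to $\gamma=[n]$ with $n\leq N$, I would observe that the only ways to write the single-part partition $[n]$ as the concatenation of two (possibly empty) partitions $(\alpha,\beta)$ are $\alpha=[n],\beta=\emptyset$ and $\alpha=\emptyset,\beta=[n]$. For each of these two contributions, the combinatorial prefactor collapses to $1$ (using $|\Cl_{[n]}|=(n-1)!$, $d([n])!=n!$ and $|\Cl_\emptyset|=d(\emptyset)!=1$), and $\Kb_\emptyset\equiv 1$ by convention. This yields the additivity $\Kb_n[\mu\star\nu]=\Kb_n[\mu]+\Kb_n[\nu]$.

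There is really no obstacle in this proof: all the hard work has been packaged into Theorem \ref{Thm:fConvolution} (which in turn rests on the multiplicativity \eqref{eq:MultZ} of the HCIZ integral) and into the computation of $\Delta K_\gamma$ carried out in Section \ref{Sec:Coproduct}. The only point requiring a minor verification is the check that the prefactor in \eqref{eq:OrbitK} reduces to unity in the two surviving terms of the single-part case, which is a one-line arithmetic computation.
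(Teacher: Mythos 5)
Your proof is correct and is exactly the route the paper intends: Corollary \ref{cor:ConvK} is obtained by specialising Theorem \ref{Thm:fConvolution} to $f=K_\gamma$ and inserting the explicit coproduct formula \eqref{eq:OrbitK}, with the single-part case following from the two trivial concatenations of $[n]$ and the collapse of the prefactor to $1$. Nothing is missing.
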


This shows that the averaged precursors $\Kb_n$ are additive with respect to the convolution $\star$. We will develop further on this in the next subsection. The formula \eqref{eq:ConvK} then provides a generalisation of this additivity property for averaged generalised precursors $\Kb_\gamma$. For concreteness, let us give an explicit example of the use of this formula, by taking $\gamma=[1^2]$. A straightforward computation then gives
\begin{equation}\label{eq:ConvK11}
    \Kb_{[1^2]}[\mu\star\nu] = \Kb_{[1^2]}[\mu] + \Kb_{[1^2]}[\nu] + 2\,\Kb_{1}[\mu]\,\Kb_{1}[\nu]\,.
\end{equation}

\begin{remark}\label{rem:CoproductConv}
    The results of this subsection can be rephrased more formally as follows. There exists a \emph{``convolution coproduct''} $\Db : \OcN \to \OcN \otimes \OcN$ such that
    \begin{equation}
        F[\mu\star\nu] = \Db(F)[\mu,\nu]\,, \qquad \forall\,F\in\OcN\,, \;\; \forall\,\mu,\nu\in\McN\,.
    \end{equation}
    This coproduct is a morphism of algebra, with respect to the standard product $\cdot$  of functionals in $\OcN$, since
    \begin{equation*}
        \Db(aF+bG)[\mu,\nu]=a\,F[\mu\star\nu]+b\,G[\mu\star\nu]=a\,\Db(F)[\mu,\nu]+b\,\Db(G)[\mu,\nu]\,,
    \end{equation*}
    \begin{equation*}
        \Db(F\cdot G)[\mu,\nu]=F[\mu\star\nu]\,G[\mu\star\nu]=\Db(F)[\mu,\nu]\,\Db(G)[\mu,\nu]\,,
    \end{equation*}
    for all $F,G\in\OcN$ and $a,b\in\C$. Following Theorem \ref{Thm:fConvolution}, $\Db$ is then uniquely characterised by its values $\Db(\fb) = \overline{\Delta f}$ on the averaged polynomials $\fb$, since the latter are generators of $\OcN$.

    We note that $(\OcN,\cdot,\Db)$ defines a bialgebra, since the convolution coproduct is a morphism with respect to the product of functionals. This is in contrast with the algebra of invariant polynomials $\AcN$ equipped with the orbit coproduct $\Delta$ (see Section \ref{Sec:Coproduct}).
\end{remark}

\subsection{Collins--Gurau--Lionni cumulants and additive functionals}
\label{sec:CGL}

\paragraph{Definition.} Having derived Corollary \ref{cor:ConvK}, it is natural to wonder whether there exist other functionals in $\OcN$ which are additive with respect to the convolution $\star$, in addition to the averaged precursors $\Kb_n$. The answer to that question is positive and is a reformulation of the recent work~\cite{CGL} of Collins, Gurau and Lionni.\footnote{Note that their work generalises some of these results from random matrices to random tensors.} We summarise these results here and take the opportunity to discuss the similarities and the differences between their approach and ours. The key ingredient is
\begin{equation}\label{eq:Zbar}
    \Zb_{B,z}[\mu] = \E_{A\sim\mu}\bigl[e^{zN\,\Tr(AB)}\bigr] = \int e^{zN\,\Tr(AB)}\,\mu(A)\,,
\end{equation}
where $A$ is a random matrix which is integrated along a given measure $\mu\in\McN$, while $B$ is kept deterministic and will serve as a bookkeeping variable. The Collins-Gurau-Lionni (CGL) cumulants $\Kcgl_\alpha[\mu]$ are then extracted from the expansion of the logarithm of this quantity along Newton polynomials of $B$ -- see~\cite[Equation (4.22)]{CGL}:\footnote{Here, we have expanded $\log \Zb_{B,z}[\mu]$ along the Netwon basis $\lbrace p_\alpha(B)\rbrace_{\alpha\in\Pc}$, as defined in Section \ref{Sec:Coproduct} (see above equation \eqref{eq:ProdNewton}). We have used a different normalisation than~\cite{CGL}, which simplifies some of the discussions below.}
\begin{equation}\label{eq:CGL}
    \log \Zb_{B,z}[\mu] =  \sum_{\alpha\in\Pc} z^{d(\alpha)}  \frac{N^{2-\ell(\alpha)}|\Cl_\alpha |}{d(\alpha)!} \Kcgl_\alpha[\mu]\,  p_\alpha(B)\,.
\end{equation}
We stress that the quantities $\Kcgl_\alpha$ are not invariant polynomials but functionals of the measure $\mu$. Because we have taken the logarithm after the integration over $A$, the $\Kcgl_\alpha$ are generally not expectation values of polynomials, but rather polynomials of such expectation values.

\paragraph{Relation with $\boldsymbol{\Kb_\alpha}$.} To make this more explicit, let us make the link with the averaged generalised precursors $\Kb_\alpha$ defined earlier. To do so, we recall that the distribution of the random matrix $A$ is assumed to be $\U(N)$-invariant. We can therefore freely insert a conjugation of $A$ by $U\in \U(N)$ in the definition \eqref{eq:Zbar} and integrate over the Haar measure, making the HCIZ integral \eqref{hciz} appear. We then get
\begin{equation}
    \Zb_{B,z}[\mu] = \E_{A\sim\mu}\bigl[Z(A,B\,; z)\bigr] = \int Z(A,B\,; z)\,\mu(A)\,.
\end{equation}
Now, recall from equation \eqref{expand2} that the HCIZ integral also has a natural expansion in terms of Newton polynomials $p_\alpha(B)$, using the generalised precursors $K_\alpha(A)$ (at least up to order $z^N$). We then get
\begin{equation}\label{eq:ZbarKb}
    \Zb_{B,z}[\mu] = \sum_{\alpha \text{ with } \atop d(\alpha)\leq N} z^{d(\alpha)}  \frac{N^{\ell(\alpha)}|\Cl_\alpha |}{d(\alpha)!} \Kb_\alpha[\mu]\,  p_\alpha(B) + {\rm O}(z^{N+1})\,.
\end{equation}
It is then a standard exercise to pass to the logarithm, reorganise the series along the Newton basis $p_\alpha(B)$ and compare to equation \eqref{eq:CGL}. This expresses the CGL cumulants $\Kcgl_\alpha[\mu]$ as polynomials in the averaged generalised precursors $\Kb_\beta[\mu]$ (of degrees $d(\beta)\leq d(\alpha)\leq N$). The simplest are the coefficients of single traces $p_{[n]}(B)=\Tr(B^n)$ (with $n\leq N$), which are not affected when passing to the logarithm. We then simply have
\begin{equation}\label{eq:CGLn}
    \Kcgl_{[n]}[\mu]=\Kb_n[\mu]\,,
\end{equation}
\ie the CGL cumulant associated with the partition $[n]$ coincides with the averaged precursor $\Kb_n$. The ones corresponding to arbitrary partitions are more involved. The first few read
\begin{subequations}
\begin{eqnarray}
    \Kcgl_{[1^2]}[\mu] &=& N^2\bigl(\Kb_{[1^2]}[\mu] - \Kb_1[\mu]^2\bigr)\,,\label{eq:CGL11} \\
    \Kcgl_{[2,1]}[\mu] &=& N^2\bigl(\Kb_{[2,1]}[\mu] - \Kb_1[\mu]\, \Kb_2[\mu]\bigr)\,,\\
    \Kcgl_{[1^3]}[\mu] &=& N^4\bigl(\Kb_{[1^3]}[\mu] - 3\,\Kb_1[\mu]\, \Kb_{[1^2]}[\mu] + 2\,\Kb_1[\mu]^3\bigr)\,.
\end{eqnarray}
\end{subequations}
In particular, it is clear that these general $\Kcgl_\alpha$ are not obtained as expectation values of invariant polynomials but rather as algebraic combinations of such expectation values. More precisely, for $d(\alpha)\leq N$, one can check that $\Kcgl_\alpha$ is a polynomial in the $\Kb_\beta$ with $d(\beta)\leq d(\alpha)$, whose coefficients are independent of $N$ up to a global factor $N^{2\ell(\alpha)-2}$. This general relation, which we will not need here, can be written in terms of the Möbius function of the lattice of set-partitions.\footnote{The CGL cumulants $\Kcgl_\alpha$ of degrees $d(\alpha)>N$ are not directly expressed in terms of $\Kb_\beta$, since the latter are only defined for $d(\beta)\leq N$. However, they can be expressed in terms of averaged dual Newton polynomials $\overline{p^\ast_\beta}$, following the expansion \eqref{eq:Dual} of the HCIZ integral introduced in Section \ref{Sec:Coproduct}. Indeed, these functionals $\overline{p^\ast_\beta}$ coincide (up to a global factor) with $\Kb_\beta$ for $d(\beta)\leq N$ but generalise them for higher degrees.}

\paragraph{Additivity.} The main property of the CGL cumulants is their additivity with respect to the convolution $\star$, as established in~\cite[Proposition 4.9]{CGL}. Namely, we have
\begin{equation}\label{eq:CGLAdd}
    \Kcgl_\alpha[\mu\star\nu] = \Kcgl_\alpha[\mu] + \Kcgl_\alpha[\nu]\,, \qquad\; \forall\,\alpha\in\Pc\,, \quad \forall\,\mu,\nu \in\McN\,.
\end{equation}
This is easily proven by observing that
\begin{equation*}
    \Zb_{C,z}[\mu\star\nu] = \iint e^{zN\,\Tr((A+B)C)}\mu(A)\,\nu(B) = \Zb_{C,z}[\mu]\,\Zb_{C,z}[\nu]
\end{equation*}
and passing to the logarithm. The above identity is the analogue of the multiplicative property \eqref{eq:MultZ} of the HCIZ integral. The main difference is that the latter involves completely deterministic matrices $A,B,C$ and an integration over a conjugacy orbit on one side, while the former involves already integrated objects (keeping only the bookkeeping matrix $C$ deterministic). This is what allows the passage to the logarithm and thus the addivity of the CGL cumulants (while one could not derive any analogue result from the logarithm of \eqref{eq:MultZ} since the logarithm and the integral cannot be exchanged). This illustrates the crucial distinction between the algebraic setup of the previous section (with invariant polynomials) and the probabilistic setup of the current section (with integrated quantities, \ie functionals of the measure). In the former case, the search for polynomials which are additive under the addition of conjugacy orbits only leads to the precursors $K_n(A)$. In the latter case, functionals which are additive under the addition of independent random matrices include the averaged precursors $\Kb_n[\mu]$ plus many other functionals $\Kcgl_\alpha[\mu]$, built from various algebraic combinations of averaged polynomials.\\

For length-one partitions $\alpha=[n]$ ($n\leq N$), the additivity \eqref{eq:CGLAdd} should not come as a surprise. Indeed, as pointed out in equation \eqref{eq:CGLn}, the corresponding CGL cumulants $\Kcgl_{[n]}$ coincide with the averaged precursors $\Kb_n$, which were proven to be additive in Corollary \ref{cor:ConvK}. For concreteness, let us explicitly check the additivity for a more general partition $\alpha=[1^2]$. The corresponding CGL cumulant was computed in equation \eqref{eq:CGL11}. We then have
\begin{eqnarray}
    \Kcgl_{[1^2]}[\mu\star\nu] = N^2\bigl(\Kb_{[1^2]}[\mu\star\nu] - \Kb_1[\mu\star\nu]^2\bigr)\,.
\end{eqnarray}
The averaged precursor $\Kb_1$ is additive, while the behaviour of $\Kb_{[1^2]}$  with respect to the convolution is a bit more involved and was derived in equation \eqref{eq:ConvK11}. Combining those, we get
\begin{equation}
    \Kcgl_{[1^2]}[\mu\star\nu] = N^2\bigl(\Kb_{[1^2]}[\mu] + \Kb_{[1^2]}[\nu] + 2\,\Kb_{1}[\mu]\,\Kb_{1}[\nu] - (\Kb_1[\mu]+\Kb_1[\nu])^2\bigr)\,.
\end{equation}
Expanding the square, it is clear that the crossed terms (mixing $\mu$ and $\nu$) cancel, leaving the desired result $\Kcgl_{[1^2]}[\mu\star\nu]=\Kcgl_{[1^2]}[\mu]+\Kcgl_{[1^2]}[\nu]$. As a consistency check, we have performed similar verifications of the additivity of $\Kcgl_\alpha$ for all partitions up to degree $d(\alpha)=8$. 

\begin{remark}
    The additivity \eqref{eq:CGLAdd} of CGL cumulants $\Kcgl_\alpha$ means that they are primitive elements for the convolution coproduct $\Db$ introduced in remark \ref{rem:CoproductConv}.
\end{remark}

 \paragraph{Discussion.} Let us offer an alternative point of view on the CGL cumulants and their additivity. Any linear form on the space of $N\times N$ matrices can be realised as a map $\lambda_C: A \mapsto N\,\Tr(AC)$ for a given matrix $C$. If $A\sim\mu$ is a random matrix, the quantity
\begin{equation}
    \Zb_{C,z}[\mu] = \E_{A\sim \mu}\bigl( e^{z\,\lambda_C(A)} \bigr)
\end{equation}
can be interpreted as the exponential generating function of the moments of the scalar random variable $\lambda_C(A)$. The logarithm $\log\Zb_{C,z}[\mu]$ is then the generating function of the classical cumulants $c^{(n)}_{A\sim\mu}\bigl(\lambda_C(A)\bigr)$ of this random variable. By equation \eqref{eq:CGL}, these $c^{(n)}_{A\sim\mu}\bigl(\lambda_C(A)\bigr)$ are therefore also expressed as linear combinations of the CGL cumulants $\Kcgl_\alpha[\mu]$ (with $C$-dependent coefficients). Reciprocally, letting $C$ vary allows one to reconstruct all CGL cumulants from the data of the classical cumulants $c^{(n)}_{A\sim\mu}\bigl(\lambda_C(A)\bigr)$.

If $A\sim\mu$ and $B\sim\nu$ are independent random matrices, the variables $\lambda_C(A)$ and $\lambda_C(B)$ are obviously independent, and we then have
\begin{equation}
    c^{(n)}_{A+B\sim\mu\star\nu}\bigl(\lambda_C(A+B)\bigr) = c^{(n)}_{A\sim\mu}\bigl(\lambda_C(A)\bigr) + c^{(n)}_{B\sim\nu}\bigl(\lambda_C(B)\bigr)
\end{equation}
by the additivity of classical cumulants. This provides an alternative route towards the additivity of the CGL cumulants. We note however that the latter have an interesting advantage compared to the classical cumulants $c^{(n)}_{A\sim\mu}\bigl(\lambda_C(A)\bigr)$. Indeed, since we are restricting ourselves to $\U(N)$-invariant measures $\mu\in\McN$, these standard cumulants are not all independent. For instance, the linear forms $A \mapsto A_{11}$, $A\mapsto A_{22}$ and $A\mapsto \frac{1}{N}\Tr(A)$ all lead to the same cumulants, due to the $\U(N)$-invariance. By keeping $C$ as an auxiliary variable and expanding $\Zb_{C,z}[\mu]$ along the Newton basis $\lbrace p_\alpha(C)\rbrace_{\alpha\in\Pc}$, we extract a basis of linearly independent functionals with the desired additivity property, namely the CGL cumulants $\Kcgl_\alpha[\mu]$.

\paragraph{Large $\boldsymbol{N}$ limit.} Let us now discuss the large $N$ limit of averaged precursors and CGL cumulants. As usual, we suppose that we are given a $N$-dependent measure $\mu\in\McN$ with an appropriately defined $N\to\infty$ limit $\mu_\infty$, such that the (averaged) moments $\mb_n[\mu]$ all have a finite limit $\mb_n[\mu_\infty]$. As explained in equation \eqref{eq:limKn}, the precursor $K_n(A)$ tends to the free cumulant $\kappa_n(A)$ (in terms of the invariant polynomials $m_k(A)$). Taking expectation values, the averaged precursor $\Kb_n[\mu]$, which coincides with the CGL cumulant $\Kcgl_{[n]}[\mu]$ by equation \eqref{eq:CGLn}, then has a finite $N\to\infty$ limit:
\begin{equation}\label{eq:LimCGLn}
    \lim_{N\to\infty} \Kcgl_{[n]}[\mu] = \lim_{N\to\infty} \Kb_{n}[\mu] = \kb_n[\mu_\infty]\,. 
\end{equation}
The right-hand side of this equation is the $n$-th free cumulant of the limiting distribution $\mu_\infty$, as standardly defined in free probability.

The large $N$ behaviour of CGL cumulants $\Kcgl_\alpha$ associated with more general partitions is quite more subtle. Let us explain this on a simple example, by considering $\alpha=[1^2]$. The corresponding CGL cumulant was expressed in equation \eqref{eq:CGL11}. Translating this expression in terms of moments, using equations \eqref{Kmrel} and \eqref{eq:K11}, we get
\begin{equation}
    \Kcgl_{[1^2]}[\mu] = \frac{N^2}{N^2-1}(\overline{m_1^2-m_2})[\mu] + N^2 \bigl(\overline{m_1^2}[\mu] -  \overline{m_1}[\mu]^2 \bigr)\,.
\end{equation}
The first term has a finite $N\to\infty$ limit ``on the nose''. The second term is more subtle, due to the factor $N^2$. However, the combination in the brackets is the variance $\text{var}_{A\sim \mu}\left(\frac{1}{N}\Tr(A)\right)$, which is well known to be suppressed as ${\rm O}(N^{-2})$ when $N\to\infty$. The whole cumulant $\Kcgl_{[1^2]}[\mu]$ thus has a finite and non-zero limit. We note that the coefficient $N^2$ in the formula \eqref{eq:CGL11} was crucial to obtain a quantity of order ${\rm O}(N^0)$ and not ${\rm O}(N^{-2})$.

More generally, the normalisation in equation \eqref{eq:CGL} has been chosen such that the CGL cumulants $\Kcgl_\alpha$ have a finite and non-zero $N\to\infty$ limit. More precisely, according to~\cite[Equation (3.27)]{CGL}, $\lim_{N\to\infty}\Kcgl_\alpha[\mu]$ becomes the higher-order free cumulant~\cite{CMSS06} associated with the partition $\alpha=(\alpha_1,\dots,\alpha_{\ell})$ (in a certain normalisation). This is consistent with the addivity property \eqref{eq:CGLAdd} of $\Kcgl_\alpha$, since higher-order free cumulants are known to be additive quantities with respect to the free convolution.\vspace{-2pt}

\paragraph{An application.} We end the section with a brief application of this formalism. In Section \ref{sec:A11n}, we have studied the behaviour of powers of the top diagonal component of a matrix sampled over a conjugacy orbit. Let us translate this result in the language of this subsection. Suppose that $A\sim \mu$ is a random matrix with $\U(N)$-invariant measure $\mu\in\McN$. The top component $A_{11}$ can be written as $A_{11}=\Tr(AB)$, where $B=(\delta_{i1}\delta_{j1})_{1\leq i,j\leq N}$ is the rank one matrix used in Section \ref{sec:A11n}. The generating function for the $n$-th moments of the random variable $A_{11}$ is\vspace{-2pt}
\begin{equation}
    \Zb_{B,z}[\mu] = \E_{A\sim \mu}[e^{zN A_{11}}] = \sum_{n=0}^{+\infty} \frac{z^n N^n}{n!} \E_{A\sim \mu}(A_{11}^n)\,,\vspace{-2pt}
\end{equation}
where $\Zb_{B,z}[\mu]$ is defined (for general $B$) in equation \eqref{eq:Zbar}. Using the expansion \eqref{eq:ZbarKb} of this object, with the fact that $p_\alpha(B)=1$ for all partitions $\alpha$ for our specific choice of $B$, we get\vspace{-2pt}
\begin{equation}\label{eq:mA11}
     \E_{A\sim\mu}(A_{11}^n) =  \sum_{\alpha\vdash n} |\Cl_\alpha|\, N^{\ell(\alpha)-n}  \Kb_\alpha[\mu]\,,\vspace{-2pt}
\end{equation}
for $n\leq N$. This is the averaged analogue of the result \eqref{A11n} found earlier.

Instead of the moments of $A_{11}$, let us now consider its (classical) cumulants $c^{(n)}_{A\sim\mu}(A_{11})$. It is well-known that these cumulants are extracted from the logarithm
\begin{equation}
    \log \Zb_{B,z}[\mu] = \sum_{n=1}^{+\infty} \frac{z^n N^n}{n!} c^{(n)}_{A\sim\mu}(A_{11})
\end{equation}
of the moments generating function. This logarithm is exactly the quantity from which the CGL cumulants are defined, following equation \eqref{eq:CGL}. We then find
\begin{equation}\label{eq:cA11}
     c^{(n)}_{A\sim\mu}(A_{11}) =  \sum_{\alpha\vdash n} |\Cl_\alpha|\, N^{2-\ell(\alpha)-n}  \Kcgl_\alpha[\mu]\,.
\end{equation}
We thus get parallel expressions \eqref{eq:mA11} and \eqref{eq:cA11} for the moments and cumulants of $A_{11}$, respectively expressed in terms of averaged generalised precursors $\Kb_\alpha$ and CGL cumulants $\Kcgl_\alpha$.

Recall that the $\Kcgl_\alpha$'s have a finite limit when $N\to\infty$. The dominant term in equation \eqref{eq:cA11} is therefore the contribution from the partition with smallest length, \ie $\alpha=[n]$. From equation \eqref{eq:LimCGLn}, we get
\begin{equation}
    c^{(n)}_{A\sim\mu}(A_{11}) \approx \frac{(n-1)!}{N^{n-1}}\, \kb_n[\mu_\infty]\,\quad \text{ when } N\to\infty,
\end{equation}
in terms of the $n$-th free cumulant $\kb_n[\mu_\infty]$. We recover this way a well-known identity~\cite{BoBo}, which was initially derived in~\cite{GM} using the large $N$ expansion of the HCIZ integral for a finite rank matrix $B$. The equation \eqref{eq:cA11} then expresses the $1/N$ corrections to this identity in terms of the CGL cumulants.


\section{Discussion}
\label{open}

\subsection[Finite $N$ free convolution associated with precursors?]{Finite $\boldsymbol{N}$ free convolution associated with precursors?}
\label{sec:convolution}

Another version of finite $N$ analogues of free cumulants has been studied by Arizmendi and Perales~\cite{AP}. Let us explain the similarities and differences with our approach. Their work is based on the notion of finite $N$ free convolution, introduced by Marcus, Spielman and Srivastava in~\cite{Marcus}, which starts with any two $\U(N)$-orbits of Hermitian matrices and builds a third one. Such an orbit is uniquely determined by the spectrum $(\alpha_1,\dots,\alpha_N)$ of the matrices $A$ within, or equivalently by their characteristic polynomial\vspace{-2pt}
\begin{equation}
    P_A(x)=\det(x\,\mathbb{I}-A) = \prod_{i=1}^N (x-\alpha_i) = \sum_{k=0}^{N} (-1)^k x^{n-k} e_k(A)\,.\vspace{-2pt}
\end{equation}
This is a real-rooted monic polynomial in $x$, whose coefficients are the elementary invariant polynomials $e_k(A)$. The finite $N$ free convolution of~\cite{Marcus} is defined as follows. Consider two Hermitian orbits with representative $A$ and $B$, characterised by the polynomials $P_A(x)$ and $P_B(x)$. The convoluted orbit is then formed by Hermitian matrices with characteristic polynomial\vspace{-2pt}
\begin{equation}
    (P_A \boxplus_N P_B)(x) := \int_{\U(N)} DU\,P_{A+UB\Ud}(x)\,,\vspace{-2pt}
\end{equation}
obtained by taking the average of characteristic polynomials over the sum of the $\U(N)$-orbits of $A$ and $B$. Two remarkable results of~\cite{Marcus} are then the following:\vspace{-2pt}
\begin{itemize}\setlength\itemsep{3pt}
    \item $P_A \boxplus_N P_B$ is a real-rooted monic polynomial,
    \item there exists $U_0\in \U(N)$ such that $P_A \boxplus_N P_B = P_{A+U_0B\Ud_0}$.\vspace{-2pt}
\end{itemize}
The first property ensures that $P_A \boxplus_N P_B$ can be interpreted as the characteristic polynomial of a conjugacy orbit of Hermitian matrices. The second one ensures that these Hermitian matrices lie exactly within the sum of the orbits of $A$ and $B$. This last result can also be phrased in terms of the roots $\alpha_i,\beta_i,\gamma_i\in \R$ $(1\leq i \leq N)$ of $P_A$, $P_B$ and $P_A \boxplus_N P_B$, as the fact that $\gamma_i$ satisfy the Horn inequalities~\cite{Ho62,Kly,KT99, Fu} associated with the eigenvalues $\alpha_i,\beta_i$ of $A$ and $B$. In this context, the finite $N$ cumulants of Arizmendi and Perales~\cite{AP} are invariant polynomials $f_n$ of degree $n$ such that $f_n(C)=f_n(A)+f_n(B)$ for any matrix $C$ in the convoluted orbit of $A$ and $B$. They are thus additive with respect to the finite $N$ convolution. This additivity property however, is different from the one \eqref{eq:AddK} of the precursors $K_n$ studied in this paper. 

Let us rephrase the notion of finite free convolution in terms of the elementary polynomials $e_n(A)$. A Hermitian matrix $C$ is in the convoluted orbit of $A$ and $B$ if and only if
\begin{equation}\label{eq:ConvE}
    e_n(C) = \int_{\U(N)} DU\,e_n(A+UB\Ud) = \sum_{0\leq i,j\leq n\atop i+j=n} \frac{(N-i)!(N-j)!}{N!(N-i-j)!}\, e_i(A) \, e_j(B) \,,
\end{equation}
for all $n\in\lbrace 1,\dots,N\rbrace$ (the last expression follows from the results of~\cite{Marcus}, see also Proposition \ref{prop:e}). This formulation raises a natural question: can we define a variant of the finite $N$ free convolution using another set of invariant polynomials than the elementary ones $e_n$? In particular, can we do so using the precursors $K_n$ considered in this paper? In this scenario, the convoluted orbit of $A$ and $B$ would then be formed by the matrices $C$ such that
\begin{equation}\label{eq:ConvK2}
     K_n(C) = \int_{\U(N)} DU\,K_n(A+UB\Ud) = K_n(A) + K_n(B)\,,
\end{equation}
for all $n\in\lbrace 1,\dots,N\rbrace$ (where the last equality is nothing but the additivity property \eqref{eq:AddK} of the the precursors). This condition can be rephrased as a system of $N$ algebraic equations on the eigenvalues $(\gamma_1,\dots,\gamma_N)$ of $C$ in terms of the eigenvalues $\alpha_i,\beta_i$ of $A$ and $B$. This system always admits a solution, unique up to permutation, so that the matrix $C$ exists, at least as a complex matrix. However, two important questions remain:\vspace{-2pt}
\begin{itemize}\setlength\itemsep{3pt}
    \item are the eigenvalues $(\gamma_1,\dots,\gamma_N)$ real?
    \item if they are, do they satisfy Horn's inequalities?\vspace{-2pt}
\end{itemize}
The first property would ensure that $C$ can be taken to be Hermitian while the second would mean $C$ belongs to the sum of the $\U(N)$-orbits of $A$ and $B$.

For $N=3$, one checks that the systems of equations defined by \eqref{eq:ConvE} and \eqref{eq:ConvK2} are equivalent, so that the two notions of convolution coincide and the answer to the above questions is positive. However, for $N\geq 4$, it is easy to find matrices $A$ and $B$ such that $C$ defined through \eqref{eq:ConvK2}
has eigenvalues that are not all real. For example, 
$A=\mathrm{diag}(6,5,4,-15)\, ,\ B=\mathrm{diag}(12,-3,-4,-5)$ lead to $C\approx U\mathrm{diag}(14.72 , -16.69, .98\pm .72\mathrm{i})  \Ud$. Moreover, there also exist cases where the eigenvalues of $C$ are real but violate Horn's inequalities. We conclude that, in contrast with the method of \cite{Marcus, AP}, the precursors $K_n$ are not associated with a finite $N$ convolution on Hermitian matrices.  It would be interesting to determine whether the elementary polynomials $e_n$ are the only ones which define such a well-behaved convolution.


\subsection{General orbit coproduct}

In Section \ref{Sec:Coproduct}, we have defined and studied the orbit coproduct $\Delta$, which described how conjugacy-invariant polynomials behave with respect to averaging over sums of $\U(N)$-orbits. This construction admits a straightforward extension to a more general setup, which we now briefly sketch. Let $G$ be a compact group and $V$ be a real representation of $G$. We will denote by $\mathcal{A}_V$ the algebra of polynomials on $V$ which are invariant under the action of $G$. We define the orbit coproduct $\Delta_V : \mathcal{A}_V \to \mathcal{A}_V \otimes \mathcal{A}_V$ by\vspace{-1pt}
\begin{equation}
    \Delta_V f(a,b) = \int_G DU\,f(a+U.b)\,,
\end{equation}
for all $f\in\mathcal{A}_V$ and all $a,b\in V$. The case where $G=\U(N)$ and $V$ is the space of $N\times N$ (complex or Hermitian) matrices, on which $G$ acts by conjugation, reduces to the setup considered in Section \ref{Sec:Coproduct}. In the general case, it is well-known that the representation $V$ always admits a $G$-invariant bilinear form $\langle \cdot,\cdot\rangle : V\times V \to \R$ (which can be obtained by averaging, but is generally not unique, even up to normalisation). From this bilinear form, we define the following integral:\vspace{-1pt}
\begin{equation}
    Z_V(a,b\,; z) = \int_G DU\,e^{z\,\langle a,\,U.b\rangle}\,,\vspace{-1pt}
\end{equation}
depending on $z\in\C$ and $a,b\in V$. It is clear that $Z_V(a,b\,; z)$ is invariant under the actions of $G$ on $a$ and $b$ (independently). Moreover, the coefficient of $z^n$ in its power series expansion can be written in terms of invariant polynomials of degree $n$ of both $a$ and $b$. The function $Z_V(a,b\,; z)$ generalises the HCIZ integral to our extended setup. One easily checks that it satisfies the following multiplicative identity, which is a generalisation of the property \eqref{eq:MultZ} of the HCIZ integral:\vspace{-1pt}
\begin{equation}
    \int_G DU\, Z_V(a+U.b,c\,; z) = Z_V(a,c\,; z)\, Z_V(b,c\,; z)\,.\vspace{-1pt}
\end{equation}
This identity allows one to determine the orbit coproduct $\Delta_V f$ of all the invariant polynomials $f\in\mathcal{A}_V$ which appear in the power series expansion of $Z_V(a,b\,; z)$, following an analogous method to the one developed in Section \ref{Sec:Coproduct}. This way, many of the results found in this section generalise to this extended setup. Similarly, most methods developed in Section \ref{sec:Proba} can also be generalised to describe the behaviour of averaged invariant polynomials with respect to the convolution of $G$-invariant measures on $V$. It would be interesting to explore these constructions in more detail and study explicit examples.

\section*{Acknowledgments}{It is a pleasure to acknowledge stimulating discussions with D. Bernard, J. Bouttier, \'E. Br\'ezin, P. Di Francesco, M. Ma\"ida, S.  Tarricone and P. Zinn-Justin. We are particularly indebted to P. Biane for drawing our attention to the point discussed in Section \ref{sec:convolution} and to R. Speicher for pointing out the reference~\cite{CC06}.}


\appendix
\section[Large $N$ multiplicativity of generalised precursors]{Large $\boldsymbol{N}$ multiplicativity of generalised precursors}
\label{App:MultK}

This appendix is devoted to the proof of equation \eqref{eq:MultKappa}. Our goal is therefore to study the large $N$ behaviour of the generalised precursors $K_\alpha$. Recall that these polynomials are defined only up to degree $N$. Throughout the appendix, we will fix an integer $n_0$ and will focus on the $K_\alpha$'s of degree lesser or equal to $n_0$: to ensure that they are well-defined, the equations that we will write will always assume that $N\geq n_0$ (which does not affect our argument since we are interested in large $N$ behaviours). We start with the expansion \eqref{expand2} of the HCIZ integral in terms of $K_\alpha(A)$ and $p_\alpha(B)$. Translating the latter to normalised moments, we get\footnote{Recall that for a partition $\alpha=(\alpha_1,\dots,\alpha_{\ell})$, we denote by $d(\alpha)=\alpha_1+\dots+\alpha_{\ell}$ its degree, such that $\alpha\vdash d(\alpha)$. The sum in equation \eqref{eq:appZK} then runs over all the partitions of integers between $1$ and $n_0$.}
\begin{eqnarray}\label{eq:appZK}
    Z(A,B\,; z) = 1 + \sum_{\substack{\text{partitions } \alpha\\ 1 \leq d(\alpha) \leq n_0}} {z^{d(\alpha)}}   
 N^{2\ell(\alpha)} \frac{|\Cl_\alpha |}{n!} K_\alpha(A)  m_\alpha(B) + {\rm O}(z^{n_0+1})\,,
\end{eqnarray}
where  $m_\alpha$  has been defined in \eqref{defmalpha}.
Recall that we consider the large $N$ limit with moments $m_k(A)$ and $m_k(B)$ kept finite. To show that $K_\alpha(A)$ has a well-defined limit, we thus need to prove that the coefficient $[z^{d(\alpha)} m_\alpha(B)]Z(A,B\,; z)$ of the HCIZ integral is of order ${\rm O}(N^{2\ell(\alpha)})$. Moreover, computing its dominant term would provide the expression of $\lim_{N\to\infty} K_\alpha(A)$. This is the main task we achieve in this appendix. 

The key ingredient we will need is the HCIZ free energy
\begin{equation}
    F(A,B\,; z) = \frac{1}{N^2} \log\bigl( Z(A,B\,; z) \bigr)\,.
\end{equation}
It is well known~\cite{Collins03, GZ, GGPN} that this quantity admits a topological expansion in large $N$ with only non-positive powers of $N$. For our purposes, it will be useful to write it as
\begin{equation}\label{eq:ExpandF}
    F(A,B\,; z) = \sum_{\substack{\text{partitions } \beta \\ 1 \leq d(\beta) \leq n_0}} z^{d(\beta)} L_\beta(A)\, m_\beta(B) + {\rm O}(z^{n_0+1})\,,
\end{equation}
where the polynomials $L_\beta(A)$ are of order ${\rm O}(1)$ when $N\to\infty$ and admit a topological expansion
\begin{equation}\label{eq:NExpandL}
    L_\beta(A) = \sum_{g\geq 0} N^{-2g} L_{\beta,g}(A)\,.
\end{equation}
These polynomials are related in a complicated fashion to the generalised precursors $K_\alpha(A)$, by taking the logarithm of \eqref{eq:appZK} and expanding the result in $m_\beta(B)$. The fact that $L_\beta(A)$ is finite in the large $N$ limit is non-trivial and results on various algebraic relations between the coefficients of the $1/N$-expansion of the $K_\alpha(A)$'s. Here, we will reverse the logic: we start with the fact that $L_\beta(A)$ admits a finite expansion \eqref{eq:NExpandL} and derive the corresponding consequences on the expansion of $K_\alpha(A)$.\\

Exponentiating \eqref{eq:ExpandF}, we get
\begin{equation}\label{eq:FtoZ}
    Z(A,B\,; z)  = \prod_{\substack{\text{partitions } \beta \\ 1 \leq d(\beta) \leq n_0}} \exp\left( z^{d(\beta)} N^2\,L_\beta(A)\, m_\beta(B) \right) + {\rm O}(z^{n_0+1})\,.
\end{equation}
Recall that $m_\beta(B)$ is a shorthand notation for the product $\prod_{i=1}^{\ell(\beta)} m_{\beta_i}(B)$. We will first focus on the ``single trace'' terms in the above expression, \ie the coefficient of $z^n m_n(B)$ for some $n\leq n_0$. It is clear that this coefficient can only be produced by the linear term corresponding to $\beta=[n]$ in the exponentials, such that
\begin{equation}
    [z^n m_n(B)] Z(A,B\,; z) = N^2\,L_{[n]}(A)\,.
\end{equation}
On the other hand, by the relation \eqref{eq:appZK}, this coefficient coincides with $\frac{N^2}{n}\,K_{n}(A)$. The finiteness of $L_{[n]}(A)$ in the large $N$ limit is therefore consistent with the observation that $K_n(A)$ is also finite. More precisely, we then relate the ``genus 0'' component of $L_{[n]}(A)$ with the limit of the precursor $K_n(A)$, which coincides with the free cumulant $\kappa_n(A)$ by equation \eqref{eq:limKn}:
\begin{equation}\label{eq:LimLK}
    L_{[n],0}(A) = \frac{1}{n} \lim_{N\to \infty} K_n(A) = \frac{\kappa_n(A)}{n}\,.
\end{equation}

We now turn to the more general coefficients in the expression \eqref{eq:FtoZ}. Expanding the exponentials and the products yields terms of the form
\begin{equation}\label{eq:ProdL}
    \frac{z^{b_1\,d(\beta^{(1)})+\cdots+b_p\,d(\beta^{(p)})}}{b_1!\cdots b_p!} N^{2(b_1+\dots+b_p)} L_{\beta^{(1)}}(A)^{b_1}\cdots L_{\beta^{(p)}}(A)^{b_p} \, m_{\beta^{(1)}}(B)^{b_1}\cdots m_{\beta^{(p)}}(B)^{b_p}\,,
\end{equation}
where the $\beta^{(i)}$'s are pairwise distinct partitions and the $b_i$'s positive integers. This contributes to the coefficient of $z^{d(\alpha)}\, m_\alpha(B)$ associated with the partition $\alpha$ built by concatenation of the $\beta^{(i)}$'s repeated $b_i$ times -- $\alpha$ is then of degree $d(\alpha)=\sum_i b_i\,d(\beta^{(i)})$ and length $\ell(\alpha)=\sum_i b_i\,\ell(\beta^{(i)})$. In general, there are many different combinations $(\beta^{(i)},b_i)$ which concatenate to the same given partition $\alpha$, so that the full coefficient of $z^{d(\alpha)} m_\alpha(B)$ in $Z(A,B\,; z)$ is a quite complicated object. To get something more tractable, we will focus on the dominant term of this coefficient when $N\to\infty$. Recall that the $L_{\beta^{(i)}}(A)$'s are all of order ${\rm O}(1)$ in this limit, so that \eqref{eq:ProdL} is of order ${\rm O}(N^{2(b_1+\dots+b_p)})$. This gets more dominant as we increase $b_1+\dots+b_p$, hence as the concatenation gets finer. The most dominant term therefore stems for the finest way of writing $\alpha$ as a concatenation. Using the notation\footnote{A quick comment on conventions is in order. For $\alpha\vdash n$, in section \ref{notations} we have often used the notation $\alpha=[1^{\hat{\alpha_1}}\dots n^{\hat{\alpha}_n}]$, keeping track of each number $k \in\lbrace 1,\dots,n\rbrace$ but allowing its multiplicity $\hat{\alpha}_k$ to be $0$ if it does not appear in $\alpha$. Here, we choose to omit all the numbers which do not appear in $\alpha$ and thus write $\alpha=[k_1^{\hat{\alpha}_{k_1}} \dots k_p^{\hat{\alpha}_{k_p}}]$, where the multiplicities $\hat{\alpha}_{k_i}$ are then non-zero.} $\alpha=[k_1^{\hat{\alpha}_{k_1}} \dots k_p^{\hat{\alpha}_{k_p}}]$ introduced in Section \ref{notations}, this finest concatenation is obtained from the choice where each $\beta_i=[k_i]$ is of minimal length $1$ and $b_i=\hat{\alpha}_{k_i}$ is the number of times $k_i$ appears in $\alpha$. The corresponding power of $N$ is then $2\sum_i \hat{\alpha}_{k_i} = 2\ell(\alpha)$, \ie twice the length of the partition $\alpha$. In conclusion, we find
\begin{equation}
    \bigl[z^{d(\alpha)} m_\alpha(B)\bigr] Z(A,B\,; z) = N^{2\ell(\alpha)} \left( \frac{1}{\hat{\alpha}_{k_1}!\cdots \hat{\alpha}_{k_p}!} L_{[k_1],0}(A)^{\hat{\alpha}_{k_1}} \cdots L_{[k_p],0}(A)^{\hat{\alpha}_{k_p}} + {\rm O}(N^{-1}) \right)\,.
\end{equation}
On the other hand, the relation \eqref{eq:appZK} tells us that this coefficient is also equal to $N^{2\ell(\alpha)} \frac{|\Cl_\alpha|}{n!} K_\alpha(A)$. As expected, this confirms the finiteness of $K_\alpha(A)$ at large $N$ and provides its limit:
\begin{equation}
    \lim_{N\to\infty} K_{\alpha}(A) = \frac{1}{|\Cl_\alpha|} \frac{n!}{\hat{\alpha}_{k_1}!\cdots \hat{\alpha}_{k_p}!} L_{[k_1],0}(A)^{\hat{\alpha}_{k_1}} \cdots L_{[k_p],0}(A)^{\hat{\alpha}_{k_p}}\,.
\end{equation}
Earlier in equation \eqref{eq:LimLK}, we have identified $L_{[k],0}(A)$ with $\kappa_k(A)/k$, thus implying
\begin{equation}
    \lim_{N\to\infty} K_{\alpha}(A) = \frac{1}{|\Cl_\alpha|} \frac{n!}{\prod_{i=1}^p k_i^{\hat{\alpha}_{k_i}}\hat{\alpha}_{k_i}!} \prod_{i=1}^p \kappa_{k_i}(A)^{\hat{\alpha}_{k_i}}\,.
\end{equation}
Using the expression\footnote{Switching from the notation $\alpha=[k_1^{\hat{\alpha}_{k_1}} \dots k_p^{\hat{\alpha}_{k_p}}]$ used here (with non-zero multiplicities $\hat{\alpha}_{k_i}$) to the one $\alpha=[1^{\hat\alpha_1}\dots n^{\hat\alpha_n}]$ used in section \ref{notations} (with potentially vanishing multiplicities $\hat{\alpha}_{k}$), we note that
\begin{equation*}
    |\Cl_\alpha| = \frac{n!}{\prod_{k=1}^n k^{\hat\alpha_k} \hat\alpha_k!} = \frac{n!}{\prod_{i=1}^p k_i^{\hat{\alpha}_{k_i}}\hat{\alpha}_{k_i}!}\,.
\end{equation*}}
\eqref{eq:SizeClass} of $|\Cl_\alpha|$ and recalling that $\hat{\alpha}_{k_i}$ is the number of times $k_i$ appears in the partition $\alpha=(\alpha_j)_{j=1,\dots,\ell(\alpha)}$, we simply get
\begin{equation}
    \lim_{N\to\infty} K_{\alpha}(A) = \prod_{i=1}^p \kappa_{k_i}(A)^{\hat{\alpha}_{k_i}} = \prod_{j=1}^{\ell(\alpha)} \kappa_{\alpha_j}(A)\,.
\end{equation}
This ends the proof of equation \eqref{eq:MultKappa} and thus this appendix.


\section{Expression of precursors for low degrees}
\label{TableKkappam}
 We list first the expression of $K _n$ in terms of the free cumulants,
for low values of $n\le N$:

\bea \nonumber
\nonumber K_1&=&\kappa_1\,,\\  \nonumber
K_2&=& 
{\frac{N^2}{N^2-1}} \kappa_2 \,, 
 \\ \nonumber 
K_3 &=&   { 
\frac{N^4}{(N^2-1)(N^2-4)}} \kappa_3\,,\\
\nonumber
\label{listKn} K_4 &=& 
\frac{N^4}{(N^2-1)(N^2-4)(N^2-9)} \big(   (N^2+1)\kappa_4 + 5   \kappa_2^2\big)\,,  \\ \nonumber
K_5
&=& \frac{N^6}{(N^2-1)(N^2-4)(N^2-9)(N^2-16)} \big((N^2+5)\kappa_5 +35 \kappa_2\kappa_3) \big)\,,  \\ \nonumber
K_6&=& \frac{N^6}{\tiny (N^2-1)(N^2-4)(N^2-9)(N^2-16)(N^2-25)}
\\ 
\nonumber && \qquad { \big((N^4 +15N^2+8)\kappa_6 +84  (N^2+2)\kappa_4 \kappa_2 +56 (N^2 - 1)  \kappa_3^2   +14\kappa_2^3(N^2+20)\big) }\,. 
\eea

 We may also express the $K_n$  in terms of the normalized moments of $A$,
 $m_n:=\inv{N} \Tr A^n$.
 These expressions may be quite cumbersome and it is profitable to remember there that the $K_n(A)$, $n>1$, are insensitive to a shift of $A$ by a multiple of the identity matrix, see Section \ref{sec:shift}. Writing $A=\hat A + m_1(A) \mathbb{I}$, with
$\hat A$ traceless, we can express the 
$K_n(A)=K_n(\hat A)$ ($n>1$) in terms of the $\hat m_n=\frac{1}{N}\Tr \hat A^n$, from which the general
expression in terms of $m_n(A)$ may be recovered by
substituting
\begin{equation*}
    \hat m_n = \sum_{r=0}^n {n\choose r} \, (-m_1)^{r} \,   m_{n-r} \,.
\end{equation*} 
We then give below the expression of the first six precursors:
\bea \nonumber K_1&\!\!\!=\!\!\!&m_1\,,
\\  \nonumber
K_2 &\!\!\!=\!\!\!& \frac{N^2}{N^2-1}  \hat m_2\,,
\\  \nonumber
 K_3 &\!\!\!=\!\!\!& \frac{N^4}{(N^2-1)(N^2-4)} \hat m_3\,,
 \\  
 \label{listKnmhat} 
 K_4 &\!\!\!=\!\!\!&   \frac{N^4}{(N^2-1)(N^2-4)(N^2-9)} \big(  (N^2+1)\hat m_4  - (2 N^2-3)  \hat m_2^2\big) \,,
 \\ \nonumber
 K_5 &\!\!\!=\!\!\!& \frac{N^6}{(N^2-1)(N^2-4)(N^2-9)(N^2-16)} \big((N^2+5)\hat m_5 -5  (N^2-2)\hat m_3 \hat m_2 \big) \,,
 \\  \nonumber
 K_6&\!\!\!=\!\!\!& \frac{N^6}{(N^2 -1)(N^2-4)(N^2-9)(N^2-16)(N^2-25)}
 \\
\nonumber && \quad \big((N^4 +15N^2+8)\hat m_6 -6 (N^4+N^2-20)\hat m_4 \hat m_2 -  (3N^4-11 N^2 +80) \hat m_3^2   +7(N^4-7 N^2)\hat m_2^3\big) \,. \eea
We may also express the moments $m_n(A)$ in terms of the generalised precursors $K_\alpha(A)$:
\bea  \nonumber m_1 &\!\!\!=\!\!\!& K_ 1 \,, \\
 \nonumber m_2 &\!\!\!=\!\!\!& 
K_{2}+K_{(1,1)} \,,
\\  
 \label{mnfunctionK} m_3 &\!\!\!=\!\!\!&
 \frac{N^2+1}{N^2} K_{3} + 3 K_{(2,1)}+K_{(1,1,1)} \,,\\  
 \nonumber m_4 &\!\!\!=\!\!\!&
 \frac{N^2+5}{N^2} K_{4}+4\frac{N^2+1}{N^2} K_{(3,1)}+\frac{2N^2+1}{N^2}K_{(2,2)}+6 K_{(2,1,1)}+K_{(1,1,1,1)} \,,\\  
 \nonumber m_5 &\!\!\!=\!\!\!&
\frac{N^4 + 15 N^2 + 8}{N^4}K_{5}+5\frac{N^2+5}{N^2}K_{(4,1)}+5\frac{N^2+3}{N^2}K_{(3,2)}\\ \nonumber &&\qquad +10\frac{N^2+1}{N^2} K_{(3,1,1)}+5\frac{2N^2+1}{N^2}K_{(2,2,1)}+10 K_{(2,1,1,1)}+K_{(1,1,1,1,1)}\,,\eea
and so on and so forth. Following  \eqref{malpha}, 
\begin{equation*}    
m_n = \sum_{\beta\vdash n}N^{\ell(\beta)-1} H^{\bullet\,,\, <}([n],\beta) K_\beta
\end{equation*} 
and the coefficients $H^{\bullet\,,\, <}([n],\beta)$, that are read off 
\eqref{mnfunctionK},
may be regarded as deformed Kreweras coefficients. 
They are generating functions of strictly monotone Hurwitz numbers $H_g^{\bullet\,,\, <}([n],\beta)$ or of $P_g(\alpha)$,
see Proposition \ref{Prop:TopMK} and
Corollary \ref{Coroll1}.

We have seen in Appendix \ref{App:MultK} that the $K_\alpha(\hat A)$ factorise in the large $N$ limit. 
What are the subdominant terms? Due to the identities \eqref{KAAa}-\eqref{identity1}, we can restrict to traceless $\hat A$ and singleton-free partitions. For the first few, we find 
\bea K_{(2,2)}(\hat A)&=&  K_2^2(\hat A) + C_{(2,2)}(\hat A)\,,\\
\nonumber  K_{(3,2)}(\hat A)&=& K_3(\hat A) K_2(\hat A)+C_{(3,2)} (\hat A)\,, \\
\nonumber  K_{(4,2)}(\hat A)&=& K_4(\hat A) K_2(\hat A)+C_{(4,2)} (\hat A)\,,
\eea
where
\begin{eqnarray*}
    C_{(2,2)} &=& \frac{-2}{N^2-1}\(\frac{2N^2-3}{N^2} K_4+K_{(2,2)}\)\,, \\
    C_{(3,2)} &=& \frac{-3}{N^2-1} \(\frac{2N^2-4}{N^2}K_5+K_{(3,2)}\)\,, \\
    C_{(4,2)} &=& \frac{-4}{N^2-1}\(\frac{2N^2-5}{N^2} K_6+ 2 K_{(4,2)} +K_{(3,3)} \)\,.
\end{eqnarray*}
As expected, these objects are of order ${\rm O}(N^{-2})$ when $N\to\infty$. Note that in the problem studied in Section \ref{cyclicpr}, these $C_\alpha$ appear as (classical) cumulants of the r.h.s. of Proposition \ref{prop4}.


\section{Tables of disconnected weakly monotone Hurwitz numbers}
\label{HurDisconnect}

The matrices $H^{\bullet\,,\,\le}(\alpha,\beta)$ of equation \eqref{eq:GenHm}, for $\alpha,\beta\vdash n$,  which are generating functions of the $H^{\bullet\,,\,\le}_g(\alpha,\beta)$,   are listed below, for  $n=1,\dots, 4$.
There, rows and columns are labelled by partitions of $n$, listed by reverse lexicographic order (thus for $n=4$: $[4], [3,1], [2^2], [3,1^2], \cdots, [1^4]$).
$$\left(
\begin{array}{c}
 1 \\
\end{array}
\right),\qquad \left(
\begin{array}{cc}
 \frac{N^2}{N^2-1} & -\frac{N}{N^2-1} \\
 -\frac{N}{N^2-1} & \frac{N^2}{N^2-1} \\
\end{array}
\right),\qquad \left(
\begin{array}{ccc}
 \frac{N^4}{(N^2-1)(N^2-4)} & -\frac{3 N^3}{(N^2-1)(N^2-4)} & \frac{2 N^2}{(N^2-1)(N^2-4)} \\[3pt]
 -\frac{2 N^3}{(N^2-1)(N^2-4)} & \frac{N^2 (N^2+2)}{(N^2-1)(N^2-4)} &
   -\frac{N^3}{(N^2-1)(N^2-4)} \\[3pt]
 \frac{4 N^2}{(N^2-1)(N^2-4)} & -\frac{3 N^3}{(N^2-1)(N^2-4)} & \frac{N^4-2 N^2}{(N^2-1)(N^2-4)}
   \\
\end{array}
\right)\,,$$
$$\hspace{-20pt}{\scriptstyle 
\inv{(N^2-1)(N^2-4)(N^2-9)}
\tiny  \left(
\begin{array}{ccccc}
N^4 (N^2+1) & -4N^3 (N^2+1)& -N^3 \left(2 N^2-3\right)& 10 N^4 & -{5 N^3} 
\\
-3N^3 (N^2+1) & N^2(N^4+3 N^2+12) & 3 N^2(2 N^2-3) &-3N^3(N^2+1) & N^2(2 N^2-3)
 \\
-2N^3(2N^2-3) & 8N^2(2N^2-3) & N^2(N^4-6 N^2+18)&-2N^3(N^2+6)& N^2(N^2+6)
\\
10 N^4& -4N^3 (N^2+1)\ & - N^3 \left(N^2+6\right) & N^2 (N^2+1)& -N^3 (N^2-4)
   \\
 -{30 N^3}& 8N^2 \left(2 N^2-3\right)& 3 N^2 (N^2+6)& -6N^3(N^2-4)& N^2(N^4-8 N^2+6)
  \\
\end{array}
\right)\,.}$$
Moreover, one checks that 
$\sum_{\alpha\vdash n} H^{\bullet, <}([n],\alpha) H^{\bullet, \le}(\alpha,\beta) = \delta_{\beta,[n]}$ with the matrix elements determined in Appendix \ref{TableKkappam}, as expected from the identity \eqref{eq:InvH}.


\section{Strictly monotone Hurwitz numbers and counting permutations by genus}
\label{App:GenPerm}

The goal of this appendix is to prove the Lemma \ref{Lem:GenPermHur}, relating certain strictly monotone Hurwitz numbers with the counting of permutations by genus and cycle types. In fact, we will prove a slightly more general result, as follows.

\begin{proposition}\label{Prop:StrictMon}
    The strictly monotone Hurwitz number $H_g^{\bullet\hspace{1pt},\hspace{1pt}<}(\alpha,\beta)$ is divisible by $|\mathrm{Cl}_\alpha|$ and $|\mathrm{Cl}_\beta|$. Moreover, fixing a permutation $\sigma_\beta$ in the conjugacy class $\mathrm{Cl}_\beta$, we have
    \begin{equation}
        \frac{1}{|\mathrm{Cl}_\beta|} H_g^{\bullet\hspace{1pt},\hspace{1pt}<}(\alpha,\beta) = |\Pi_g(\alpha \hspace{1pt}|\hspace{1pt} \sigma_\beta)|\,,
    \end{equation}
    where\footnote{By construction, $[\sigma_\alpha]=\alpha$ if $\sigma_\alpha\in \mathrm{Cl}_\alpha$: we could have thus replaced the condition in equation \eqref{eq:Pi} by $2g = n+2 - \ell(\alpha) - \ell(\beta) - \ell([\sigma_\alpha^{-1}\sigma_\beta])$.}
    \begin{equation}\label{eq:Pi}
        \Pi_g(\alpha \hspace{1pt}|\hspace{1pt} \sigma_\beta) = \bigl\lbrace \sigma_\alpha \in \mathrm{Cl}_\alpha\,\bigl|\,  n+2 - \ell([\sigma_\alpha]) - \ell([\sigma_\beta]) - \ell([\sigma_\alpha^{-1}\sigma_\beta]) = 2g \bigr\rbrace
    \end{equation}
    and we recall that $\ell([\sigma])$ measures the number of cycles of $\sigma\in S_n$.
\end{proposition}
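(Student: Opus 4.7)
The plan is to reduce the Proposition to the following combinatorial statement about strictly monotone factorizations of permutations:

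\textbf{Key Lemma.} \textit{A permutation $\pi \in S_n$ admits a strictly monotone factorization $\pi = \tau_1 \cdots \tau_r$ (with $\tau_i = (a_i, b_i)$, $a_i < b_i$, $b_1 < \cdots < b_r$) if and only if $r = n - \ell([\pi])$, and in this case the factorization is unique.}

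First I would prove this lemma. The ``only if'' direction goes by induction on $r$: since $a_i < b_i \le b_{r-1} < b_r$ for all $i < r$, the element $b_r$ is a fixed point of $\tau_1 \cdots \tau_{r-1}$, so right-multiplication by $\tau_r = (a_r, b_r)$ merges the singleton cycle $\{b_r\}$ with the cycle of $a_r$, reducing the number of cycles by one; by induction $\tau_1 \cdots \tau_r$ has $n - r$ cycles, which forces $r = n - \ell([\tau_1\cdots\tau_r])$. For the existence and uniqueness of the factorization of a given $\pi$ with $r = n - \ell([\pi])$, I would induct on $n$: if $\pi(n) = n$, recurse on $\pi|_{\{1,\dots,n-1\}}$; otherwise all earlier transpositions $\tau_i$ satisfy $a_i, b_i \le n - 1$ and hence fix $n$, forcing $\tau_r$ to be the unique transposition $(\pi^{-1}(n), n)$, after which one recurses on $\pi \cdot (\pi^{-1}(n), n)$, which now fixes $n$.

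With the Key Lemma in hand, the Proposition follows by direct counting. Setting $r = 2g - 2 + \ell(\alpha) + \ell(\beta)$, the defining conditions $\sigma_\alpha\sigma_\beta\tau_1\cdots\tau_r=\mathrm{Id}$ together with strict monotonicity of the $\tau_i$'s say exactly that $(\tau_1,\dots,\tau_r)$ is the unique strictly monotone factorization of $(\sigma_\alpha\sigma_\beta)^{-1}$, which exists if and only if $\ell([\sigma_\alpha\sigma_\beta]) = n - r$. Hence
\begin{equation*}
H_g^{\bullet\,,\,<}(\alpha,\beta) \;=\; \#\bigl\{(\sigma_\alpha,\sigma_\beta)\in\mathrm{Cl}_\alpha\times\mathrm{Cl}_\beta \,:\, \ell([\sigma_\alpha\sigma_\beta]) = n - r\bigr\}.
\end{equation*}
Now fix $\sigma_\beta \in \mathrm{Cl}_\beta$. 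The involution $\sigma_\alpha \mapsto \sigma_\alpha^{-1}$ on $\mathrm{Cl}_\alpha$ provides a bijection between $\{\sigma_\alpha : \ell([\sigma_\alpha\sigma_\beta])=n-r\}$ and $\{\sigma_\alpha : \ell([\sigma_\alpha^{-1}\sigma_\beta])=n-r\} = \Pi_g(\alpha\,|\,\sigma_\beta)$, using the identity $n - r = n + 2 - \ell(\alpha) - \ell(\beta) - 2g$. Moreover, simultaneous conjugation $\sigma_\alpha \mapsto g\sigma_\alpha g^{-1}$, $\sigma_\beta \mapsto g\sigma_\beta g^{-1}$ preserves the cycle-number condition, so $|\Pi_g(\alpha\,|\,\sigma_\beta)|$ depends only on $[\sigma_\beta] = \beta$. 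Summing over $\sigma_\beta \in \mathrm{Cl}_\beta$ therefore yields
\begin{equation*}
H_g^{\bullet\,,\,<}(\alpha,\beta) \;=\; |\mathrm{Cl}_\beta|\cdot|\Pi_g(\alpha\,|\,\sigma_\beta)|,
\end{equation*}
which establishes both the divisibility by $|\mathrm{Cl}_\beta|$ and the explicit formula. Divisibility by $|\mathrm{Cl}_\alpha|$ then follows by repeating the argument with the roles of $\alpha$ and $\beta$ interchanged, or equivalently by invoking the symmetry $H_g^{\bullet\,,\,<}(\alpha,\beta) = H_g^{\bullet\,,\,<}(\beta,\alpha)$ noted earlier in the text.

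The main hurdle is the uniqueness half of the Key Lemma; the forward direction and the cycle-counting part are straightforward inductions, whereas uniqueness requires the recursive identification of $\tau_r$ as the transposition forced by the pair $(\pi^{-1}(n), n)$. Once the Key Lemma is established, everything reduces to a bijective translation between the Hurwitz counting problem and the permutation-genus counting problem.
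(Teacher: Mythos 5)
Your proof is correct and follows essentially the same route as the paper: both rest on the existence and uniqueness of the primitive (strictly monotone) factorization of length $n-\ell([\pi])$, followed by the same reduction to counting pairs $(\sigma_\alpha,\sigma_\beta)$ with a prescribed number of cycles of the product and the same conjugation argument to factor out $|\mathrm{Cl}_\beta|$. The only difference is that you sketch a proof of the factorization lemma by induction, whereas the paper simply cites it as standard (Guay-Paquet, Lemma 1.5.1).
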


Before proving this proposition, let us explain how it encompasses the Lemma \ref{Lem:GenPermHur}. To do so, we specialise the above result to $\beta=[n]$, choosing the representative of $\Cl_{[n]}$ to be the $n$-cycle $\zeta_n=(1\dots n)$. Using $|\Cl_{[n]}|=(n-1)!$ and $\ell([n])=1$, we then have
\begin{equation*}
    \frac{1}{(n-1)!} H_g^{\bullet\hspace{1pt},\hspace{1pt}<}(\alpha,[n]) = |\Pi_g(\alpha \hspace{1pt}|\hspace{1pt} \zeta_n)|\,,  \;\quad
        \Pi_g(\alpha \hspace{1pt}|\hspace{1pt} \zeta_n) = \bigl\lbrace \sigma_\alpha \in \Cl_\alpha\,\bigl|\,  n+1 - \ell([\sigma_\alpha]) - \ell([\sigma_\alpha^{-1}\zeta_n]) = 2g  \bigr\rbrace\,.
\end{equation*}
From the definition \eqref{eq:GenPerm} of the genus of a permutation, we recognise in $\Pi_g(\alpha \hspace{1pt}|\hspace{1pt} \zeta_n)$ the set of permutations of cycle type $\alpha$ and genus $g$, whose size is $P_g(\alpha)$. This then proves Lemma \ref{Lem:GenPermHur}.

\begin{proof}
We will need some notations and preliminary results. For $r\in\lbrace 1,\dots,n-1\rbrace$, we let $\Sigma_{n,r}$ be the set of $r$-uples $(\tau_1,\dots,\tau_r)$ of transpositions in $S_n$ such that $\tau_i=(a_i\,b_i)$ with $a_i< b_i$ and $b_1 < \cdots < b_r$. If a permutation $\sigma\in S_n$ can be written as a product $\sigma=\tau_1\cdots\tau_r$ with $(\tau_1,\dots,\tau_r)\in\Sigma_{n,r}$, then we say that $\tau_1\cdots\tau_r$ is a primitive factorisation of $\sigma$ of length $r$. The following result is standard (see for instance~\cite[Lemma 1.5.1]{G2012}) and will play a crucial role in our proof.

\begin{lemma}\label{Lem:PrimFac}
    Every permutation $\sigma\in S_n$ admits a unique primitive factorisation. Moreover, the length of this factorisation is equal to $n-\ell([\sigma])$. In other words, the map
    \begin{equation*}
        \begin{array}{ccc}
            \Sigma_{n,r} & \longrightarrow & \bigl\lbrace \sigma\in S_n\;| \; \ell([\sigma]) = n-r \bigr\rbrace  \\
            (\tau_1,\dots,\tau_r) & \longmapsto & \tau_1\cdots\tau_r
        \end{array}
    \end{equation*}
    is a bijection for every $r\in\lbrace 1,\dots,n-1\rbrace$.
\end{lemma}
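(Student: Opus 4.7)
The proof will proceed by induction on $n$, with the base case $n = 1$ being vacuous since the range of $r$ is empty. The key structural observation is that the strict inequality $b_1 < \cdots < b_r$ forces at most one index $i$ with $b_i = n$, and if such an index exists it must be $i = r$ by monotonicity. This leads to a clean dichotomy depending on whether $\sigma$ fixes $n$ or not, which I will exploit to reduce the problem from $S_n$ to $S_{n-1}$.

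Concretely, I first observe that $\sigma(n) = n$ if and only if none of the $\tau_i$'s involves $n$: if all $b_i < n$, every $\tau_i$ lies in the stabiliser of $n$ (naturally identified with $S_{n-1}$), so the same is true of $\sigma$; conversely, if $\tau_r = (a_r, n)$ with $a_r < n$, then since $\tau_1,\dots,\tau_{r-1}$ all fix $n$, evaluating $\sigma = (\tau_1\cdots\tau_{r-1})\cdot (a_r, n)$ at $a_r$ yields $\sigma(a_r) = (\tau_1\cdots\tau_{r-1})(n) = n$, so $a_r$ is forced to equal $\sigma^{-1}(n)$ and $\sigma(n) \neq n$. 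Consequently, primitive factorisations of $\sigma$ in $\Sigma_{n,r}$ are in bijection with primitive factorisations in $S_{n-1}$: of $\sigma|_{\{1,\dots,n-1\}}$ of length $r$ when $\sigma(n) = n$, and of $\sigma' := \sigma \cdot (\sigma^{-1}(n), n)$ (which now fixes $n$, restricted to $S_{n-1}$) of length $r - 1$ with the forced transposition $(\sigma^{-1}(n), n)$ appended when $\sigma(n) \neq n$. The inductive hypothesis then yields unique existence in both cases.

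The length formula $r = n - \ell([\sigma])$ follows from a short cycle-count bookkeeping. When $\sigma(n) = n$, passing from $S_n$ to $S_{n-1}$ drops the singleton cycle $\{n\}$, so $\ell_{S_{n-1}}(\sigma|_{\{1,\dots,n-1\}}) = \ell([\sigma]) - 1$; inductively, the factorisation length is $(n-1) - (\ell([\sigma]) - 1) = n - \ell([\sigma])$. When $\sigma(n) \neq n$, the modification $\sigma \mapsto \sigma'$ extracts $n$ from its cycle (of length $k \geq 2$), splitting it into a fixed point $\{n\}$ and a cycle of length $k - 1$, so $\ell_{S_{n-1}}(\sigma'|_{\{1,\dots,n-1\}}) = \ell([\sigma])$; inductively, after appending one transposition, the length is $(n-1) - \ell([\sigma]) + 1 = n - \ell([\sigma])$. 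No substantive obstacle is anticipated: the argument is a clean combinatorial induction, and the only delicate point is the accurate cycle-count bookkeeping when moving between $S_n$ and $S_{n-1}$.
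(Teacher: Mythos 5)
Your proof is correct. Note that the paper does not actually prove this lemma: it cites it as standard (Lemma 1.5.1 of Guay-Paquet's thesis), so there is no in-paper argument to compare against. Your induction on $n$ --- splitting on whether $\sigma$ fixes $n$, and in the non-fixing case forcing the last factor to be $\tau_r=(\sigma^{-1}(n),n)$ because it is the only factor allowed to move $n$ --- is essentially the standard proof of this fact, and both the dichotomy and the cycle-count bookkeeping ($\ell$ drops by one when deleting the fixed point, and splitting the cycle through $n$ leaves $\ell_{S_{n-1}}(\sigma'|)=\ell([\sigma])$) check out. One small bookkeeping remark: although the lemma restricts to $r\in\{1,\dots,n-1\}$, your inductive step for $\sigma(n)\neq n$ invokes the hypothesis for $\sigma'$ at length $r-1$, which can be $0$ (e.g.\ when $\sigma$ is itself a transposition $(a\,n)$), so the statement you carry through the induction should explicitly include the $r=0$ case, namely that the identity admits only the empty primitive factorisation. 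This is immediate from an observation already implicit in your argument: in any nonempty primitive tuple, all factors other than $\tau_r$ permute only elements strictly below $b_r$, so the product moves $b_r$ and cannot be the identity. With that convention fixed, the argument is complete.
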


Coming back to the proof of Proposition \ref{Prop:StrictMon}, we now fix the integers $n\geq 1$, $g\geq 0$ and two partitions $\alpha,\beta\vdash n$. We then let $r=2g-2+\ell(\alpha)+\ell(\beta)$. By a slight reformulation of its definition (see Theorem \ref{thm:Hur} and the Remark \ref{Rmk:Hur} below), the strictly monotone Hurwitz number $H_g^{\bullet\hspace{1pt},\hspace{1pt}<}(\alpha,\beta)$ can be expressed as the size of the set
\begin{equation*}
    \bigl\lbrace (\sigma_\alpha,\sigma_\beta)\in\Cl_\alpha\times\Cl_\beta,\,(\tau_1,\dots,\tau_r) \in \Sigma_{n,r}\;\bigl| \; \sigma_\alpha^{-1} \sigma_\beta = \tau_1 \cdots \tau_r \bigr\rbrace\,.
\end{equation*}
The condition imposed on these permutations is equivalent to $\sigma_\alpha^{-1} \sigma_\beta$ admitting a primitive factorisation of length $r$. By Lemma \ref{Lem:PrimFac}, the above set is then bijective to
\begin{equation*}
    \bigl\lbrace (\sigma_\alpha,\sigma_\beta)\in\Cl_\alpha\times\Cl_\beta\;\bigl| \; \ell([\sigma_\alpha^{-1} \sigma_\beta]) = n-r \bigr\rbrace\,.
\end{equation*}
Recalling that $r=2g-2+\ell(\alpha)+\ell(\beta)$, this is also equal to
\begin{equation}
    \bigl\lbrace (\sigma_\alpha,\sigma_\beta)\in\Cl_\alpha\times\Cl_\beta\;\bigl| \; n+2-\ell([\sigma_\alpha]) - \ell([\sigma_\beta]) - \ell([\sigma_\alpha^{-1} \sigma_\beta]) = 2g \bigr\rbrace\,.
\end{equation}
We can further decompose this set with respect to the choice of element $\sigma_\beta$, yielding a bijection with the disjoint union
\begin{equation}\label{eq:UnionPi}
    \bigsqcup_{\sigma_\beta\,\in\,\Cl_\beta} \Pi_g(\alpha\hspace{1pt}|\hspace{1pt}\sigma_\beta)\,,
\end{equation}
where $\Pi_g(\alpha\hspace{1pt}|\hspace{1pt}\sigma_\beta)$ was defined in equation \eqref{eq:Pi}. Let us fix one representative $\sigma_\beta$ in the conjugacy class $\Cl_\beta$. Any other representative is of the form $\rho\sigma_\beta\rho^{-1}$ for some $\rho\in S_n$. It is then clear that $\sigma_\alpha \mapsto \rho\sigma_\alpha\rho^{-1}$ is an isomorphism between $\Pi_g(\alpha\hspace{1pt}|\hspace{1pt}\sigma_\beta)$ and $\Pi_g(\alpha\hspace{1pt}|\hspace{1pt}\rho\sigma_\beta\rho^{-1})$. The different sets in the union \eqref{eq:UnionPi} are thus all bijective to a fixed one $\Pi_g(\alpha\hspace{1pt}|\hspace{1pt}\sigma_\beta)$ and in particular are all of the same size. The cardinality of this union is then equal to
\begin{equation}
    |\Cl_\beta| \times |\Pi_g(\alpha\hspace{1pt}|\hspace{1pt}\sigma_\beta)|\,.
\end{equation}
Recalling the sequence of bijections that led us to the set \eqref{eq:UnionPi}, we note that this cardinality is none other than the Hurwitz number $H_g^{\bullet\hspace{1pt},\hspace{1pt}<}(\alpha,\beta)$, thus proving Proposition \ref{Prop:StrictMon}.
\end{proof}



\begin{thebibliography}{99}

\bibitem{Krw} G. Kreweras, Sur les partitions non crois\'ees d'un cycle, \textit{Discr. Math. {\bf 1} (1972) 333--350}.

\bibitem{BIPZ} \'E. Br\'ezin, C. Itzykson, G. Parisi and J.-B. Zuber, Planar diagrams, \textit{Comm. Math. Phys. {\bf 59} (1978) 35--51}.

\bibitem{Voi86} D. Voiculescu, Addition of certain non-commuting random variables,  \textit{J. Funct. Anal. {\bf 66} (1986)  323--346}. 

\bibitem{Spe93} R. Speicher, Multiplicative functions on the lattice of non-crossing partitions and free convolution, \textit{Math. Ann. {\bf 298} (1994) 611--628}.

\bibitem{Voi91} D. Voiculescu, Limit laws for random matrices and free products,  \textit{Inv. Math. {\bf 104} (1991) 201--220}.

\bibitem{CC06} M. Capitaine and M. Casalis, Cumulants for random matrices as convolutions on the symmetric group. \textit{Probab. Theory Relat. Fields \textbf{136} (2006) 19–36}.

 \bibitem{CMSS06}  B. Collins, J.A. Mingo,  P. \'Sniady and R. Speicher, Second order freeness and fluctuations of random matrices. III . Higher order freeness and 
 free cumulants, \textit{Doc. Math. {\bf 12} (2007) 1--70},  
 [\href{https://arxiv.org/abs/arXiv:math/0606431}{{\ttfamily arXiv:math/0606431}}].
 
 \bibitem{Maillard-etal} A. Maillard, L. Foini, A. L. Castellanos, F. Krzakala, M M\'ezard and
L. Zdeborov\'a, High-temperature expansions
and message passing algorithms, \textit{J. Stat. Mech. (2019) 113301}, [\href{https://arxiv.org/abs/1906.08479}{{\ttfamily arXiv:1906.08479}}].

\bibitem{BH24}  D. Bernard and L. Hruza,  Structured random matrices and cyclic cumulants: A free probabbility approach,  \textit{Random Matrices: Theory and Applications {\bf 2450014} (2024)}, [\href{https://arxiv.org/abs/2309.14315}{{\ttfamily arXiv:2309.14315}}].

\bibitem{CC08} M. Capitaine and M. Casalis, Geometric interpretation of the cumulants for random matrices previously defined as convolutions on the symmetric group. \textit{Séminaire de probabilités XLI, Lecture Notes in Math, Springer (2008) 93-119}.

\bibitem{Nica} A. Nica, How non-crossing partitions occur in free probability theory, \textit{talk at AIM (2005)}.

\bibitem{Marcus} A. W. Marcus, D. A Spielman and N. Srivastava,  Finite free convolutions
of polynomials, \textit{Probability Theory and Related Fields {\bf 182(3)} (2022) 807--848}, [\href{https://arxiv.org/abs/1504.00350}{{\ttfamily arXiv:1504.00350}}].\\[3pt]
A. W. Marcus, Polynomial convolutions and (finite) free probability,      [\href{https://arxiv.org/abs/2108.07054}{{\ttfamily arXiv:2108.07054}}].

\bibitem{AP} O. Arizmendi and D. Perales,  Cumulants for finite free convolution. Journal of Combinatorial Theory, \textit{Series A {\bf 155} (2018) 244--266}, 
[\href{https://arxiv.org/abs/1611.06598}{{\ttfamily arXiv:1611.06598}}].

\bibitem{Mergny-Potters} P. Mergny and M. Potters, Rank one HCIZ integral at high temperature: interpolating between classical and
free convolutions, \textit{SciPost Phys. {\bf 12} (2022) 022},   [\href{https://arxiv.org/abs/2101.01810}{{\ttfamily arXiv:2101.01810}}].

\bibitem{Mergny-th} P. Mergny,  Spherical integrals and their applications to random matrix theory, \textit{PhD thesis (2022)}, \url{https://theses.hal.science/tel-03866270}.

\bibitem{KMW} D. Kunisky, C. Moore and A.S. Wein, Tensor cumulants for statistical inference on invariant distributions,  [\href{https://arxiv.org/abs/2404.18735}{{\ttfamily arXiv:2404.18735}}].

\bibitem{CGL} B. Collins, R. Gurau and L. Lionni, {Free cumulants and freeness for unitarily  invariant random tensors}, [\href{https://arxiv.org/abs/2410.00908}{{\ttfamily arXiv:2410.00908}}].

\bibitem{Collins03} B. Collins, Moments and cumulants of polynomial random variables on unitary groups, the Itzykson--Zuber integral and free probability,
\textit{Int. Math. Res. Not. {\bf 17} (2003)   953--982},  [\href{https://arxiv.org/abs/arxiv:math-ph/0205010}{{\ttfamily arXiv:math-ph/0205010}}].

\bibitem{W} D. Weingarten, Asymptotic behavior of group integrals in the limit of infinite rank, \textit{J. Math. Phys. {\bf 19} (1978) 999-1001}.

\bibitem{Sam} S. Samuel, U($N$) integrals, 1/$N$ and De Wit--'t Hooft anomalies, \textit{J. Math. Phys. {\bf 21} (1980) 2695--2703}.

\bibitem{IZ80} C. Itzykson and J.-B. Zuber,  The planar approximation II, \textit{J. Math. Phys. {\bf 21} (1980) 411--421}. This paper unfortunately contains a certain number of typographical errors, see 
\url{https://www.lpthe.jussieu.fr/~zuber/erratum_IZ80.pdf}
 for a list of errata. 

 \bibitem{ZJ02} P. Zinn-Justin, HCIZ integral and 2D Toda lattice hierarchy, \textit{Nucl. Phys. {\bf B634} (2002) 417--432}, [\href{https://arxiv.org/abs/arxiv:math-ph/0202045}{{\ttfamily arXiv:math-ph/0202045}}].

\bibitem{ZJZ03} P. Zinn-Justin and J.-B. Zuber, On some integrals over the $\U(N)$ unitary group and their large $N$ limit, 
\textit{J. Phys. A Math. Gen. {\bf 36}  (2003) 3173--3193},
 [\href{https://arxiv.org/abs/math-ph/0209019}{{\ttfamily arXiv:math-ph/0209019}}].

\bibitem{ZJ99} P. Zinn-Justin, Adding and multiplying random matrices: a generalization of Voiculescu's formulas, \textit{Phys. Rev E {\bf 59} (1999) 4884-4888}, [\href{https://arxiv.org/abs/math-ph/9810010}{{\ttfamily arXiv:math-ph/9810010}}].

\bibitem{GM}  A. Guionnet and M. Ma\"ida, 
A Fourier view on the $R$-transform and related asymptotics of spherical integrals, 
\textit{J. Funct. Anal  {\bf 222} (2005)  435--490}, 
 [\href{https://arxiv.org/abs/math/0406121}{{\ttfamily arXiv:math/0406121}}].

\bibitem{CollinsSn07} 
B. Collins and  P. \'Sniady, New scaling of Itzykson--Zuber integrals,    \textit{Ann. Inst. Henri Poincar\'e (B), Probability and Statistics {\bf 43} (2007) 139--146}, 
[\href{https://arxiv.org/abs/arxiv:math-ph/0505664}{{\ttfamily arXiv:math-ph/0505664}}].
 
 \bibitem{Tan} T. Tanaka,
 Asymptotics of Harish-Chandra--Itzykson--Zuber
integrals and free probability theory, \textit{J. Phys. Conf. series {\bf 95}  (2008) 012002}.

\bibitem{PB} M. Potters and J.-P. Bouchaud, A First Course in Random Matrix Theory, for Physicists, Engineers and Data Scientists.

\bibitem{MPR} E. Marinari, G. Parisi and F. Ritort, Replica field theory for deterministic models (II) A non-random spin glass with glassy behavior, \textit{J. Phys. A {\bf 27} (1994) 7647--7668},[\href{https://arxiv.org/abs/cond-mat/9406074}{{\ttfamily arXiv:cond-mat/9406074}}].

\bibitem{GGPN} I.P. Goulden, M. Guay-Paquet and J. Novak, Monotone Hurwitz numbers and the HCIZ Integral, \textit{Annales mathématiques Blaise Pascal \textbf{21} (2014) no. 1, pp. 71-89}, I:   [\href{https://arxiv.org/abs/1107.1015}{{\ttfamily arXiv:1107.1015}}];
II: [\href{https://arxiv.org/abs/1107.1001}{{\ttfamily arXiv:1107.1001}}].  \\[3pt]
Monotone Hurwitz numbers in genus zero, \textit{Can. J. Math. {\bf 65} (2013) 1020--1042}, [\href{https://arxiv.org/abs/1204.2618}{{\ttfamily arXiv:1204.2618}}].\\[3pt]
On the convergence of monotone Hurwitz generating functions, \textit{Ann. Comb. {\bf 21} (2017) 73–81}, [\href{https://arxiv.org/abs/1602.00641}{{\ttfamily arXiv:1602.00641}}].

\bibitem{N15} J. Novak, Lozenge tilings and Hurwitz numbers, \textit{Journal of Statistical Physics {\bf 161}, no. 2 (2015) 509-517}, 
 [\href{https://arxiv.org/abs/1407.7578}{{\ttfamily arXiv:1407.7578}}].

\bibitem{BCGLS21}  G. {Borot}, S. {Charbonnier}, E. {Garcia-Failde}, F. {Leid} and S. {Shadrin}, Functional relations for higher-order free cumulants, 
 [\href{https://arxiv.org/abs/2112.12184}{{\ttfamily arXiv:2112.12184}}].

\bibitem{Jacques} A. Jacques, Sur le genre d'une paire de substitutions, \textit{C. R. Acad. Sc. Paris {\bf 267} (1968) 625--627}.

\bibitem{Cori75} R. Cori, Un code pour les Graphes Planaires et ses applications, \textit{Ast\'erisque {\bf 27} (1975)}.

\bibitem{CoriH} R. Cori and G. Hetyei, 
Counting genus one partitions and permutations, \textit{S\'eminaire lotharingien de combinatoire {\bf 70} (2014) Article B70e},  [\href{https://arxiv.org/abs/1306.4628}{{\ttfamily arXiv:1306.4628}}].

\bibitem{Hock} A. Hock, Genus Permutations and Genus Partitions, \textit{Enumerative Combinatorics and Applications {\bf 5:1} (2025) Article S2R5},  [\href{https://arxiv.org/abs/2306.16237}{{\ttfamily arXiv:2306.16237}}].

\bibitem{Mac} I. Macdonald, Symmetric functions and Hall polynomials, Second edition. Oxford Mathematical Monographs (1995). 
 
\bibitem{CollinsSn06}  B. Collins and P. Sniady. Integration with respect to the Haar Measure on Unitary, Orthogonal and Symplectic Group, \textit{Comm. Math. Phys. {\bf 264} (2006)  773--795}, [\href{https://arxiv.org/abs/math-ph/0402073}{{\ttfamily arXiv:math-ph/0402073}}]. 

\bibitem{MatNov} S. Matsumoto and J. Novak, Jucys--Murphy elements and unitary matrix integrals, \textit{Int. Math. Res. Not. (2013) no. 2, pp. 362-397}, [\href{https://arxiv.org/abs/0905.1992}{{\ttfamily arXiv:0905.1992}}].

\bibitem{ZJ09} P. Zinn-Justin, Jucys–Murphy Elements and Weingarten Matrices, \textit{Lett. Math. Phys. {\bf 91} (2010) 119--127}, [\href{https://arxiv.org/abs/0907.2719}{{\ttfamily arXiv:0907.2719}}].

\bibitem{BoBo} P. Bousseyroux and J.-P. Bouchaud, Free convolution and generalized Dyson Brownian motion, [\href{https://arxiv.org/abs/2412.03696v2}{{\ttfamily arXiv:2412.03696}}].

\bibitem{Ho62} A. Horn, Eigenvalues of sums of Hermitian matrices, \textit{Pacific J. Math. {\bf 12} (1962), 225--241}.

\bibitem{Kly} A. A. Klyachko, Stable  bundles, representation theory and Hermitian operators, \textit{Selecta Math. (N.S.) {\bf 4} (1998)  419–445}.

\bibitem{KT99} A. Knutson and T. Tao, The honeycomb model of $GL_n(\C)$ tensor products I: proof of the saturation conjecture, \textit{J. Amer. Math. Soc. {\bf 12} (1999) 1055--1090}, [\href{https://arxiv.org/abs/math/9807160}{{\ttfamily arXiv:math/9807160}}].

\bibitem{Fu} W. Fulton, Eigenvalues, invariant factors, highest weights, and Schubert calculus, \textit{Bull. Amer. Math. Soc. {\bf 37} (2000) 209--249}, [\href{https://arxiv.org/abs/math/9908012}{{\ttfamily arXiv:math/9908012}}].

\bibitem{GZ} A. Guionnet and O. Zeitouni, Large deviations asymptotics for spherical integrals, \textit{J. Funct. Anal. {\bf 188} (2002) 461--515},  
\textit{J. Funct. Anal. {\bf 216} (2004) 230--41}.

\bibitem{G2012} M. Guay-Paquet, Algebraic Methods and Monotone Hurwitz Numbers, \textit{PhD thesis (2012)},  \url{https://hdl.handle.net/10012/7005}.

\end{thebibliography}
\end{document}